\documentclass[a4paper,twoside,titlepage,12pt]{scrbook} 

\usepackage{geometry}
\geometry{a4paper, top=30mm, left=40mm, right=20mm, bottom=30mm,
         headsep=10mm, footskip=12mm}

\usepackage{scrpage2}

\usepackage{graphicx}
\usepackage[x11names, rgb]{xcolor}

\usepackage{tikz}
\usetikzlibrary{decorations,arrows,petri,topaths,shapes,positioning,fit}

\usepackage{amsmath}
\usepackage{amsfonts}
\usepackage{amssymb}
\usepackage{amsthm}

\usepackage{synttree}
\usepackage{algpseudocode}
\usepackage{algorithm}
\usepackage{makeidx}
\usepackage{listings}

\usepackage[english]{babel}

\DeclareMathOperator{\poly}{poly}
\DeclareMathOperator{\complP}{P}
\DeclareMathOperator{\NP}{NP}
\DeclareMathOperator{\coNP}{coNP}
\DeclareMathOperator{\XP}{XP}
\DeclareMathOperator{\FPT}{FPT}
\DeclareMathOperator{\PH}{PH}

\DeclareMathOperator{\Component}{Component}
\DeclareMathOperator{\Remain}{Remain}

\DeclareMathOperator{\tw}{tw}
\DeclareMathOperator{\VC}{VC}
\DeclareMathOperator{\neiin}{in}
\DeclareMathOperator{\neiout}{out}

\DeclareMathOperator{\RemoveLowDegree}{RLD}
\DeclareMathOperator{\DisLined}{DisLined}

\DeclareMathOperator{\adj}{adj}

\DeclareMathOperator{\iNotKNeighbors}{iNotKNeighbors}
\DeclareMathOperator{\iNotKNeighborsEx}{iNotKNeighborsExactly}

\DeclareMathOperator{\cost}{cost}

\usepackage{stmaryrd}

\usepackage{url}

\usepackage[utf8]{inputenc} 







\usepackage{eso-pic}
\usepackage{scalefnt}
\usepackage{fix-cm}

\newtheorem{observation}{Observation}
\newtheorem{definition}{Definition}
\newtheorem{lemma}{Lemma}
\newtheorem{theorem}{Theorem}
\newtheorem{proposition}{Proposition}
\newtheorem{corollary}{Corollary}

\newtheoremstyle{example}{\topsep}{\topsep}%
{} 
{} 
{\bfseries} 
{} 
{\newline} 
{} 
\theoremstyle{example}

\newtheorem{note}{Note}

\newtheoremstyle{reductionrule}{\topsep}{\topsep}%
     {}
     {}
     {\bfseries}
     {}
     {\newline}
     {\thmname{#1} ``\thmnote{#3}''}
\theoremstyle{reductionrule}
\newtheorem{reductionrule}{Reduction Rule}

\setlength{\parindent}{0em}

\setcounter{secnumdepth}{3}




\begin{document}

\pagestyle{empty}

\thispagestyle{empty}
\begin{center}
\textcolor{red}{{\Large \bfseries
 Because of the outdated latex system of arxiv, this is a less-featured version of my diploma theses.\\
 ~\\Please see \url{http://robert.bredereck.info} for the full-featured version.
  \\}}
\newpage
\thispagestyle{empty}

{\Large \bfseries Friedrich-Schiller-Universit\"at Jena\\}
\vspace{0,5cm}
\includegraphics[scale=0.2]{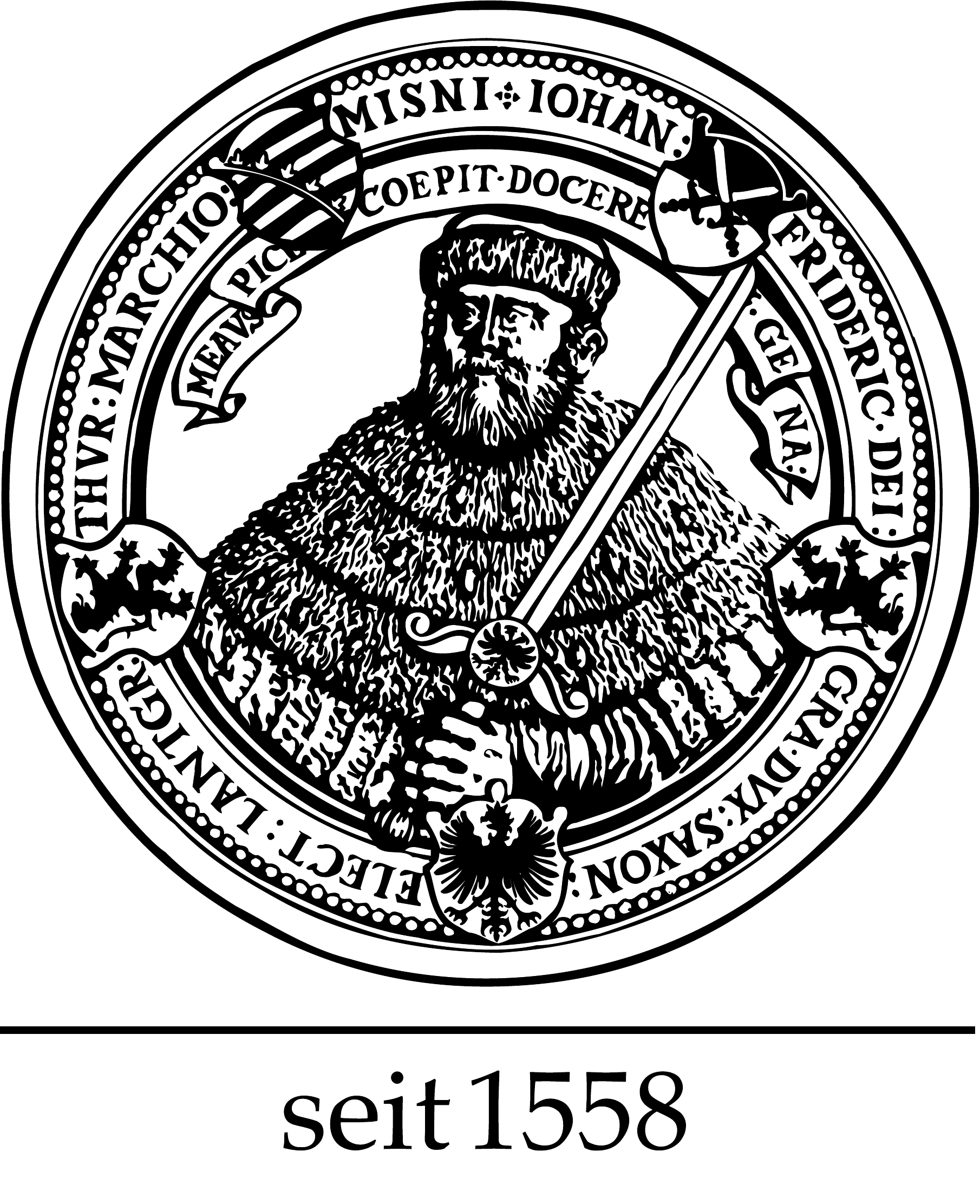}\\
\vspace{1,5cm}
\sffamily
{\LARGE \bfseries \sffamily  D I P L O M A R B E I T}\\
\vspace{1cm}
{\Huge \bfseries Graph and Election Problems \\ \vspace*{0.5ex} Parameterized by \\ \vspace*{1.0ex} Feedback Set Numbers}\\
\vspace{1,0cm}
zur Erlangung des akademischen Grades\\
Diplom-Informatiker\\
\vspace{1,0cm}
ausgef\"uhrt am Lehrstuhl f\"ur Theoretische Informatik I / Komplexit\"atstheorie\\
an der Falkult\"at f\"ur Mathematik und Informatik\\
der Friedrich-Schiller-Universit\"at Jena\\
\vspace{1cm}
unter Anleitung von Dipl.-Bioinf.\ Nadja Betzler, Dipl.-Bioinf.\ Johannes Uhlmann\\
und Prof.\ Dr.\ Rolf Niedermeier\\
\vspace{,5cm}
durch\\
\vspace{,5cm}
Robert Bredereck\\
geb. am 31.01.1985 in Wolgast\\
\vspace{,5cm}
Jena, 25. Mai 2010 \\
\end{center}
\newpage



{\Large \bf Abstract}
\vspace{0.5cm}
\\
\small
This work investigates the parameterized complexity of three related graph modification problems.
Given a directed graph, a distinguished vertex, and a positive integer $k$, \textsc{Minimum Indegree Deletion} asks for a vertex subset of size at most $k$ whose removal makes the distinguished vertex the only vertex with minimum indegree.
\textsc{Minimum Degree Deletion} is analogously defined, but deals with undirected graphs.
\textsc{Bounded Degree Deletion} is also defined on undirected graphs, but has a positive integer $d$ instead of a distinguished vertex as part of the input.
It asks for a vertex subset of size at most $k$ whose removal results in a graph in which every vertex has degree at most $d$.
The first two problems have applications in computational social choice whereas the third problem is used in computational biology.
We investigate the parameterized complexity with respect to the parameters ``treewidth'', ``size of a feedback vertex set'' and ``size of a feedback edge set'' respectively ``size of a feedback arc set''.
Each of these parameters measures the ``degree of acyclicity'' in different ways.
For \textsc{Minimum Indegree Deletion} we show that it is W[2]-hard with respect to both parameters that are defined on acyclic graphs.
We describe a branch-and-bound algorithm whose running time is $O(s \cdot (k+1)^{s} \cdot n^2)$, where $n$ is the number of vertices, $k$ is the ``number of vertices to delete'', and $s$ is the ``size of a feedback set''.
For \textsc{Minimum Degree Deletion} we show W[1]-hardness with respect to the parameter ``number of vertices to delete''.
With respect to each of the parameters that measures the ``degree of acyclicity'' we show fixed-parameter tractability.
We describe a simple search tree algorithm with running time $O(2^{s} \cdot n^3)$ where $n$  is the number of vertices and $s$ is the ``size of a feedback edge set'' and
two concrete fixed-parameter algorithms with respect to the parameter ``size of a feedback vertex set that does not contain the distinguished vertex''.
For \textsc{Bounded Degree Deletion} we present a search-tree algorithm with running time $O(3^{s} \cdot n^2)$ where $n$ is the number of vertices and $s$ is the ``size of a feedback edge set''.

\newpage
{\Large \bf Zusammenfassung}
\vspace{0.5cm}
\\
\small
Diese Arbeit untersucht die parametrisierte Komplexität von drei verwandten Graphmodifikationsproblemen.
Jedes dieser Probleme sucht eine Knotenteilmenge be\-schränk\-ter Größe, deren Löschung zu einem Graph mit einer problemspezifischen Eigenschaft führt.
Das erste betrachtete Problem \textsc{Minimum Indegree Deletion} hat als Eingabe einen gerichteten Graphen, einen ausgewiesenen Knoten und eine natürliche Zahl $k$.  
Die Frage ist, ob der ausgewiesene Knoten durch Löschung von maximal $k$ Knoten der einzige Knoten mit minimalem Eingangsgrad werden kann.
Das Problem \textsc{Minimum Indegree Deletion} wurde im Kontext der Wahlforschung eingeführt.
Genauer gesagt analysiert man für Wahlsysteme ein \glqq Control\grqq -Szenario, in welchem man fragt, ob durch Löschung einer durch die Größe beschränkten Kandidatenmenge das Ergebnis so zu beeinflussen ist, dass ein ausgewählter Kandidat gewinnt.
Die Ergebnisse bezüglich \textsc{Minimum Indegree Deletion} lassen sich eins zu eins auf dieses Szenario übertragen.
Das zweite betrachtete Problem \textsc{Minimum Degree Deletion} ist Analog zu \textsc{Minimum Indegree Deletion} auf ungerichteten Graphen definiert.
Hier fordert man, dass nach Löschung der Knotenmenge ein ausgewählter Knoten als einziger minimalen Grad hat.
Auch dieses Problem lässt sich durch ein natürliches, auf sozialen Netzwerken basierendes Wahlsystem motivieren.
Das dritte betrachtete Problem \textsc{Bounded Degree Deletion} ist ebenfalls auf ungerichteten Graphen definiert.
Es hat jedoch im Gegensatz zu den beiden vorherigen Problemen keinen ausgewiesenen Knoten als Teil der Eingabe sondern fordert, dass nach Löschung von maximal $k$ Knoten alle Knoten maximal Grad $d$ besitzen für ein bestimmtes zur Eingabe gehöriges $d$.
Das Problem \textsc{Bounded Degree Deletion} findet Anwendung bei der Analyse von biologischen Netzwerken.

Neben Ähnlichkeiten in der Definition haben die drei betrachteten Probleme gemeinsam, dass sie auf kreisfreien Graphen in Polynomzeit lösbar sind, während sie jedoch im Allgemeinen NP-schwer sind.
Diese Eigenschaft war Ausgangspunkt für die Untersuchung der parametrisierten Komplexität bezüglich drei verschiedener Parameter, welche jeweils auf unterschiedliche Weise die Distanz des Eingabegraphen zu einem kreisfreien Graphen messen.
Der erste dieser Parameter ist die für die Untersuchung ungerichteter Graphen weit verbreitete \glqq Baumweite\grqq.
Die beiden anderen Parameter messen jeweils die Anzahl der Knoten bzw.\ Kanten deren Löschung zu einem kreisfreien Graphen führt.
Genauer gesagt sind dies die Parameter \glqq Größe einer kreiskritischen Knotenmenge\grqq~und \glqq Größe einer kreiskritischen Kantenmenge\grqq.

Für \textsc{Minimum Indegree Deletion} zeigen wir, dass es bezüglich beider auf ge\-rich\-te\-ten Graphen definierten Parameter W[2]-schwer und damit offenbar nicht festparameterhandhabbar ist.
Betrachtet man hingegen den Parameter \glqq Anzahl zu löschender Knoten\grqq~in Kombination mit einem der beiden anderen, lässt sich Festparameterhandhabbarkeit nachweisen.
Wir beschreiben einen einfachen Algorithmus dessen Laufzeit $O(s \cdot (k+1)^{s} \cdot n^2)$ ist, wobei $n$ die Anzahl der Knoten, $k$ die \glqq Größe der zu löschenden Knotenmenge\grqq~und $s$ die \glqq Größe einer kreiskritischen Menge\grqq~ist.
Für \textsc{Minimum Degree Deletion} wird zusätzlich der Parameter \glqq Anzahl zu löschender Knoten\grqq~betrachtet und für diesen W[1]-Schwere nachgewiesen.
Bezüglich jedem der Parameter \glqq Baumweite\grqq, \glqq Größe einer kreiskritischen Knotenmenge\grqq~sowie \glqq Größe einer kreiskritischen Kantenmenge\grqq~wird Festparameterhandhabbarkeit gezeigt.
Wir zeigen bezüglich \glqq Größe einer kreiskritischen Kantenmenge\grqq~einen Problemkern, dessen Größe linear in der Anzahl der Knoten ist.
Im Gegensatz dazu zeigen wir bezüglich der anderen beiden Parameter, dass es keinen Problemkern polynomieller Größe gibt, es sei denn, die Polynomialzeithierarchie kollabiert.
Wir beschreiben einen Suchbaumalgorithmus mit Laufzeit $O(2^{s} \cdot n^3)$, wobei $n$ die Anzahl der Knoten und $s$ die \glqq Größes einer kreiskritischen Kantenmenge\grqq~ist.
Zwei konkrete Festparameteralgorithmen werden bezüglich des Parameters \glqq Größe einer kreiskritischen Knotenmenge, welche nicht den ausgewiesenen Knoten enthält\grqq~präsentiert.
Wir weisen abschließend für \textsc{Bounded Degree Deletion} Festparameterhandhabbarkeit bezüglich des Parameters \glqq Größe einer kreiskritischen Kantenmenge\grqq~nach.
Ein entsprechender Algorithmus hat die Laufzeit $O(3^{s} \cdot n^2)$, wobei $n$ die Anzahl der Knoten und $s$ die \glqq Größes einer kreiskritischen Kantenmenge\grqq~ist.
\normalsize

\pagestyle{scrheadings}
\clearscrheadfoot
\ohead{\pagemark}
\automark[chapter]{chapter}
\cohead{\headmark}
\setheadsepline{0.5pt}        

\tableofcontents
\newpage


\chapter{Introduction}
\label{Introduction}

This work investigates the parameterized complexity of three related graph modification problems.
Each of these problems asks for a vertex subset of bounded size whose removal results in a graph that satisfies a problem-specific property.
The first considered problem \textsc{Minimum Indegree Deletion} has a directed graph, a distinguished vertex, and a positive integer $k$ as input.
The question is whether the distinguished vertex can be made the only vertex with minimum indegree by removing at most $k$ vertices.
The problem \textsc{Minimum Indegree Deletion} was introduced in the context of computational social choice~\cite{BU09}.
More precisely, one analyzes a so-called ``control'' scenario for a voting rule in which one asks to manipulate the result of the voting rule by removing a size-bounded candidate subset such that a distinguished candidate becomes the only winner.
The results regarding \textsc{Minimum Indegree Deletion} can be transferred to this scenario~\cite{BU09}.
The second considered problem \textsc{Minimum Degree Deletion} is defined analogously to \textsc{Minimum Indegree Deletion}, dealing with undirected graphs.
Here one asks whether it is possible to make a distinguished vertex the only vertex with minimum degree by removing a subset of vertices.
This problem can be motivated by a simple voting rule that is based on a social network.
The third considered problem \textsc{Bounded Degree Deletion}~\cite{Moser10} is also defined on undirected graphs, but in contrast to both previous problems it has no distinguished vertex as part of the input.
It asks for a set of $k$ vertices whose removal results in a graph with maximum degree $d$ for a specific positive integer $d$ that is part of the input.
The problem \textsc{Bounded Degree Deletion} has applications in the analysis of genetic networks~\cite{Moser10}.
In Chapter~\ref{Degree-based vertex deletion problems} we provide formal definitions and more details on applications.

Besides similarities in the definition, all three problems have in common that they are solvable in polynomial time on acyclic graphs, but NP-hard in general.
This property is the starting point for the investigation of the parameterized complexity with respect to three distinct parameters which measure the distance of the input graph to an acyclic graph in three different ways.
The first of these parameters is the ``treewidth'', being well-known in the analysis of undirected graphs.
The two other parameters measure in each case the number of vertices respectively edges whose removal results in an acyclic graph.
More precisely, we have the parameters ``size of a feedback vertex set'' and ``size of a feedback edge set'' respectively ``size of a feedback arc set'' in the directed case.
A formal definition of ``treewidth'' is given in Section~\ref{Tree decomposition and treewidth} whereas the other parameters are considered in detail in Chapter~\ref{Minimum Feedback Sets}.

\begin{figure}
\begin{center}
 \begin{tabular}{ c | c c c }
  parameter & MID & MDD & BDD\\
  \hline
  $t_w$ & ~ & \textbf{FPT} & open \\
  $s_v$ & \textbf{W[2]-hard, in XP} & \textbf{FPT} & open \\
  $s_{a/e}$ & \textbf{W[2]-hard, in XP} & \textbf{FPT} & \textbf{FPT} \\
  $k$ & W[2]-complete & \textbf{W[1]-hard, in XP} & W[2]-complete \\
  $d$ & FPT & FPT & FPT \\
  $(s_v,k)$ & \textbf{FPT} & \textbf{FPT} & open \\ 
 \end{tabular}\\
\end{center}
\small
\begin{center}
 \begin{tabular}{ r | l }
  ~ & parameter description \\
  \hline
  $t_w$ & treewidth of the input graph \\
  $s_v$ & size of a feedback vertex set \\
  $s_{a/e}$ & size of a feedback arc/edge set \\
  $k$ & size of a solution set \\
  $d$ & maximum degree of a vertex \\
 \end{tabular}\\
\end{center}
\caption{Overview of the parameterized complexity of \textsc{Minimum Indegree Deletion} (MID), \textsc{Minimum Degree Deletion} (MDD) and \textsc{Bounded Degree Deletion} (BDD).
         New results are in boldface.
         The remaining results are obtained from \cite{BU09} and \cite{Moser10}.}
\label{Results.}
\end{figure}

In what follows, we briefly describe our results, also see Figure~\ref{Results.}.
For \textsc{Minimum Indegree Deletion} we show that it is W[2]-hard, hence, that it is presumably not fixed-parameter tractable with respect to both parameters that are defined on acyclic graphs (see Section~\ref{Feedback vertex/arc set size as parameter}).
Considering the parameter ``number of vertices to delete'' in combination with one of both parameters, we show fixed-parameter tractability.
We describe a branch-and-bound algorithm whose running time is $O(s \cdot (k+1)^{s} \cdot n^2)$, where $n$ is the number of vertices, $k$ is the ``number of vertices to delete'', and $s$ is the ``size of a feedback set'' (see Section~\ref{Feedback vertex set size and solution size as combined parameter}).
For \textsc{Minimum Degree Deletion} we additionally consider the single parameter ``number of vertices to delete'' and show W[1]-hardness (see Section~\ref{Solution size as parameter for Minimum Degree Deletion}).
With respect to each of the parameters ``treewidth'', ``size of a feedback vertex set'', and ``size of a feedback edge set'' we show fixed-parameter tractability (see Section~\ref{MSO expression for Minimum Degree Deletion}).
We show with respect to ``size of a feedback edge set'' a problem kernel whose size is linear in the number of vertices (see Section~\ref{Size of a Feedback Edge Set as parameter for Minimum Degree Deletion}).
In contrast, we show that with respect to each of the two other parameters there is no problem kernel of polynomial size, assuming that the polynomial-time hierarchy does not collapse (see Section~\ref{No polynomial kernel with respect to s_v}).
We describe a simple search tree algorithm with running time $O(2^{s} \cdot n^3)$ where $n$  is the number of vertices and $s$ is the ``size of a feedback edge set''.
This algorithm complements the kernelization result from Section~\ref{Size of a Feedback Edge Set as parameter for Minimum Degree Deletion}.
We present two concrete fixed-parameter algorithms with respect to the parameter ``size of a feedback vertex set that does not contain the distinguished vertex''.
The first one uses the technique of integer linear programming (see Section~\ref{Integer linear programming.}) whereas the second one is a dynamic programming algorithm (see Section~\ref{Dynamic programming.}).
Besides the intractability results in Section~\ref{Feedback vertex/arc set size as parameter} and Section~\ref{Solution size as parameter for Minimum Degree Deletion}, the two algorithms with respect to the parameter ``size of a feedback vertex set that does not contain the distinguished vertex'' are the most demanding parts of this work.
Finally, we show fixed-parameter tractability for \textsc{Bounded Degree Deletion} with respect to ``size of a feedback edge set''.
A corresponding search-tree algorithm has running time $O(3^{s} \cdot n^2)$ where $n$ is the number of vertices and $s$ is the ``size of a feedback edge set''.

\paragraph{Organization of this work.}
The second chapter gives an overview of the considered problems and its applications.
A third chapter contains some definitions and computational aspects of the considered parameters measuring the distance of the input graph from an acyclic graph.
Each of the Chapters \ref{Minimum Indegree Deletion}-\ref{Bounded Degree Deletion} investigates one of the considered vertex deletion problem.
The last chapter provides a conclusion and outlooks to further investigations in this field.


\section{Preliminaries}
\label{Preliminaries}
We introduce the basic terms and methods which are necessary in this work.

\paragraph{Graph theory.}
\label{Graph theory}
All computational problems that are investigated in this work are graph problems.
An \textit{undirected graph} $G$ is a pair $(V,E)$, where $V$ is a finite set of \textit{vertices} and $E$ is a finite set of \textit{edges} which are unordered pairs of vertices.
A \textit{directed graph} $G$ is also a pair $(V,A)$, where $V$ is a finite set of vertices, but $A$ is a finite set of \textit{arcs} which are ordered pairs of vertices.
All graphs considered in this work are \textit{simple}, that is, there are no multi-arc/multi-edges, and they do not contain \textit{self-loops}, that is, no edge/arc from a vertex to itself.
Let $\{v_1,v_2\}$ be an edge of an undirected graph.
Then, $v_1$ is a \textit{neighbor} of $v_2$ and vice versa.
We denote $\deg(x)$ as the \textit{degree} of the vertex $x$, that is, the number of its neighbors.
Furthermore, the (open) \textit{neighborhood} of a vertex $v$ in an undirected graph $G = (V,E)$ is defined as $N(v) := \{ u \mid \{u,v\} \in E \}$.
Let $(v_1,v_2)$ be an arc of a directed graph.
We denote $v_1$ as \textit{inneighbor} of $v_2$ and $v_2$ as \textit{outneighbor} of $v_1$.
The \textit{indegree} of a vertex is the number of its inneighbors and the \textit{outdegree} of a vertex is the number of its outneighbors.
The (open) \textit{inneighborhood} of a vertex $v$ in a directed graph $G = (V,A)$ is defined as $N_{\neiin}(v) := \{ u \mid (u,v) \in A \}$ and the (open) \textit{outneighborhood} of a vertex $v$ in a directed graph $G = (V,A)$ is defined as $N_{\neiout}(v) := \{ u \mid (v,u) \in A \}$.
For a vertex set $S \subseteq V$, we write $G[S]$ to denote the graph induced by $S$ in $G$, that is, $G[S] := (S, \{ e \in E \mid e \subseteq S\})$ for an undirected graph $G=(V,E)$ respectively $G[S] := (S, \{ (x,y) \in A \mid x \in S \wedge y \in S \})$ for a directed graph $G=(V,A)$.
For a subset $S \subseteq V$, we also write $G - S$ instead of $G[V \setminus S]$.
We define $P_i:=(V_{P_i}:=\{v_1,\dots,v_i\},E_{P_i}:=\{\{a,b\} \mid 1 \leq a < b \leq i\})$ as \textit{path of length $i$} between $v_1$ and $v_i$.
Moreover, $C_i:=(V_{P_i},E_{P_i} \cup \{v_i,v_1\})$ is defined as \textit{cycle of length $i$}.
A graph that contains a cycle as subgraph is called \textit{cyclic}.
Otherwise we say the graph is \textit{acyclic}.
In directed graphs the terms \textit{path of length $i$} ($P_i:=(V_{P_i}:=\{v_1,\dots,v_i\},E_{P_i}:=\{(a,b) \mid 1 \leq a < b \leq i\})$), \textit{cycle of length $i$} ($C_i:=(V_{P_i},E_{P_i} \cup (v_i,v_1))$), and \textit{cyclic/acyclic} are defined analogously.
An undirected graph is \textit{connected} if there is a path between each two vertices.
A connected and acyclic graph is called a \textit{tree}.

\paragraph{Tree decomposition and treewidth.}
\label{Tree decomposition and treewidth}
As mentioned in the introduction, many hard graph problems are easy when restricted on acyclic graph.
The main question is: ``Why is an $\NP$-hard problem in P when restricted on a tree?''.
We need a measure of the ``tree-likeness'' of a given graph.
Robertson and Seymour~\cite{RS86} introduced the concept of tree-decompositions and treewidth of undirected graphs.
\begin{definition}
 Let $G=(V,E)$ be an undirected graph.
 A \textbf{tree decomposition} of $G$ is a pair $(\{X_i \mid i \in I\}, T)$ where each $X_i$ is a subset of $V$, called a \textbf{bag}, and $T$ is a tree with the elements of $I$ as nodes. The following three properties must hold:
 \begin{enumerate}
  \item $\bigcup_{i \in I} X_i = V$,
  \item for every edge $\{u,v\} \in E$, there is an $i \in I$ such that $\{u,v\} \subseteq X_i$, and
  \item for all $i,j,k \in I$, if $j$ lies on the path between $i$ and $k$ in $T$, then $X_i \cap X_k \subseteq X_j$.
 \end{enumerate}
 The \textbf{width} of the tree decomposition $(\{X_i \mid i \in I\}, T)$ equals $\max \{|X_i| \mid i \in I\}-1$.
 The \textbf{treewidth} of $G$ is the minimum $k$ such that $G$ has a tree decomposition of width $k$.
\end{definition}
Clearly, a (non-trivial) tree has treewidth 1, a cycle has treewidth $2$, and a complete graph with $n$ vertices has treewidth $n-1$.
The other two parameters which measure the ``degree of acyclicity'' are introduced in Chapter~\ref{Minimum Feedback Sets}.

\section{Parameterized complexity}
\label{Parameterized complexity}
Many interesting problems in computer science are computationally hard problems in worst case.
The most famous class of such hard problems is the class of $\NP$-hard problems.
The relation between $\complP$ (which includes the ``efficient solvable problems'')  and $\NP$ is not completely clear at the moment.
Even if $\complP=\NP$ it is not self-evident that we are able to design \textit{efficient} polynomial-time algorithms for each $\NP$-hard problem.
But we have to solve $\NP$-hard problems in practice.
Thus, according to the state of the art of computational complexity theory, $\NP$-hardness means that we only have algorithms with exponential running times to solve the corresponding problems exactly.
This is a huge barrier for practical applications.
There are different ways to cope with this situation: heuristic methods, randomized algorithms, average-case analysis (instead of worst-case) and approximation algorithms.
Unfortunately, none of these methods provides an algorithm that efficiently computes an optimal solution in the worst case.
Since there are situations where we need performance and optimality guarantee at least for a specified type of input, another way out is needed.
Fixed-parameter algorithms provide a possibility to redefine problems with several input parameters.
The main idea is to analyze the input structure to find parameters that are ``responsible for the exponential running time''.
The aim is to find such a parameter, whose values are constant or ``logarithmic in the input size'' or ``usually small enough'' in the problem instances of your application.
Thus, we can say something like ``if the parameter is small, we can solve our problem instances efficiently''.
\\
We will use the two-dimensional parameterized complexity theory \cite{DF99,Nie06,FG06} for studying the computational complexity of several graph problems.
A \textit{parameterized problem} (or language) $L$ is a subset $L \subseteq \Sigma^* \times \mathbb{N}$ for some finite alphabet $\Sigma$.
For an element $(x,k)$ of $L$, by convention $x$ is called \textit{problem instance}\footnote{Most parameterized problems originate from classical complexity problems.
One can interpret $x$ as the input of the original/non-parameterized problem.} and $k$ is the \textit{parameter}.
The two dimensions of parameterized complexity theory are the size of the input $n := |(x,k)|$ and the parameter value $k$, which is usually a non-negative integer.
A parameterized language is called \textit{fixed-parameter tractable} if we can determine in $f(k) \cdot n^{O(1)}$ time whether $(x,k)$ is an element of our language, where $f$ is a computable function only depending on the parameter $k$.
The class of fixed-parameter tractable problems is called $\FPT$.
Thus, it is very important to find good parameters.
In the following chapters, we need four of the core tools in the development of parameterized algorithms~\cite{Nie06}: data reduction rules (kernelization), (depth-bounded) search trees, dynamic programming and integer linear programs.

The idea of \textit{kernelization} is to transform any problem instance $x$ with parameter $k$ in polynomial time into a new instance $x'$ with  parameter $k'$ such that the size of $x'$ is bounded from above by some function only depending on $k$ and $k' \leq k$, and $(x,k) \in L$ if and only if $(x',k') \in L$.
The reduced instance $(x',k')$ is called \textit{problem kernel}.
This is done by \textit{data reduction rules}, which are transformations from one problem instance to another.
A data reduction rule that transforms $(x,k)$ to $(x',k')$ is called \textit{sound} if $(x,k) \in L$ if and only if $(x',k') \in L$.

Besides kernelization we use (depth-bounded) \textit{search trees algorithms}.
A search algorithm takes a problem as input and returns a solution to the problem after evaluating a number of possible solutions.
The set of all possible solutions is called the search space.
Depth-bounded search tree algorithms organize the systematic and exhaustive exploration of the search place in a tree-like manner.
Let $(x,k)$ denote the instance of a parameterized problem.
The search tree algorithm replaces $(x,k)$ by a set $H$ of smaller instances $(x_i,k_i)$ with $|x_i| < |x|$ and $k_i < k$ for $1 \leq i \leq |H|$.
Thus, the search tree size (number of nodes) is clearly bounded by $|H|^k$.
Since the running time of the replacement procedure is bounded by a polynomial in the instance size, a constant-size set $H$ always leads to a fixed-parameter algorithm with respect to $k$.
However, there are more refined methods to compute a better upper bound for the search tree size using the so-called branching vector, but they are not important in this work.

Another important technique used in this work is \textit{dynamic programming}.
It goes back to Bell~\cite{Bell03}.
As well as search trees dynamic programming makes exhaustive search, but is more efficient by avoiding the computation of subproblems more than once.
It is used when a problem exhibits the property of having \textit{optimal substructure}, that is, an optimal solution to the problem contains within it optimal solutions to subproblems.
Two further properties are important for a feasible dynamic programming: \textit{Independence}, that is, the solution of one subproblem does not affect the solution of another subproblem of the same problem, and \textit{overlapping subproblems}, that is, the same problem occurs as a subproblem of different problems.
One organizes the computation of the solutions for the subproblems in the \textit{dynamic programming table}.
An established way to compute the table which is used in this work is the \textit{bottom-up} computation.
Typically in fixed-parameter algorithms, the computation of a single table entry depends only on a constant number of parent entries and can be done in polynomial time.
The table size is bounded by a function only depending on the parameter.
Furthermore, the overall-solution can be computed from the completely filled table in polynomial time.

The technique of \textit{integer linear programming} is a special case of \textit{linear programming} which goes back to Kantorovich~\cite{K60}.
It is a technique for the optimization (minimization and maximization) of a (linear) \textit{objective function}, subject to linear equality and linear inequality contains.
In an \textit{integer linear program} (ILP), the unknown variables are required to be integers instead of real number as in a linear program.
Whereas solving a linear program is possible in polynomial time, integer linear programs are $\NP$-hard~\cite{Karp72}.
Anyhow, the technique can be used to show fixed-parameter tractability when the number of unknown variables is bounded by a function only depending on the parameter~\cite{L83,K87}.
More about these four techniques can be found in~\cite{Nie06}.

In many applications one is interested in deciding an $\NP$-hard problem or computing the optimal solution of the corresponding search or optimization problem.
Therefore, fixed-parameter tractability is a desired attribute of a problem together with a well-chosen parameter.
However, there are also some cases where intractability can also be a desired attribute.
For example voting rules seem to be ``more fair'' if a (at this point not more specified) manipulation or control of that rule is computationally hard.
Thus, parameterized intractability can be a positive result as well as negative result.
In this work, we use a characterization of parameterized problems that provides even more than only determining fixed-parameter tractability or intractability.
In analogy to the concepts of $\NP$-hardness, $\NP$-completeness, and polynomial-time many-to-one reductions in classical complexity theory, Downey and Fellows~\cite{DF99} developed a framework of reductions and a hierarchy of parameterized complexity classes.
\begin{definition}
 A \textbf{parameterized reduction} from a parameterized problem $L \subseteq \Sigma^* \times \mathbb{N}$ to another parameterized problem $L' \subseteq \Sigma^* \times \mathbb{N}$ is a function that, given an instance $(I,k)$, returns in time $f(k) \cdot \poly(|(I,k)|)$ an instance $(I',k')$, with $k'$ only depending on $k$, such that $(I,k) \in L$ if and only if $(I',k') \in L'$.
\end{definition}
A parameterized problem $L$ belongs to W[$t$] if $L$ can be reduced to a weighted satisfiability problem for the family of circuits of depth at most some function only depending on the parameter and weft at most $t$, where weft is the maximum number of gates with unrestricted fan-in on an input-output path in the circuit.
Similar to the classical complexity theory we denote a problem as W[$t$]-hard if there is a parameterized reduction from a problem that is already known to be W[t]-complete.
In this work, the most important thing we have to know about the W[$t$]-hierarchy is that every parameterized problem which is W[$t$]-hard for $t \geq 1$ is believed to be not fixed-parameter tractable.
(This holds under the separation hypothesis $\FPT \neq $W[$1$].)
A parameterized problem $L$ belongs to the class $\XP$ if it can be determined in $f(k) \cdot |x|^{g(k)}$ time whether $(x,k) \in L$ where $f$ and $g$ are computable functions only depending on the parameter $k$.
It holds that $$ \FPT \subseteq \text{W[1]} \subseteq \text{W[2]} \subseteq .. \subseteq \XP.$$
More details about this fields can be found in~\cite{DF99,FG06,Nie06}.
An established way of proving W[t]-hardness is to give a parameterized reduction from a problem that is already known to be W[t]-hard.
In this work, we use the following two problems:
\begin{verse}
 \textsc{Independent Set}\\
 \textit{Given:} An undirected graph $G=(V,E)$ and an integer $k \geq 1$.\\
 \textit{Question:} Is there a subset $V' \subseteq V$ of size at least $k$ such that there are no edges in $G[V']$?
\end{verse} 

\begin{verse}
 \textsc{Dominating Set}\\
 \textit{Given:} An undirected graph $G=(V,E)$ and an integer $k \geq 1$.\\
 \textit{Question:} Is there a subset $V' \subseteq V$ of size at most $k$ such that every vertex in $V$ either belongs to $V'$ or has a neighbor in $V'$?
\end{verse}
The \textsc{Independent Set} problem with respect to the parameter $k$ is known to be W[1]-complete and the \textsc{Dominating Set} problem with respect to the parameter $k$ is known to be W[2]-complete~\cite{DF99}.


\chapter{Degree-based vertex deletion problems}
\label{Degree-based vertex deletion problems}
In this work, we analyze three quite similar graph problems.
All problems get a directed or undirected graph as one part of the input.
In each problem one asks to delete a subset of vertices of bounded size to get a modified graph that satisfies a problem-specific property.
Furthermore, this property depends in each of the three problems on the degree of the vertices.
However, the particular problems have quite different applications.
In the following three sections, we give a formal definition and a motivation (``Why is this problem relevant?'').

\section{Minimum Indegree Deletion}
Recently, social choice problems became important in the fields of computational complexity and algorithmics.
In this context, the investigation of voting systems is a relevant area.
There are two recent surveys by Chevaleyre et al.~\cite{CELM07} and Faliszewki et al.~\cite{FHHR09}.
The most obvious application of voting systems might be political elections.
There are also several applications in the fields of rank aggregation and multi-agent systems.
Besides work that focuses on the problem to determine the winner or an optimal ranking for different voting systems, a significant number of papers also investigates how an external agent or a group of voters can influence the election in favor or disfavor of a distinguished candidate.
Concrete scenarios of influencing are manipulation~\cite{BTT89a,BPHSS08,CSL07,MPRZ08}, electoral control~\cite{BTT90,FHHR09b,HHR07}, lobbying~\cite{CFRS07}, and bribery~\cite{FHHR09b}.
This shows that investigations in this field are of high interest.
In this section we present a directed graph problem that is closely related to control in the Llull voting rule.
We start with an introduction to this rule and introduce the corresponding graph problem.

\paragraph{Llull voting.} 
Formally, an \textit{election} $(V,C)$ consists of a multiset of votes $V$ and a set of candidates $C$.
A \textit{vote} is a preference list over all candidates.
In an election, we either ask for a \textit{winner}, that is, one of the candidates who are ``best'' in the election, or for a \textit{unique winner}.
Of course, a unique winner does not always exist.
We only consider the unique-winner case for our control variant, but our results can be easily modified to work for the winner case as well.

The term Llull\footnote{Llull is the special case of Copeland$^\alpha$ where $\alpha=1$.} voting was introduced by Faliszewski et al.~\cite{FHHR09b}.
It is based on pairwise comparisons between candidates:
A candidate wins the pairwise contest against another candidate if it beats the other candidate in more than half of the votes.
The winner of a pairwise contest gets one point and the loser receives no point.
If two candidates are tied, both candidates get one point. 
A Llull winner is a candidate with the highest score.
It is used for example in sport tournaments, chess, or in football leagues, where the teams or players can be considered as candidates.

In the following we consider the concept ``control of a voting rule''.
To \textit{control} an election, an external or internal agent, traditionally called the \textit{chair}, is allowed or able to change the voting procedure to reach certain goals.
For example, a typical question is how many candidates the chair has to delete to make his/her favorite candidate a unique winner.
In most cases it is a desirable attribute of a voting rule to be either immune to control, that is, it is impossible to control the voting rule, or at least to be resistant to control, that is, the corresponding decision problem is $\NP$-hard~\cite{BTT90}.
Unfortunately, $\NP$-hardness does only imply computational hardness in the worst case.
There might be special inputs where is it easy to decide the corresponding problem.

Regarding the complexity of control, Llull voting is resistant to constructive candidate control and vulnerable for destructive candidate control~\cite{FHHR09b}.
Thus, investigating the parameterized complexity of control of Llull voting helps us to extend our knowledge about the ``danger of control of Llull election''.
This is what we do in Chapter~\ref{Minimum Indegree Deletion}.
Therefore, we introduce the directed graph problem which corresponds to candidate control in Llull elections.

\paragraph{Minimum Indegree Deletion.} 
\label{MID_intro}
A Llull election can be depicted by a directed graph where the candidates are represented as vertices and there is an arc from vertex~$c$ to vertex~$d$ if and only if the corresponding candidate~$c$ defeats the corresponding candidate~$d$ in the pairwise comparison contest.
Obviously, the Llull score of a candidate~$c$ can be considered as the total number of candidates minus the number of candidates that beat~$c$ in the pairwise comparison.
Thus, a Llull winner corresponds to a vertex with minimum indegree.
Of course, the deletion of a vertex corresponds to the deletion of a candidate in the election.
These observations motivate the introduction of the following directed graph problem which was originally introduced in~\cite{BU09}:
\begin{verse}
 \textsc{Minimum Indegree Deletion}\\
 \textit{Given:} A directed graph $D=(W,A)$, a distinguished vertex $w_c \in W$, and an integer $k \geq 1$.\\
 \textit{Question:} Is there a subset $W' \subseteq W \setminus \{w_c\}$ of size at most $k$ such that $w_c$ is the only vertex that has minimum indegree in $D - W'$?
\end{verse}
The equivalence of \textsc{Minimum Indegree Deletion} to constructive control by deleting candidates in Llull elections was shown in~\cite{BU09}.

\section{Minimum Degree Deletion}
\label{MDD_intro}
Constructive control by deleting candidates in Llull elections leads to the directed \textsc{Minimum Indegree Deletion}.
From the theoretical point of view, it is an intuitive task to investigate its undirected variant.
From the practical point of view, a corresponding voting problem can be formulated, too:
Given is a \textit{social network}, that is, an undirected graph $G=(V,E)$ where vertices correspond to subjects and edges correspond to relations between two subjects.
For example, consider the relation ``disharmony between two subjects'' and the subjects are candidates of a voting rule where the candidate with fewest disharmonies wins.
Now, the chair is asked to control the election by removing a specified number of candidates from the network to make a specific candidate the winner of the election.
The corresponding graph problem \textsc{Minimum Degree Deletion} is given in the following:

\begin{verse}
 \textsc{Minimum Degree Deletion}\\
 \textit{Given:} An undirected graph $G=(V,E)$, a distinguished vertex $w_c \in V$, and an integer $k \geq 1$.\\
 \textit{Question:} Is there a subset $V' \subseteq V \setminus \{w_c\}$ of size at most $k$ such that $w_c$ is the only vertex that has minimum degree in $G - V'$?
\end{verse}
Clearly, removing candidates from the social network corresponds to removing candidates in the graph $G$.
The equivalence of the graph problem and constructive control of the voting rule by removing candidates is easy to see.

\section{Bounded Degree Deletion}
\label{BDD_intro}
The first two problems ask for a vertex subset of a specific size whose removal satisfies the graph property ``only a distinguished vertex has minimum (in)degree''.
The problem which is introduced in this section no-longer has a distinguished candidate, but an upper bound on the degree.
This means, the underlaying graph property is ``the maximum degree of the vertices in the graph is bounded by $d$'' for a specific positive integer $d$ which is given in the input.
\textsc{Bounded Degree Deletion} was introduced in~\cite{Moser10} and is formally defined as follows:
\begin{verse}
 \textsc{Bounded Degree Deletion}\\
 \textit{Given:} An undirected graph $G=(V,E)$, and integers $d \geq 0$ and $k \geq 0$.\\
 \textit{Question:} Does there exists a subset $V' \subseteq V$ of size at most $k$ whose removal from $G$ yields a graph in which each vertex has degree at most $d$?
\end{verse} 
Whereas \textsc{Minimum Indegree Deletion} and \textsc{Minimum Degree Deletion} correspond to problems in computational social choice, \textsc{Bounded Degree Deletion} has applications in computational biology:
In the analysis of genetic networks based on micro array data, a central tool is to find cliques or ``near-cliques''~\cite{BCKLSWZ05,CLSQGWHMBLTMW05}.
Searching for cliques is a computational hard problem.
Here, \textsc{Vertex Cover} as complementary dual problem to clique detection could be used very successful instead of direct clique detection.
A mathematical concept for near-cliques is the concept of $s$-plexes, that are, subsets of vertices such that each $s$-plex vertex is connected to all other vertices in the $s$-plex but to $s - 1$.
Note that cliques are $1$-plexes.
\textsc{Bounded Degree Deletion} is the complementary dual problem to $s$-plex detection~\cite{Moser10}.
Since \textsc{Bounded Degree Deletion} is a vertex deletion problem for the hereditary graph property ``each vertex has degree at most $d$''~\cite{Moser10}, $\NP$-completeness is given due to a general result by Lewis et al.~\cite{LY80}.
It follows from the reduction in~\cite{LY80} that \textsc{Bounded Degree Deletion} cannot be approximated better than \textsc{Vertex Cover}.
Thus, an approximation lower bound of $1.36$ assuming P$\neq \NP$~\cite{DS05} as well as the APX-hardness of \textsc{Vertex Cover}~\cite{PY91} can be carried over.
The best known approximation factors are $2$~\cite{F98} for $d=1$ and $2+\log(d)$~\cite{OB03} for $d \geq 2$.

Finally we introduce some general terms which are used in the analysis of all three problems.
A yes-instance of each problem can be detected through finding a vertex subset of bounded size whose removal satisfies the problem-specific graph property.
We denote such subsets as \textit{solution sets}.
The result of removing a solution set from the input graph is called \textit{solution graph}.

\chapter{Feedback sets}
\label{Minimum Feedback Sets}
As mentioned in the introduction, parameterized algorithms are designed to confine the combinatorial explosion to a parameter of the input instance.
Popular parameters are, for example, the solution size for problems asking for a specific ``solution set''.
Besides very natural parameters such as ``number of edges'' in graph problems, ``number of candidates'' in voting problems or ``size of the alphabet'' in string-based problems, one is often interested in structural parameters like ``average distance'' or ``size of a vertex cover''~(see \cite{Nie10} for a survey).

There is a huge class of graph problems which are easy to solve on acyclic graphs, but $\NP$-hard in general.
Later on, we will see that every graph problem based on a graph property expressible by some specific (non-trivial) logic is solvable in linear time on trees.
The algorithms which follow directly from this result are quite impractical, but it is a good classification tool.
In most cases, there are simple and direct polynomial-time algorithms for problems whose input graph is acyclic.
In this way, a parameter that measures the ``degree of acyclicity'' of a graph is may be a good candidate for fixed-parameter tractability results.
Often, one uses the treewidth and designs algorithms that operate on tree decompositions.
In this work, we will also focus on two weaker parameters which will be introduced in this section.

There are three reasons for working also on other parameters that measure the ``degree of acyclicity'', different from treewidth:
The first reason is that computing an optimal tree decomposition (or even the treewidth) is difficult:
The corresponding decision problem is $\NP$-hard. 
Although there are some nice theoretical results and some work on practical computation of the treewidth of small graphs~\cite{B06,BGK08}, no efficient algorithm that computes an optimal tree decomposition is known.
Notably, it is an open question whether there is a constant-factor approximation for determining the treewidth of a graph~\cite{B08}.

The second reason is that there is no simple explicit analogy to treewidth on directed graphs:
A concept of ``directed treewidth'' was develop by Johnson et al.~\cite{JRST01}.
Independently, the concept ``DAG-width'' was introduced in~\cite{BDHK06} as well as in~\cite{Obdr06}.
Alternatively, another concept ``d-width'' was introduced by Safari~\cite{S05}.
Finally, a fourth concept ``Kelly-width'' was proposed by Hunter et al.~\cite{HK07}.
Although there is a tendency to ``Kelly-width''~\cite{MTV10}, it is not even clear which of these concepts is the best analogy to undirected treewidth.

The third reason is that there are also problems that are not fixed-parameter tractable with respect to the parameter treewidth.
Examples are \textsc{Capacitated Vertex Cover} and \textsc{Capacitated Dominating Set}~\cite{DLSV08}.
A weaker measure of the ``degree of acyclicity'' may lead to a parameter that allows for fixed-parameter tractability.
To the best of our knowledge, there was no work with main focus on the parameterized complexity with respect to parameters that ``measure the distance of the input graph to an acyclic graph'' so far.

Informally speaking, a \textit{feedback set} is a substructure of the graph such that its deletion makes the graph acyclic.
Depending on the type of input (directed or undirected graph) and on the structural elements (vertices or edges/arcs), we consider four parameters (see Figure~\ref{Overview: Feeback parameters.}). 

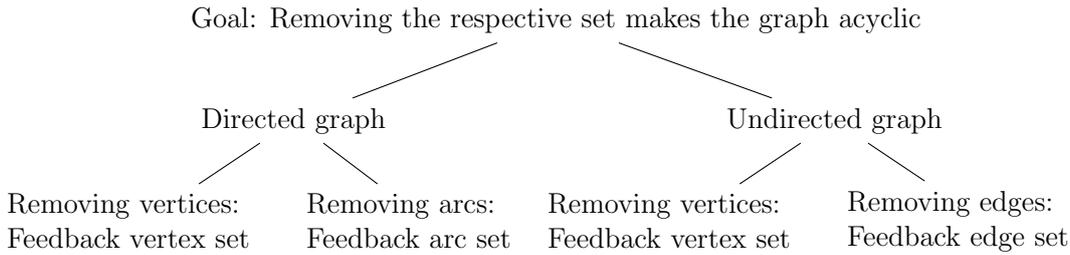
\begin{figure*}
\begin{center}
 \begin{tikzpicture}[>=latex,join=bevel,scale=0.9]
  \node (root_desc) [transform shape] {Goal: Removing the respective set makes the graph acyclic};
  \node (root) [below=of root_desc] [transform shape] {};

  \node (directed_desc) [left=2.2 of root] [transform shape] {Directed graph};
  \node (directed) [below=of directed_desc] [transform shape] {};
  \node (undirected_desc) [right=2.2 of root] [transform shape] {Undirected graph};
  \node (undirected) [below=of undirected_desc] [transform shape] {};

  \node (directed_vertices) [left=-0.1 of directed] [transform shape,text width=4cm] {Removing vertices: Feedback vertex set};
  \node (directed_arcs) [right=-0.1 of directed] [transform shape,text width=3.5cm] {Removing arcs: Feedback arc set};
  \node (undirected_vertices) [left=-0.1 of undirected] [transform shape,text width=4cm] {Removing vertices: Feedback vertex set};
  \node (undirected_edges) [right=-0.1 of undirected] [transform shape,text width=4cm] {Removing edges: Feedback edge set};


  \draw [-] (root_desc) to (directed_desc);
  \draw [-] (root_desc) to (undirected_desc);

  \draw [-] (directed_desc) to (directed_vertices);
  \draw [-] (directed_desc) to (directed_arcs);
  \draw [-] (undirected_desc) to (undirected_vertices);
  \draw [-] (undirected_desc) to (undirected_edges);

 \end{tikzpicture}
\end{center}
\caption{Overview: We consider four types of feedback sets. The corresponding parameters are the cardinalities of these sets.}
\label{Overview: Feeback parameters.}
\end{figure*}
We start with the formal definition of the directed case:
Let $G=(V,A)$ be a directed graph.
One calls a subset of vertices $V' \subseteq V$ \textit{(directed) feedback vertex set} if $G - V'$ is acyclic.
A subset of arcs $A' \subseteq A$ is called \textit{feedback arc set} if $G' := (V, A \setminus A')$ is acyclic.
Analogously, let $G=(V,E)$ be an undirected graph.
One calls a subset of vertices $V' \subseteq V$ \textit{(undirected) feedback vertex set} if $G - V'$ is acyclic.
A subset of edges $E' \subseteq A$ is called \textit{feedback edge set} if $G' := (V, E \setminus E')$ is acyclic.
Determining the sizes of these feedback sets leads directly to the following decision problems:
\begin{verse}
 \textsc{Undirected Feedback Vertex Set}\\
 \textit{Given:} An undirected graph $G=(V,E)$ and an integer $k \geq 1$.\\
 \textit{Question:} Is there an undirected feedback vertex set of size at most $k$?
\end{verse}
\textsc{Undirected Feedback Vertex Set} is $\NP$-complete~\cite{Karp72}.
With deterministic fixed-parameter algorithms it can be solved in $O(5^k\cdot k \cdot n^2)$ time~\cite{CFLLV08}.
There is a randomized algorithm which solves \textsc{Undirected Feedback Vertex Set} in $O(c \cdot 4^k \cdot k n)$ time by finding an undirected feedback vertex set of size $k$ with probability at least $1 - (1 - 4^{-k})^{c 4^k}$ for an arbitrary constant $c$.

\begin{verse}
 \textsc{Directed Feedback Vertex Set}\\
 \textit{Given:} A directed graph $G=(V,A)$ and an integer $k \geq 1$.\\
 \textit{Question:} Is there a directed feedback vertex set of size at most $k$?
\end{verse}
\textsc{Directed Feedback Vertex Set} is also $\NP$-complete.
This follows directly from the reduction given by Karp in~\cite{Karp72}.
However, recent studies proved its fixed-parameter tractability, showing that it can be solved in $O(4^k \cdot k! \cdot n^4 \cdot k^3)$ time~\cite{CLLSR08}.

\begin{verse}
 \textsc{Feedback Arc Set}\\
 \textit{Given:} A directed graph $G=(V,A)$ and an integer $k \geq 1$.\\
 \textit{Question:} Is there a feedback arc set of size at most $k$?
\end{verse}
The directed \textsc{Feedback Arc Set} is $\NP$-complete~\cite{Karp72}.
It was shown by Even et al.~\cite{ENSS98} that \textsc{Feedback Arc Set} and \textsc{Directed Feedback Vertex Set} can be reduced in linear time from one to each other retaining the parameter ``size of a feedback set''.
Hence, \textsc{Feedback Arc Set} can also be solved in $O(4^k \cdot k! \cdot n^4 \cdot k^3)$ time.

\begin{verse}
 \textsc{Feedback Edge Set}\\
 \textit{Given:} A undirected graph $G=(V,E)$ and an integer $k \geq 1$.\\
 \textit{Question:} Is there a feedback edge set of size at most $k$?
\end{verse}
The undirected \textsc{Feedback Edge Set} is polynomial-time solvable.
It is easy to see that a (minimum) feedback edge set can be found by depth-first search.
One just has to find a spanning tree.
A minimal feedback edge set consists of the edges that are not in this tree.

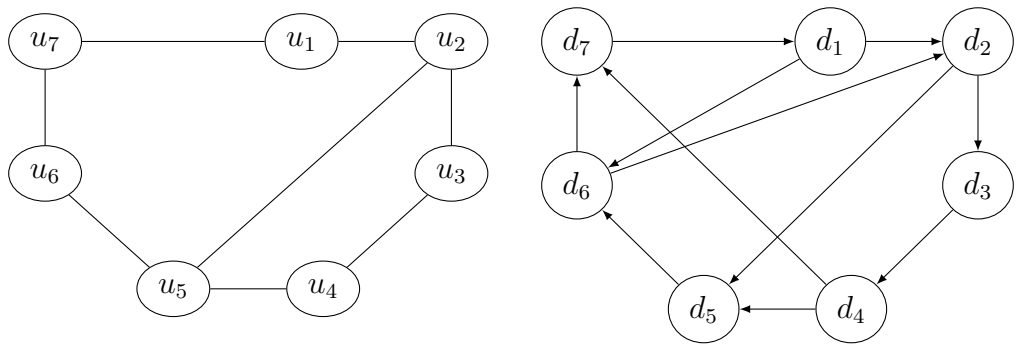
\begin{figure*}
\begin{center}
 \begin{tikzpicture}[>=latex,join=bevel,scale=1.0]
  \node (v_1) [draw,transform shape,ellipse] {$u_1$};
  \node (v_2) [right=of v_1] [draw,transform shape,ellipse] {$u_2$};
  \node (v_3) [below=of v_2] [draw,transform shape,ellipse] {$u_3$};
  \node (v_4) [below left=of v_3] [draw,transform shape,ellipse] {$u_4$};
  \node (v_5) [left=of v_4] [draw,transform shape,ellipse] {$u_5$};
  \node (v_6) [above left=of v_5] [draw,transform shape,ellipse] {$u_6$};
  \node (v_7) [above=of v_6] [draw,transform shape,ellipse] {$u_7$};

  \draw [-] (v_1) to (v_2);
  \draw [-] (v_2) to (v_3);
  \draw [-] (v_3) to (v_4);
  \draw [-] (v_4) to (v_5);
  \draw [-] (v_5) to (v_6);
  \draw [-] (v_6) to (v_7);
  \draw [-] (v_7) to (v_1);
  \draw [-] (v_2) to (v_5);

  \node (dv_1) [right=6 of v_1] [draw,transform shape,ellipse] {$d_1$};
  \node (dv_2) [right=of dv_1] [draw,transform shape,ellipse] {$d_2$};
  \node (dv_3) [below=of dv_2] [draw,transform shape,ellipse] {$d_3$};
  \node (dv_4) [below left=of dv_3] [draw,transform shape,ellipse] {$d_4$};
  \node (dv_5) [left=of dv_4] [draw,transform shape,ellipse] {$d_5$};
  \node (dv_6) [above left=of dv_5] [draw,transform shape,ellipse] {$d_6$};
  \node (dv_7) [above=of dv_6] [draw,transform shape,ellipse] {$d_7$};

  \draw [->] (dv_1) to (dv_2);
  \draw [->] (dv_2) to (dv_3);
  \draw [->] (dv_3) to (dv_4);
  \draw [->] (dv_4) to (dv_5);
  \draw [->] (dv_5) to (dv_6);
  \draw [->] (dv_6) to (dv_7);
  \draw [->] (dv_7) to (dv_1);
  \draw [->] (dv_2) to (dv_5);
  \draw [->] (dv_6) to (dv_2);
  \draw [->] (dv_4) to (dv_7);
  \draw [->] (dv_1) to (dv_6);
 \end{tikzpicture}
\end{center}
\caption{Feedback sets examples. Here we have an undirected and a directed graph. In the undirected graph there is a feedback vertex set of size one ($\{u_2\}$) and a feedback edge set of size two (for example $\{\{u_2,u_5\},\{u_6,u_7\}\}$). In the directed graph there is a feedback vertex set of size two ($\{d_1,d_2\}$) and a feedback arc set of size three (for example $\{(d_2,d_3),(d_2,d_5),(d_1,d_6)\}$).}
\label{Feedback sets examples.}
\end{figure*}

In the following, we always talk about \textit{feedback vertex sets} and the problem \textsc{Feedback Vertex Set} in both cases, the directed and the undirected, whenever it is clear which of them is considered.
Furthermore, the parameters are called ``size of a feedback vertex set'' and so on.
Of course, in most cases it is useful to know the size of the respective minimum feedback set.
However, the algorithms do not depend on the minimality of the parameter value.
This is an advantage when we use heuristics or approximation algorithms to find the feedback sets.
There are cases where one has to know a concrete feedback set and the running time depends on the size of this set (see Section~\ref{Integer linear programming.}, Section~\ref{Dynamic programming.}, and Section~\ref{Size of a Feedback Edge Set as parameter for Bounded Degree Deletion}).
In contrast, there are also cases where the size of the feedback set is used indirectly to prove the worst-case running time (see Section~\ref{Feedback vertex set size and solution size as combined parameter}, Section~\ref{Size of a Feedback Edge Set as parameter for Minimum Degree Deletion}, and Section~\ref{MSO expression for Minimum Degree Deletion}).
There, it is not even necessary to know any feedback vertex set.

Some results can be carried over from one parameter to the other:
\begin{definition}
\label{stronger}
 Let $a$ and $b$ denote two parameters.
 If $a \leq b$ for each instance, then one says $a$ is \textbf{stronger} than $b$ (and $b$ is \textbf{weaker} than $a$).
\end{definition}
Hardness results like W[$t$]-hardness for some positive integer $t$ or non-existence of a problem kernel can be carried over from the weaker to the stronger parameter.
In contrast, results like membership to a parameterized complexity class can be carried over from the stronger to the weaker parameter.
It is easy to see that the treewidth of a graph is at most the size of a feedback vertex set of the same graph.
Furthermore, for each feedback edge set respectively for each feedback arc set there is a feedback vertex set of at most the same size.
Hence, ``treewidth'' is a stronger parameter than ``size of a feedback vertex set'' which is a stronger parameter than ``size of a feedback edge set'' respectively ``size of a feedback arc set''.

\chapter{Minimum Indegree Deletion}
\label{Minimum Indegree Deletion}
In this chapter, we analyze the parameterized complexity of \textsc{Minimum Indegree Deletion}.
The problem is motivated as graph problem corresponding to constructive control by deleting candidates for Llull voting (see Chapter~\ref{Degree-based vertex deletion problems}).
The inputs are a directed graph $D=(W,A)$, a distinguished vertex $w_c \in W$, and an integer $k \geq 1$.
The question is whether there is a subset $W' \subseteq W \setminus \{w_c\}$ of size at most $k$ such that $w_c$ is the only vertex that has minimum indegree in $D - W'$.
In most scenarios, computational hardness for control of a voting rule is a desirable attribute.
However, $\NP$-hardness should only be a first step in this regard, because the hardness does not necessarily hold for special cases of the input which can be typically in real-world instances.
A parameterized analysis will help to discuss computational hardness for special types of input to the voting rule.
We present intractability and tractability results for the parameters $s_v:=$``size of a feedback vertex set'', $s_a:=$``size of a feedback arc set'', $k:=$``number of vertices to remove'', and $d:=$``maximum degree of a vertex'' and their combinations.
\begin{figure}[!b]
\begin{center}
 Parameterized complexity:\\
 \begin{tabular}{ r | c c }
  ~ & single parameter & combined with $k$ \\
  \hline
  $s_v$ & \textbf{W[2]-hard, in XP} & \textbf{FPT} \\
  $s_a$ & \textbf{W[2]-hard, in XP} & \textbf{FPT} \\
  $k$ & W[2]-complete & ~ \\
  $d$ & FPT & FPT \\
 \end{tabular}\\
\end{center}
\begin{center}
 \begin{tabular}{ r | l }
  ~ & parameter description \\
  \hline
  $s_v$ & size of a feedback vertex set \\
  $s_a$ & size of a feedback arc set \\
  $k$ & size of a solution set \\
  $d$ & maximum degree of a vertex \\
 \end{tabular}\\
\end{center}
\caption{Overview of the parameterized complexity of \textsc{Minimum Indegree Deletion}.
         New results are in boldface.
         The remaining results are obtained from \cite{BU09}.}
\label{MID results.}
\end{figure}
Our results are summarized in Figure~\ref{MID results.}.
%

\section{Known results}
\label{Known results for Minimum Indegree Deletion}
We briefly discuss previous result~\cite{BU09}:
There is a simple polynomial-time algorithm that solves \textsc{Minimum Indegree Deletion} in acyclic directed graphs.
In contrast, \textsc{Minimum Indegree Deletion} is W[2]-complete with respect to $k:=$``size of a solution set''.
This even holds if the input graph is a tournament.
With respect to the parameter $d:=$``maximum indegree of a vertex'' \textsc{Minimum Indegree Deletion} is fixed-parameter tractable.
Inspired by the result for acyclic graphs we study parameters measuring the ``degree of acyclicity'' of the input graph in the following.

\section{Feedback vertex/arc set size as parameter}
\label{Feedback vertex/arc set size as parameter}
Here, we show that \textsc{Minimum Indegree Deletion} is W[2]-hard with respect to the parameter $s_v:=$``size of a feedback vertex set'' and the parameter $s_a:=$``size of a feedback arc set''.
To this end, we need the concept of ``domination in an undirected graph''.
\begin{definition}
 Let $G=(V,E)$ be an undirected graph.
 We say that a vertex $d$ \textbf{dominates} a vertex $v$ if $d=v$ or $d$ is a neighbor of $v$.
 A subset $D \subseteq V$ is called \textbf{dominating set} if every vertex in $V$ is dominated by a vertex in $D$.
\end{definition}
In the following, we provide parameterized reductions from \textsc{Dominating Set}, which is W[2]-complete with respect to the parameter $k:=$``size of the dominating set''.
\begin{verse}
 \textsc{Dominating Set}\\
 \textit{Given:} An undirected graph $G=(V,E)$ and an integer $k \geq 1$.\\
 \textit{Question:} Is there a dominating set of size at most $k$?
\end{verse} 

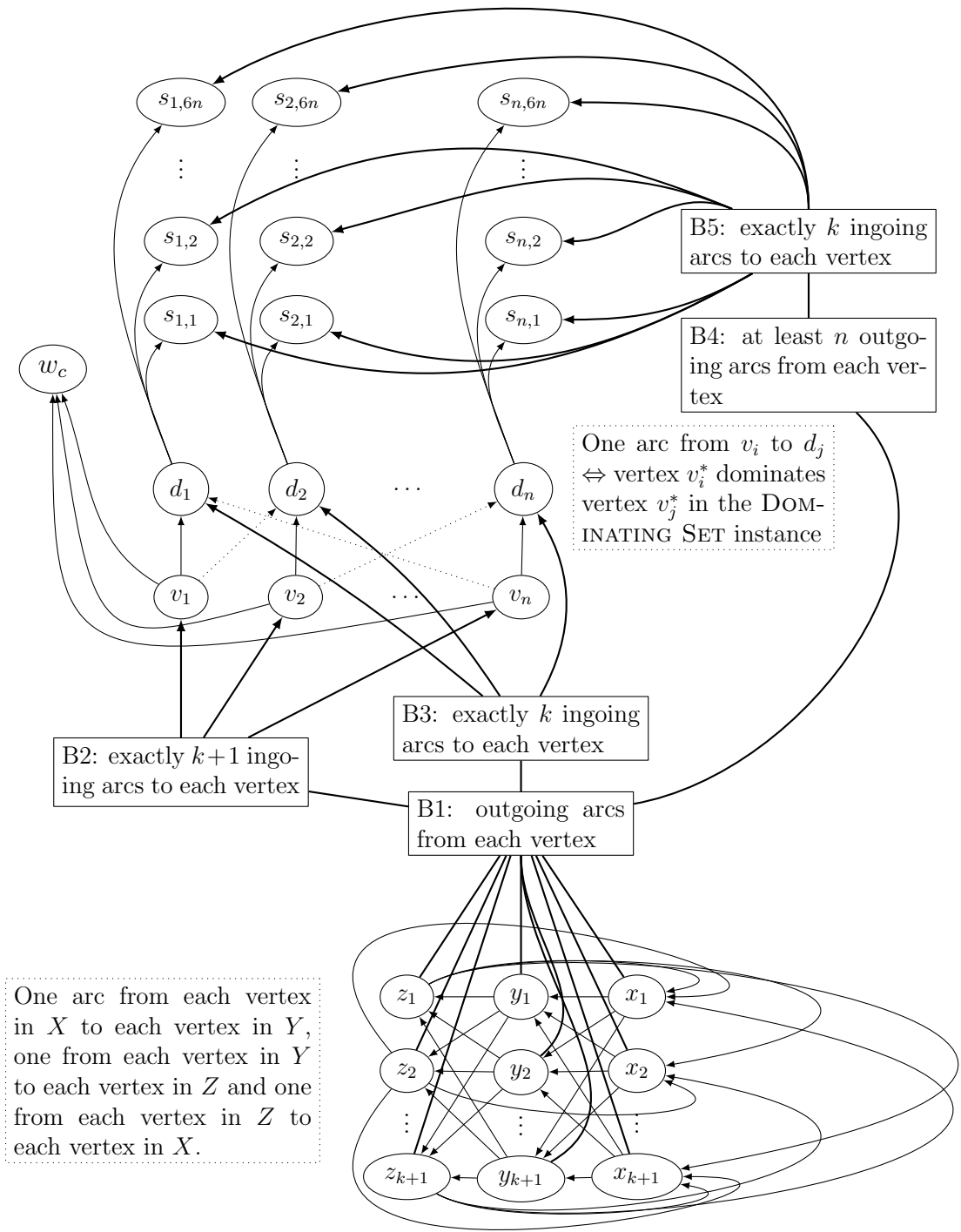
\begin{figure*}
\begin{tikzpicture}[>=latex,join=bevel,scale=0.9]
  \node (w_c) [draw,ellipse] {$w_c$};

  \node (d_1) [below right=2 of w_c] [draw,transform shape,ellipse] {$d_1$};
  \node (d_2) [right=of d_1] [draw,transform shape,ellipse] {$d_2$};
  \node (d_dots) [right=of d_2] [transform shape] {$\dots$};
  \node (d_n) [right=of d_dots] [draw,transform shape,ellipse] {$d_n$};

  \node (v_1) [below=of d_1] [draw,transform shape,ellipse] {$v_1$};
  \node (v_2) [right=of v_1] [draw,transform shape,ellipse] {$v_2$};
  \node (v_dots) [right=of v_2] [transform shape] {$\dots$};
  \node (v_n) [right=of v_dots] [draw,transform shape,ellipse] {$v_n$};

  \draw [->] (v_1) to (d_1);
  \draw [->] (v_2) to (d_2);
  \draw [->] (v_n) to (d_n);

  \draw [->,dotted] (v_1) to (d_2);
  \draw [->,dotted] (v_2) to (d_n);
  \draw [->,dotted] (v_n) to (d_1);
  \node (possExpl) [above right=0.75 of v_n] [draw,transform shape,text width=4.1cm,rectangle,dotted,,text justified] {One arc from $v_i$ to $d_j$ $\Leftrightarrow$ vertex $v_i^*$ dominates vertex $v_j^*$ in the \textsc{Dominating Set} instance};

  \draw [->] (v_1) to [out=150,in=290] (w_c);
  \draw [->] (v_2) .. controls +(195:3.5) .. (w_c);
  \draw [->] (v_n) .. controls +(190:8) .. (w_c);

  \node (s_1_1) [above=2 of d_1] [draw,transform shape,ellipse] {$s_{1,1}$};
  \node (s_1_2) [above=0.5 of s_1_1] [draw,transform shape,ellipse] {$s_{1,2}$};
  \node (s_1_dots) [above=0.5 of s_1_2] [transform shape] {$\vdots$};
  \node (s_1_6n) [above=0.2 of s_1_dots] [draw,transform shape,ellipse] {$s_{1,6n}$};

  \draw [->] (d_1) to [out=110,in=230] (s_1_1);
  \draw [->] (d_1) to [out=110,in=230] (s_1_2);
  \draw [->] (d_1) to [out=110,in=230] (s_1_6n);

  \node (s_2_1) [above=2 of d_2] [draw,transform shape,ellipse] {$s_{2,1}$};
  \node (s_2_2) [above=0.5 of s_2_1] [draw,transform shape,ellipse] {$s_{2,2}$};
  \node (s_2_dots) [above=0.5 of s_2_2] [transform shape] {$\vdots$};
  \node (s_2_6n) [above=0.2 of s_2_dots] [draw,transform shape,ellipse] {$s_{2,6n}$};

  \draw [->] (d_2) to [out=110,in=230] (s_2_1);
  \draw [->] (d_2) to [out=110,in=230] (s_2_2);
  \draw [->] (d_2) to [out=110,in=230] (s_2_6n);

  \node (s_n_1) [above=2 of d_n] [draw,transform shape,ellipse] {$s_{n,1}$};
  \node (s_n_2) [above=0.5 of s_n_1] [draw,transform shape,ellipse] {$s_{n,2}$};
  \node (s_n_dots) [above=0.5 of s_n_2] [transform shape] {$\vdots$};
  \node (s_n_6n) [above=0.2 of s_n_dots] [draw,transform shape,ellipse] {$s_{n,6n}$};

  \draw [->] (d_n) to [out=110,in=230] (s_n_1);
  \draw [->] (d_n) to [out=110,in=230] (s_n_2);
  \draw [->] (d_n) to [out=110,in=230] (s_n_6n);

  \node (y_1) [below=6 of v_n] [draw,transform shape,ellipse] {$y_1$};
  \node (y_2) [below=0.5 of y_1] [draw,transform shape,ellipse] {$y_2$};
  \node (y_dots) [below=0 of y_2] [transform shape] {$\vdots$};
  \node (y_k) [below=0.2 of y_dots] [draw,transform shape,ellipse] {$y_{k+1}$};

  \node (x_1) [right=1 of y_1] [draw,transform shape,ellipse] {$x_1$};
  \node (x_2) [below=0.5 of x_1] [draw,transform shape,ellipse] {$x_2$};
  \node (x_dots) [below=0 of x_2] [transform shape] {$\vdots$};
  \node (x_k) [below=0.2 of x_dots] [draw,transform shape,ellipse] {$x_{k+1}$};

  \node (z_1) [left=1 of y_1] [draw,transform shape,ellipse] {$z_1$};
  \node (z_2) [below=0.5 of z_1] [draw,transform shape,ellipse] {$z_2$};
  \node (z_dots) [below=0 of z_2] [transform shape] {$\vdots$};
  \node (z_k) [below=0.2 of z_dots] [draw,transform shape,ellipse] {$z_{k+1}$};

  \node (XYZExpl) [left=1 of z_2] [draw,transform shape,text width=5cm,rectangle,dotted,text justified] {
  One arc from each vertex in $X$ to each vertex in $Y$, one from each vertex in $Y$ to each vertex in $Z$ and
  one from each vertex in $Z$ to each vertex in $X$. 
  };

  \draw [->] (x_1) to (y_1);
  \draw [->] (x_1) to (y_2);
  \draw [->] (x_1) to (y_k);
  \draw [->] (x_2) to (y_1);
  \draw [->] (x_2) to (y_2);
  \draw [->] (x_2) to (y_k);
  \draw [->] (x_k) to (y_1);
  \draw [->] (x_k) to (y_2);
  \draw [->] (x_k) to (y_k);

  \draw [->] (y_1) to (z_1);
  \draw [->] (y_1) to (z_2);
  \draw [->] (y_1) to (z_k);
  \draw [->] (y_2) to (z_1);
  \draw [->] (y_2) to (z_2);
  \draw [->] (y_2) to (z_k);
  \draw [->] (y_k) to (z_1);
  \draw [->] (y_k) to (z_2);
  \draw [->] (y_k) to (z_k);

  \draw [->] (z_1) .. controls +(30:2) and +(10:2.5) .. (x_1);
  \draw [->] (z_1) .. controls +(30:2.5) and +(10:7.5) .. (x_2);
  \draw [->] (z_1) .. controls +(30:3) and +(10:12.5) .. (x_k);
  \draw [->] (z_2) .. controls +(130:6) and +(360:5) .. (x_1);
  \draw [->] (z_2) .. controls +(330:2) and +(340:2.5) .. (x_2);
  \draw [->] (z_2) .. controls +(230:6) and +(360:5) .. (x_k);
  \draw [->] (z_k) .. controls +(330:3) and +(350:12.5) .. (x_1);
  \draw [->] (z_k) .. controls +(330:2.5) and +(350:7.5) .. (x_2);
  \draw [->] (z_k) .. controls +(330:2) and +(350:2.5) .. (x_k);

  \node (T) [above=2 of y_1] [draw,transform shape,text width=3.5cm,rectangle,text justified] {B1: outgoing arcs from each vertex};
  \node (TV) [below=2 of v_1] [draw,transform shape,text width=4cm,rectangle,text justified] {B2: exactly $k+1$ ingoing arcs to each vertex};
  \node (TD) [below=1.3 of v_n] [draw,transform shape,text width=4cm,rectangle,text justified] {B3: exactly $k$ ingoing arcs to each vertex};
  \node (TS) [right=2 of s_n_2] [draw,transform shape,text width=4cm,rectangle,text justified] {B5: exactly $k$ ingoing arcs to each vertex};
  \node (TS_o) [below=0.75 of TS] [draw,transform shape,text width=4cm,rectangle,text justified] {B4: at least $n$ outgoing arcs from each vertex};

  \draw [-,thick] (T) to (TV);
  \draw [-,thick] (T) to (TD);
  \draw [-,thick] (T) to [out=10,in=315] (TS_o);
  \draw [-,thick] (TS_o) to (TS);

  \draw [-,thick] (x_1) to (T);
  \draw [-,thick] (x_2) to (T);
  \draw [-,thick] (x_k) to (T);
  \draw [-,thick] (y_1) to (T);
  \draw [-,thick] (y_2) to [out=40,in=270] (T);
  \draw [-,thick] (y_k) to [out=30,in=270] (T);
  \draw [-,thick] (z_1) to (T);
  \draw [-,thick] (z_2) to (T);
  \draw [-,thick] (z_k) to (T);

  \draw [->,thick] (TV) to (v_1);
  \draw [->,thick] (TV) to (v_2);
  \draw [->,thick] (TV) to (v_n);

  \draw [->,thick] (TD) to [bend right=5] (d_1);
  \draw [->,thick] (TD) to [bend right=10] (d_2);
  \draw [->,thick] (TD) to [bend right=30] (d_n);

  \draw [->,thick] (TS) to [out=90,in=0] (s_n_6n);
  \draw [->,thick] (TS) to [out=157,in=0] (s_n_2);
  \draw [->,thick] (TS) to [out=210,in=0] (s_n_1);
  
  \draw [->,thick] (TS) to [out=90,in=15] (s_2_6n);
  \draw [->,thick] (TS) to [out=157,in=15] (s_2_2);
  \draw [->,thick] (TS) to [out=210,in=340] (s_2_1);

  \draw [->,thick] (TS) to [out=90,in=30] (s_1_6n);
  \draw [->,thick] (TS) to [out=157,in=30] (s_1_2);
  \draw [->,thick] (TS) to [out=210,in=340] (s_1_1);
 \end{tikzpicture}
 \caption{\textsc{Minimum Indegree Deletion} instance obtained from a parameterized reduction from a \textsc{Dominating Set} instance.}
 \label{Reduction from Dominating Set (with parameter $k$) to Minimum Indegree Deletion (with parameter $s_v$/$s_a$)}
\end{figure*}

\paragraph{Parameterized reduction.}
The following reduction is illustrated in Figure~\ref{Reduction from Dominating Set (with parameter $k$) to Minimum Indegree Deletion (with parameter $s_v$/$s_a$)}.
Given a \textsc{Dominating Set} instance $(G^*=(V^*,E^*),k)$ with $V^*=\{v^*_1,v^*_2,\dots,v^*_n\}$, we construct a directed graph $G$ with feedback vertex set size $k+1$ and feedback arc set size $(k+1)^2$ such that $(G,w_c,n-k)$ is a yes-instance of \textsc{Minimum Indegree Deletion} if and only if $(G^*,k)$ is a yes-instance of \textsc{Dominating Set}.
This means we have a parameterized reduction if we can do the construction in polynomial time.
The vertex set of $G$ consists of $w_c$ and the union of the following disjoint vertex sets:
\begin{itemize}
 \item $V, D$, each containing one vertex for every vertex in $V^*$,
       \begin{itemize}
        \item $V:=\{v_i \mid i\in\{1,\dots,n\}\}$ represents the set of ``dominating vertices'' which is illustrated by the nodes $v_1$, $v_2$ and $v_n$ in Figure~\ref{Reduction from Dominating Set (with parameter $k$) to Minimum Indegree Deletion (with parameter $s_v$/$s_a$)},
        \item $D:=\{d_i \mid i\in\{1,\dots,n\}\}$ represents the set of ``dominated vertices'' which is illustrated by the nodes $d_1$, $d_2$ and $d_n$ in Figure~\ref{Reduction from Dominating Set (with parameter $k$) to Minimum Indegree Deletion (with parameter $s_v$/$s_a$)}.
       \end{itemize}
       The arcs between vertices in $V$ and $D$ ensure that ``undominated vertices in a solution graph would have a lower indegree than $w_c$''.
       They are illustrated by dotted arrows in Figure~\ref{Reduction from Dominating Set (with parameter $k$) to Minimum Indegree Deletion (with parameter $s_v$/$s_a$)}.
 \item $X, Y, Z$, each containing $k+1$ vertices,
       \begin{itemize}
        \item $X:=\{x_i \mid i\in\{1,\dots,k+1\}\}$,
        \item $Y:=\{y_i \mid i\in\{1,\dots,k+1\}\}$,
        \item $Z:=\{z_i \mid i\in\{1,\dots,k+1\}\}$.
       \end{itemize}
        These vertex sets are designed to ``increase the indegree of vertices without increasing the size of a (minimum) feedback vertex set''.
        We will see that each of them is a feedback vertex set and every cycle in $G$ passes trough at least one vertex in $X$, $Y$, and $Z$.
        The set $X$ is illustrated by the nodes $x_1$, $x_2$, and $x_{k+1}$, $Y$ is illustrated by $y_1$, $y_2$, and $y_{k+1}$, and $Z$ is illustrated by $z_1$, $z_2$, and $z_{k+1}$ in Figure~\ref{Reduction from Dominating Set (with parameter $k$) to Minimum Indegree Deletion (with parameter $s_v$/$s_a$)}.
  \item $S_1, \dots, S_n$ each containing $6n$ vertices.\\
        These vertex sets are designed to ensure that the set of removed vertices corresponding to any yes-instance $(G,w_c,n-k)$ of \textsc{Minimum Indegree Deletion} only contains vertices of $V$.
        We will show that removing vertices that are not in $V$ would be ``punished by being forced to remove more than $n$ further vertices in $\bigcup_{i=1}^n S_i$''.
        Each set $S_i$ is illustrated by the nodes $s_{i,1}$, $s_{i,2}$, and $s_{i,6n}$ in Figure~\ref{Reduction from Dominating Set (with parameter $k$) to Minimum Indegree Deletion (with parameter $s_v$/$s_a$)}.
\end{itemize}
The arcs of $G$ are designed as follows:
\begin{figure}
\begin{center}
\begin{tabular}{l | l}
 start set & end set \\
 \hline
 $\{x_1,\dots,x_k\}$ & $\{s_{i,j} \mid i \in \{1,\dots,n\} \text{ and } j \in \{1,\dots,n\}\}$ \\
 $\{x_2,\dots,x_{k+1}\}$ & $\{s_{i,j} \mid i \in \{1,\dots,n\} \text{ and } j \in \{n+1,\dots,2n\}\}$ \\
 $\{y_1,\dots,y_k\}$ to & $\{s_{i,j} \mid i \in \{1,\dots,n\} \text{ and } j \in \{2n+1,\dots,3n\}\}$ \\
 $\{y_2,\dots,y_{k+1}\}$ & $\{s_{i,j} \mid i \in \{1,\dots,n\} \text{ and } j \in \{3n+1,\dots,4n\}\}$ \\
 $\{z_1,\dots,z_k\}$ & $\{s_{i,j} \mid i \in \{1,\dots,n\} \text{ and } j \in \{4n+1,\dots,5n\}\}$ \\
 $\{z_2,\dots,z_{k+1}\}$ & $\{s_{i,j} \mid i \in \{1,\dots,n\} \text{ and } j \in \{5n+1,\dots,6n\}\}$ \\
\end{tabular}
\end{center}
\caption{Concrete assignment of the arcs that are incident to vertices in $\bigcup_{i=1}^n S_i$.
         For each row of the table there is one arc from each vertex from the start set to each vertex from the end set.}
\label{Concrete assignement of the arcs}
\end{figure}
\begin{itemize}
 \item There is an arc from $v_i$ to $d_j$ if and only if $v^*_i$ dominates $v^*_j$.
       These arcs are illustrated by dotted arrows in Figure~\ref{Reduction from Dominating Set (with parameter $k$) to Minimum Indegree Deletion (with parameter $s_v$/$s_a$)}.
 \item There is an arc from each vertex in $V$ to $w_c$.
       These arcs are illustrated by thin arrows from the nodes $v_1$, $v_2$, and $v_n$ to the node $w_c$ in Figure~\ref{Reduction from Dominating Set (with parameter $k$) to Minimum Indegree Deletion (with parameter $s_v$/$s_a$)}.
 \item There is an arc from each vertex in $X$ to each vertex in $Y$, from each vertex in $Y$ to each vertex in $Z$, and from each vertex in $Z$ to each vertex in $X$. 
       These arcs are illustrated by thin arrows between the nodes $x_1$, $x_2$, $x_{k+1}$, $y_1$, $y_2$, $y_{k+1}$, $z_1$, $z_2$, and $z_{k+1}$ in Figure~\ref{Reduction from Dominating Set (with parameter $k$) to Minimum Indegree Deletion (with parameter $s_v$/$s_a$)}.
 \item There are $k$ arcs from arbitrary vertices in $X \cup Y \cup Z$ to each vertex in $D$ and $k+1$ arcs from arbitrary vertices in $X \cup Y \cup Z$ to each vertex in $V$.
       Outgoing arcs are illustrated by fat lines from the nodes $x_1$, $x_2$, $x_{k+1}$, $y_1$, $y_2$, $y_{k+1}$, $z_1$, $z_2$, and $z_{k+1}$ to the box B1.
       Ingoing arcs are illustrated by two fat lines from the box B1 to the boxes B2 and B3 and by fat arrows from the boxes B2 and B3 to the nodes $v_1$, $v_2$, $v_n$, $d_1$, $d_2$, and $d_n$ in Figure~\ref{Reduction from Dominating Set (with parameter $k$) to Minimum Indegree Deletion (with parameter $s_v$/$s_a$)}.
 \item For each $i \in \{1,\dots,n\}$ there is an arc from $d_i$ and further $k$ outgoing arc from vertices in $X \cup Y \cup Z$ to each vertex in $S_i$ such that every vertex in $X \cup Y \cup Z$ has at least $n$ outneighbors in $\bigcup_{i=1}^n S_i$.
       Removing a vertex in $X \cup Y \cup Z$ decreases the indegree of its $n$ outneighbors in $\bigcup_{i=1}^n S_i$ too much.
       A concrete arc assignment is given in Figure~\ref{Concrete assignement of the arcs}.
       The outgoing arcs from vertices in $D$ are illustrated by thin arrows from the nodes $d_1$, $d_2$, and $d_n$ to the nodes $s_{1,1}$, $s_{1,2}$, $s_{1,6n}$, $s_{2,1}$, $s_{2,2}$, $s_{2,6n}$, $s_{n,1}$, $s_{n,2}$, and $s_{n,6n}$.
       The outgoing arcs from vertices in $X \cup Y \cup Z$ are illustrated by a fat line from B1 to B4, a fat line from B4 to B5, and fat arrows from the box B5 to the nodes $s_{i,1}$, $s_{i,2}$, and $s_{i,6n}$ for $i \in \{1,\dots,n\}$ in Figure~\ref{Reduction from Dominating Set (with parameter $k$) to Minimum Indegree Deletion (with parameter $s_v$/$s_a$)}.
\end{itemize}
The construction ensures that an optimal solution deletes $n-k$ vertices from $V$ and no other vertex.
This will be proved in detail later on, but it is necessary for the further argumentation to investigate the indegrees of the vertices:
The indegree of $w_c$ is $n$ and the indegree of each vertex in $V$ is $k+1$.
Since each vertex $v^*_i$ dominates itself, the vertices in $D$ have indegree at least $k+1$.
Every vertex in $X \cup Y \cup Z$ has trivially also indegree $k+1$.
What remains are the vertex sets $S_1,\dots,S_n$.
The vertices in $\bigcup_{i=1}^n S_i$ have also indegree $k+1$.
In the following we give an intuition why they are useful:
We want to ensure that if $(G,w_c,n-k)$ is a yes-instance of \textsc{Minimum Indegree Deletion}, then every solution set only contains vertices of $V$ and especially no vertices of $D$.
For each vertex~$d_i$ with~$i \in \{1,\dots\,n\}$ there is a set of vertices~$S_i:=\{s_{i,1},\dots\,s_{i,6n}\}$ with an outgoing arc to each of the vertices in $S_i$.
This realizes the ``punishment'': Removing $d_i$ is not possible without removing all vertices in $S_i$ (which are more than $n$).
A similar argumentation holds for the vertices in $X \cup Y \cup Z$, because each of them has at least $n$ outneighbors in $\bigcup_{i=1}^n S_i$.
This finishes the description of the construction.
Now, we give several lemmata and observations to prove the correctness of the construction, that is, $(G^*,k)$ is a yes-instance of \textsc{Dominating Set} if and only if $(G,w_c,n-k)$ is a yes-instance of \textsc{Minimum Indegree Deletion}.

\begin{lemma}
\label{lemma_DS2MID}
 If $(G^*,k)$ is a yes-instance of \textsc{Dominating Set}, then $(G,w_c,n-k)$ is a yes-instance of \textsc{Minimum Indegree Deletion}.
\end{lemma}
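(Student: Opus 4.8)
The plan is to exhibit an explicit solution set and verify it directly; note that this is the easy direction of the correctness proof. First I would fix a dominating set $D^* \subseteq V^*$ of $G^*$. We may assume $k < n$ (otherwise $V^*$ itself is a dominating set of the required size and the \textsc{Dominating Set} instance is a trivial yes-instance), and since a dominating set of size at most $k$ can always be enlarged to one of size exactly $k$, we may assume $|D^*| = k$. Define
$$W' := \{\, v_i \mid v^*_i \in V^* \setminus D^* \,\} \subseteq V,$$
so that $|W'| = n - k$ and $w_c \notin W'$. This $W'$ is the candidate solution set for $(G, w_c, n-k)$, and it remains to check that $w_c$ is the unique vertex of minimum indegree in $D - W'$.

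The indegree of $w_c$ is immediate: the only arcs entering $w_c$ are the $n$ arcs from the vertices of $V$, and $W'$ deletes exactly $n-k$ of those tails, so $w_c$ has indegree exactly $k$ in $D - W'$. The work is to show every other surviving vertex has indegree at least $k+1$. I would do this by going through the five vertex classes, using the observation that the arcs among $X$, $Y$, $Z$, the $k$ (resp. $k+1$) arcs from $X \cup Y \cup Z$ into each vertex of $D$ (resp. $V$), and the arcs from $X \cup Y \cup Z$ into $\bigcup_i S_i$ all have their tail outside $V$ and hence survive the deletion of $W'$; likewise each arc $d_i \to s_{i,\ell}$ survives because no vertex of $D$ is deleted.

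Consequently: every surviving $v_i$ still has indegree $k+1$ (all its in-arcs come from $X \cup Y \cup Z$); every vertex of $X \cup Y \cup Z$ still has indegree $k+1$ (its in-arcs come from the preceding class, which has $k+1$ vertices); every $s_{i,\ell}$ still has indegree $k+1$ (one arc from $d_i$ plus $k$ from $X \cup Y \cup Z$); and every $d_j$ has indegree at least $k+1$, since the $k$ arcs from $X \cup Y \cup Z$ survive and, as $D^*$ is a dominating set, at least one $v^*_i \in D^*$ dominates $v^*_j$, so the arc $v_i \to d_j$ also survives. Hence in $D - W'$ the vertex $w_c$ has indegree $k$ while every other vertex has indegree at least $k+1$, so $w_c$ is the only minimum-indegree vertex and $(G, w_c, n-k)$ is a yes-instance.

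Since this is a direct verification, I do not anticipate a genuine obstacle; the only points needing care are the reduction to $|D^*| = k$ exactly and keeping the indegree bookkeeping for the $d_j$'s correct, which is precisely where the dominating-set hypothesis enters. The delicate reasoning — that any solution for $(G, w_c, n-k)$ is forced to delete only vertices of $V$, which is the role of the $S_i$ and $X,Y,Z$ gadgets — lives in the converse direction and is not needed here.
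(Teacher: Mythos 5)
Your proof is correct and follows essentially the same route as the paper's: delete the vertices of $V$ corresponding to non-dominating-set vertices, observe that only vertices in $D$ can lose in-neighbors, and use the domination property together with the $k$ extra arcs from $X \cup Y \cup Z$ to keep every $d_j$ at indegree at least $k+1$. The only additions beyond the paper's argument are the explicit padding of the dominating set to size exactly $k$ and the class-by-class bookkeeping, both of which are harmless refinements of the same idea.
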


\begin{proof}
 Let $(G^*,k)$ be a yes-instance and $V^*_d \subseteq V^*$ be a dominating set of size $k$ for $G^*$.
 Now, we delete a vertex $v_i \in V$ from $G$, if $v^*_i \notin V^*_d$.
 Since $V^*_d$ is of size $k$ and $V$ of size $n$, we deleted $n-k$ vertices.
 The vertex $w_c$ has now indegree $k$.
 We have to show that each other vertex has indegree at least $k+1$.
 By construction every vertex in $G$ has indegree at least $k+1$.
 Since we removed only vertices in $V$, the only vertices which can have a decreased indegree are vertices in $D$.
 (These are the only vertices which can be outneighbors of a vertex in $V$.)
 Due to the fact, that $V^*_d$ is a dominating set, every vertex in $D$ keeps at least one inneighbor in $V$.
 By construction there are $k$ further inneighbors in $X \cup Y \cup Z$.
 Hence, each vertex in $D$ has at least indegree $k+1$.
 Trivially, all other vertices keep indegree at least $k+1$, because we did not remove any inneighbor.
 Thus, $w_c$ has minimum indegree and $(G,w_c,n-k)$ is a yes-instance of \textsc{Minimum Indegree Deletion}.
\end{proof}

Lemma~\ref{lemma_DS2MID} showed the direction from left to right of the parameterized equivalence between the \textsc{Dominating Set} solution and the \textsc{Minimum Indegree Deletion} solution; now, we show the reverse direction.
Consider $(G,w_c,n-k)$ resulting from the parameterized reduction.
Let $M_d$ be a solution set for $(G,w_c,n-k)$.
Several observations are very useful for the further argumentation.

\begin{observation}
\label{obs_MFVS_MFAS_sizes}
 The constructed graph $G$ has a feedback vertex set with at most $k+1$ vertices and a feedback arc set with at most $(k+1)^2$ arcs.
\end{observation}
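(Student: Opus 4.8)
The plan is to exhibit the two feedback sets explicitly and then argue that removing them makes $G$ acyclic. For the feedback vertex set, I would take the set $X$ (equivalently, $Y$ or $Z$), which has exactly $k+1$ vertices; for the feedback arc set, I would take the complete ``layer'' of arcs from $X$ to $Y$, which consists of $(k+1)^2$ arcs. In both cases the task reduces to showing: \emph{every cycle in $G$ passes through at least one vertex of $X$ (respectively through at least one of the arcs from $X$ to $Y$).}

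The key step is to analyze the possible cycles in $G$ by examining which arcs leave which vertex classes. First I would observe that the vertices $w_c$, the vertices in $D$, and the vertices in $\bigcup_{i=1}^n S_i$ are all \emph{sinks of a sort} for cycle purposes: $w_c$ has no outgoing arcs at all; every vertex in $D$ has outgoing arcs only into $\bigcup_{i=1}^n S_i$; and every vertex in $\bigcup_{i=1}^n S_i$ has no outgoing arc whatsoever. Hence no vertex of $D$, no vertex of $\bigcup S_i$, and $w_c$ can lie on any cycle. Next, each vertex in $V$ has outgoing arcs only into $D$ and to $w_c$ — both of which are cycle-free by the previous point — so no vertex of $V$ lies on a cycle either. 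Therefore every cycle is contained entirely within $X \cup Y \cup Z$. But the only arcs among these vertices go $X \to Y$, $Y \to Z$, and $Z \to X$; so any cycle within $X \cup Y \cup Z$ must visit all three classes, in particular it uses at least one vertex of $X$ and at least one arc from $X$ to $Y$. This establishes both claims simultaneously: $G - X$ is acyclic, giving a feedback vertex set of size $k+1$, and removing the $(k+1)^2$ arcs from $X$ to $Y$ destroys every cycle, giving a feedback arc set of size $(k+1)^2$.

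The main obstacle — really the only place that needs care — is making sure the case analysis of outgoing arcs is exhaustive, i.e. that I have correctly enumerated, for each of the vertex classes $V$, $D$, $X$, $Y$, $Z$, $\bigcup S_i$, and the vertex $w_c$, every type of arc that can leave it according to the construction (the $v_i \to d_j$ domination arcs, the $v_i \to w_c$ arcs, the $X\to Y\to Z\to X$ arcs, the ``$k$ arcs into each $d_i$'' and ``$k+1$ arcs into each $v_i$'' from $X\cup Y\cup Z$, the $d_i \to S_i$ arcs, and the arcs from $X\cup Y\cup Z$ into $\bigcup S_i$). Once that bookkeeping is verified, the argument that cycles are confined to $X\cup Y\cup Z$ is immediate, and the bound on the feedback sets follows directly. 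I would also note in passing, since it is needed later, that the same argument shows every cycle passes through at least one vertex in each of $X$, $Y$, and $Z$.
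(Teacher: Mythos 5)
Your proposal is correct and follows essentially the same route as the paper: both arguments classify the outgoing arcs of each vertex class ($w_c$, $\bigcup S_i$, $D$, $V$, then $Z$, $Y$) to conclude that every cycle is confined to $X\cup Y\cup Z$ and hence meets $X$. In fact your version is slightly more complete, since the paper only explicitly verifies that $G-X$ is acyclic and leaves the feedback arc set bound implicit, whereas you explicitly observe that every cycle inside $X\cup Y\cup Z$ must use an $X\to Y$ arc, which cleanly yields the $(k+1)^2$ bound.
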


\begin{proof}
 We show by contradiction that $G - X$ is acyclic.
 Assume that there is a (non-empty) cycle $C=(c_1,\dots,c_l)$.
 Since $w_c$ and each vertex in $\bigcup_{i=1}^n S_i$ has no outneighbor, neither $w_c$ nor any vertex in $\bigcup_{i=1}^n S_i$ can be part of $C$.
 Each vertex in $d_i \in D$ has only outneighbors in $S_i$.
 Thus, $C$ contains no vertex from $D$.
 Vertices from $V$ have only outneighbors in $\{w_c\} \cup D$ which implies that $C$ does not contain any vertex from $V$.
 Vertices from $Z$ have only outneighbors in $\bigcup_{i=1}^n S_i \cup V \cup D$, a set that contains no vertex in $C$.
 The remaining vertices from $Y$ have only outneighbors in $Z \cup V \cup D \bigcup_{i=1}^n S_i$.
 Thus, $C$ must be empty; a contradiction.
\end{proof}


\begin{observation}
\label{obs_MID_atleastk}
 It holds that $w_c$ has indegree at least $k$ in $G - M_d$.
\end{observation}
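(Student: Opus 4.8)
The plan is to note that this observation is a one-line counting consequence of the construction, relying on two facts: the in-neighbourhood of $w_c$ consists of exactly the $n$ vertices of $V$, and $M_d$, being a solution set for the instance $(G,w_c,n-k)$, satisfies $|M_d|\le n-k$ (and $w_c\notin M_d$, so ``indegree of $w_c$ in $G-M_d$'' is well defined).

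First I would go back to the list of arcs of $G$ (cf.\ Figure~\ref{Reduction from Dominating Set (with parameter $k$) to Minimum Indegree Deletion (with parameter $s_v$/$s_a$)}) and observe that the only arcs with head $w_c$ are the arcs from each vertex of $V$ to $w_c$. Indeed, the arcs from $V$ to $D$, the arcs among $X\cup Y\cup Z$, the arcs from $X\cup Y\cup Z$ into $V\cup D$, and the arcs into $\bigcup_{i=1}^n S_i$ all have heads different from $w_c$. Hence $N_{\neiin}(w_c)=V$ and $|N_{\neiin}(w_c)|=n$, matching the earlier remark that $w_c$ has indegree $n$ in $G$.

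Next, since every in-neighbour of $w_c$ that is deleted belongs to $M_d\cap V$, at most $|M_d|\le n-k$ of the $n$ in-neighbours of $w_c$ are removed. Consequently $w_c$ keeps at least $n-(n-k)=k$ in-neighbours in $G-M_d$, i.e.\ its indegree in $G-M_d$ is at least $k$, which is exactly the assertion.

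I do not expect a genuine obstacle here; the only point requiring (routine) care is the verification that $w_c$ has no in-neighbour outside $V$, which is immediate from the arc description. This bound will presumably be used later together with the fact that, before any deletion, every vertex other than $w_c$ has indegree at least $k+1$, in order to constrain the shape of an optimal solution set.
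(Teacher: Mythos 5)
Your proof is correct and follows essentially the same counting argument as the paper, which phrases it as a contradiction (if $w_c$ had indegree at most $k-1$, then at least $n-k+1$ of its $n$ in-neighbours would have been deleted, exceeding $|M_d|\le n-k$); your direct version, plus the routine check that $N_{\neiin}(w_c)=V$, is just the contrapositive of the same bound.
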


\begin{proof}
 Assume that $w_c$ has indegree at most $k-1$ in $G - M_d$.
 Since $w_c$ has indegree $n$ in the original graph $G$, we must have deleted $n-k+1$ inneighbors of $w_c$.
 Hence, $M_d$~is a solution set of size at least $n-k+1$; a conflict.
\end{proof}

\begin{observation}
\label{obs_MID_notDXYZ}
 The solution set $M_d$ does not contain vertices of $D$, $X$, $Y$, or $Z$.
\end{observation}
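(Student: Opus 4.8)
The plan is to argue by a "punishment" counting argument: removing any vertex from $D \cup X \cup Y \cup Z$ forces us to remove at least $n$ further vertices from $\bigcup_{i=1}^n S_i$, and then total this up against the budget of $n-k$ deletions. First I would recall the indegree bookkeeping established right before the lemma: every vertex of $G$ other than $w_c$ has indegree exactly $k+1$ in $G$, and by Observation~\ref{obs_MID_atleastk} the vertex $w_c$ retains indegree at least $k$ in $G - M_d$, so to be a valid solution set every vertex of $G - M_d$ other than $w_c$ must have indegree at least $k+1$, i.e.\ its indegree must not drop at all. In particular no vertex in $\bigcup_{i=1}^n S_i$ may lose an inneighbor unless it is itself deleted.

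Next I would examine the inneighbors of the $S$-vertices. Each $s_{i,j}$ has its $k+1$ inneighbors drawn from $\{d_i\} \cup (X\cup Y\cup Z)$. Suppose $M_d$ contains some $d_i$. Then every vertex $s_{i,1},\dots,s_{i,6n}$ loses $d_i$ as an inneighbor, so each of these $6n$ vertices must itself lie in $M_d$ — that is already $6n+1 > n \geq n-k$ deletions (for $n \geq 1$, $k \geq 1$), contradicting $|M_d| \le n-k$. Similarly, suppose $M_d$ contains a vertex $v \in X \cup Y \cup Z$. By construction $v$ has at least $n$ outneighbors in $\bigcup_{i=1}^n S_i$ (this is exactly what the concrete arc assignment in Figure~\ref{Concrete assignement of the arcs} guarantees, together with bullet points B4/B5); each such outneighbor loses an inneighbor and hence must be in $M_d$, forcing $|M_d| \ge n+1 > n-k$, again a contradiction.

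The one step needing a little care is making the "at least $n$ outneighbors in $\bigcup S_i$" claim precise for the vertices of $X \cup Y \cup Z$: a given $x_\ell$ (say) does not dominate all of $\bigcup_{i=1}^n S_i$, only a block of size $n$ per $S_i$-column according to the table, so I would point explicitly to the row of Figure~\ref{Concrete assignement of the arcs} in which $x_\ell$ appears as a start vertex and note that its end set has the form $\{s_{i,j} \mid i \in \{1,\dots,n\},\ j \in J\}$ for an index set $J$ of size $n$, hence $\ge n$ distinct outneighbors among the $S$-vertices (in fact exactly $n$, one per set $S_i$; only $n$ is needed). I expect this is the only place where one must be slightly attentive rather than routine — everything else is the indegree-preservation observation combined with the budget inequality $6n+1 > n-k$ and $n+1 > n-k$. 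Combining the two cases, $M_d \cap (D \cup X \cup Y \cup Z) = \emptyset$.
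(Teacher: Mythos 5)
Your proof is correct and follows essentially the same route as the paper: removing any vertex of $D \cup X \cup Y \cup Z$ strips an inneighbor from at least $n$ vertices of $\bigcup_{i=1}^n S_i$, each of which (having indegree exactly $k+1$ in $G$ and needing to stay strictly above $w_c$'s final indegree of at least $k$) must then also be deleted, exceeding the budget. Your split into the $D$ case versus the $X \cup Y \cup Z$ case and the explicit check of the arc-assignment table are only cosmetic refinements of the paper's uniform argument.
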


\begin{proof}
 Assume that $u \in M_d$ is a vertex from $D \cup X \cup Y \cup Z$.
 By construction of $G$, vertex~$u$ has at least $n$ outneighbors in $\bigcup_{i=1}^n S_i$.
 We call them $S$-outneighbors in the following.
 After removing $u$, every $S$-outneighbor must have indegree at most $k$ in $G - M_d$, because it has indegree exactly $k+1$ in $G$ (by construction).
 Since the final degree of $w_c$ is at least $k$ (see Observation~\ref{obs_MID_atleastk}) and $w_c$ is the only vertex with minimum indegree in $G - M_d$, every $S$-outneighbor must be also in $M_d$.
 Thus, $M_d$ has size at least $n+1$; a conflict.
\end{proof}

\begin{observation}
\label{obs_MID_exactk}
 In $G - M_d$, the distinguished vertex $w_c$ has indegree exactly $k$ .
\end{observation}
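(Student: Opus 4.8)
The plan is to combine Observation~\ref{obs_MID_atleastk}, which already gives that $w_c$ has indegree at least $k$ in $G - M_d$, with a matching upper bound; so it suffices to rule out that $w_c$ has indegree at least $k+1$ in $G - M_d$.

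First I would pin down the relevant neighbourhood structure. By construction the only inneighbours of $w_c$ are the vertices of $V$, so the indegree of $w_c$ in $G - M_d$ equals $n - |M_d \cap V|$. Likewise, every vertex $v_i \in V$ has exactly $k+1$ inneighbours, all of which lie in $X \cup Y \cup Z$; and the vertices in $\bigcup_{i=1}^n S_i$ are sinks, so deleting them changes no vertex's indegree. By Observation~\ref{obs_MID_notDXYZ}, $M_d$ contains no vertex of $D \cup X \cup Y \cup Z$, and of course $w_c \notin M_d$; hence $M_d$ only ever removes vertices of $V$ and of $\bigcup_{i=1}^n S_i$, and in particular it touches no inneighbour of any $v_i \in V$.

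Now I would argue by contradiction: suppose $w_c$ has indegree at least $k+1$ in $G - M_d$. Then $|M_d \cap V| \le n - k - 1 < n$, so there is a vertex $v_i \in V \setminus M_d$. Since none of its $k+1$ inneighbours (all in $X \cup Y \cup Z$) is removed, $v_i$ still has indegree exactly $k+1$ in $G - M_d$, which is at most the indegree of $w_c$. This contradicts the requirement that $w_c$ is the \emph{only} vertex of minimum indegree in $G - M_d$: either $v_i$ ties $w_c$ (both minimal), or $v_i$ beats $w_c$ (so $w_c$ is not minimal at all). Hence $w_c$ has indegree at most $k$, and together with Observation~\ref{obs_MID_atleastk} exactly $k$.

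There is no real obstacle in this step; the only point requiring a little care is noting that an unremoved vertex of $V$ always exists (which follows from $|M_d| \le n-k$ and $|V| = n$) and that it retains indegree exactly $k+1$ because the sets $X$, $Y$, $Z$ are untouched by $M_d$.
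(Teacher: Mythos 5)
Your proof is correct and follows essentially the same route as the paper: a proof by contradiction that assumes $w_c$ has indegree at least $k+1$ and then uses Observation~\ref{obs_MID_notDXYZ} to exhibit a surviving vertex of indegree at most $k+1$, combined with Observation~\ref{obs_MID_atleastk} for the lower bound. The only (immaterial) difference is the choice of witness: the paper points to the vertices of $X \cup Y \cup Z$, which are never deleted and keep indegree exactly $k+1$, whereas you first argue that some $v_i \in V$ survives and retains indegree $k+1$ because its inneighbours all lie in $X \cup Y \cup Z$.
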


\begin{proof}
 Assume that $w_c$ has an indegree at least $k+1$.
 Since $w_c$ has to be the only vertex with minimum indegree, we have to delete each vertex with an indegree of at most $k+1$.
 Remember that each vertex in $X \cup Y \cup Z$ has an indegree of exactly $k+1$.
 However, no vertex in $X \cup Y \cup Z$ is part of the solution set (see Observation~\ref{obs_MID_notDXYZ}); a conflict.
\end{proof}

\begin{observation}
\label{obs_MID_onlyV}
 It holds that $M_d$ only contains vertices from $V$.
\end{observation}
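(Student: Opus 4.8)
The plan is to derive this from the two preceding observations together with a short counting argument on the deletions that touch $w_c$. First I would narrow down where $M_d$ can live: by the definition of \textsc{Minimum Indegree Deletion} we have $w_c \notin M_d$, and by Observation~\ref{obs_MID_notDXYZ} the set $M_d$ contains no vertex of $D \cup X \cup Y \cup Z$. Consequently $M_d \subseteq V \cup \bigcup_{i=1}^n S_i$, and the only thing left to show is that $M_d$ contains no vertex of $\bigcup_{i=1}^n S_i$.

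Next I would pin down the inneighbourhood of $w_c$ exactly. By construction the only arcs entering $w_c$ are those from the vertices of $V$, and since $|V| = n$ this already accounts for the full indegree $n$ of $w_c$; hence $N_{\neiin}(w_c) = V$. By Observation~\ref{obs_MID_exactk}, the vertex $w_c$ has indegree exactly $k$ in $G - M_d$, so exactly $n-k$ of its inneighbours have been removed, i.e.\ $|M_d \cap V| = n-k$.

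Finally, since $M_d$ is a solution set for $(G,w_c,n-k)$ we have $|M_d| \le n-k$, and combined with $|M_d \cap V| = n-k$ this forces $M_d = M_d \cap V \subseteq V$, which is precisely the claim. (Intuitively the same fact is visible directly: every vertex of $\bigcup_{i=1}^n S_i$ is a sink, so deleting such a vertex never decreases any indegree and only squanders one of the $n-k$ permitted deletions, leaving too few to push the indegree of $w_c$ down to $k$.)

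I do not expect any genuine obstacle here: the statement is essentially a bookkeeping consequence of Observations~\ref{obs_MID_notDXYZ} and~\ref{obs_MID_exactk}. The one point that deserves a moment of care is checking that $V$ is \emph{exactly} the inneighbourhood of $w_c$, not merely contained in it, so that the entire drop in the indegree of $w_c$ can legitimately be charged to deletions inside $V$.
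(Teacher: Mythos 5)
Your proof is correct and follows essentially the same route as the paper: Observation~\ref{obs_MID_exactk} forces $M_d$ to contain at least $n-k$ inneighbours of $w_c$ (all of which lie in $V$), and the size bound $|M_d| \le n-k$ then leaves no room for anything else. The invocation of Observation~\ref{obs_MID_notDXYZ} is harmless but not actually needed, as the counting argument alone already pins down $M_d$.
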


\begin{proof}
 In the original graph $G$ the distinguished vertex $w_c$ has indegree $n$.
 Due to Observation~\ref{obs_MID_exactk} the solution set $M_d$ must contain at least $n-k$ inneighbors of $w_c$.
 Due to the assumption that $M_d$ has size at most $n-k$ there is no other vertex in $M_d$.
\end{proof}

\begin{lemma}
\label{lemma_MID2DS}
 If $(G,w_c,n-k)$ is a yes-instance of \textsc{Minimum Indegree Deletion}, then $(G^*,k)$ is a yes-instance of \textsc{Dominating Set}.
\end{lemma}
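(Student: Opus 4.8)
The plan is to cash in the structural observations already proved. By Observation~\ref{obs_MID_onlyV} the solution set $M_d$ consists only of vertices of $V$, and by Observation~\ref{obs_MID_exactk} the vertex $w_c$ has indegree exactly $k$ in $G - M_d$. Since $w_c$ has indegree $n$ in $G$, deleting $M_d$ must remove exactly $n-k$ inneighbors of $w_c$; as $|M_d| \le n-k$ this forces $|M_d| = n-k$ and $M_d \subseteq V$, so precisely $k$ vertices of $V$ survive. I would then take as candidate dominating set
\[
 V^*_d := \{\, v^*_i \mid v_i \in V \setminus M_d \,\},
\]
a subset of $V^*$ of size exactly $k$.

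Next I would show $V^*_d$ dominates $G^*$. Fix an arbitrary $v^*_j \in V^*$ and look at the vertex $d_j \in D$. Because $M_d \subseteq V$, no vertex of $X \cup Y \cup Z$ is deleted, so $d_j$ still has all $k$ of its inneighbors in $X \cup Y \cup Z$ after the deletion. Since $w_c$ is the \emph{unique} vertex of minimum indegree in $G - M_d$ and that minimum equals $k$, the vertex $d_j$ must have indegree at least $k+1$ in $G - M_d$; hence at least one surviving inneighbor of $d_j$ lies outside $X \cup Y \cup Z$. The only other inneighbors $d_j$ can have are vertices $v_i \in V$ with an arc $v_i \to d_j$, and by construction such an arc exists if and only if $v^*_i$ dominates $v^*_j$. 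Therefore some $v_i \in V \setminus M_d$ satisfies ``$v^*_i$ dominates $v^*_j$'', i.e.\ $v^*_i \in V^*_d$ dominates $v^*_j$. As $v^*_j$ was arbitrary, $V^*_d$ is a dominating set of size $k$, so $(G^*,k)$ is a yes-instance of \textsc{Dominating Set}.

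This argument is essentially bookkeeping on indegrees, and I do not expect a serious obstacle. The one point requiring care is that deleting $M_d$ leaves the $k$ inneighbors of each $d_j$ in $X \cup Y \cup Z$ untouched---this is exactly where Observation~\ref{obs_MID_notDXYZ} together with Observation~\ref{obs_MID_onlyV} is used---and that the rescuing inneighbor of $d_j$ must be located among the \emph{surviving} vertices of $V$: one cannot simply invoke the self-domination arc $v_j \to d_j$, since $v_j$ itself may have been deleted, which is precisely why the count ``indegree $\ge k+1$, of which only $k$ come from $X\cup Y\cup Z$'' is the crux of the proof.
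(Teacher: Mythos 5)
Your proof is correct and follows essentially the same route as the paper: the same construction of $V^*_d$ from the surviving vertices of $V$, the same reliance on Observations~\ref{obs_MID_exactk}, \ref{obs_MID_notDXYZ}, and \ref{obs_MID_onlyV}, and the same indegree count on $d_j$ (exactly $k$ inneighbors from $X \cup Y \cup Z$ versus the required $k+1$). The only cosmetic difference is that you argue directly while the paper phrases the domination step as a proof by contradiction.
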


\begin{proof}
 We show that if $M_d$ is a solution set, then there is a dominating set $V^*_d \subseteq V^*$ with at most $k$ vertices.
 One can construct a dominating set $V^*_d$ with $|V^*_d|=k$ as follows:
 For each vertex $v_i \in V \setminus M_d$ add vertex $v^*_i$ to $V^*_d$.
 Due to Observations~\ref{obs_MID_exactk} and \ref{obs_MID_onlyV}, $V^*_d$ has size $k$.
 It remains to show that $V^*_d$ is a domination set.
 Assume that there is a vertex $v^*_f$ that is not dominated by any vertex in $V^*_d$.
 Thus, neither $v_f$ nor any $v_d \in N(v_f)$ is in $M_d$.
 Due to Observation~\ref{obs_MID_notDXYZ}, no vertex in $D$ was deleted.
 Due to the construction of $G$ there is no arc from any vertex in $V$ to the undominated vertex $d_f$.
 Hence, $d_f$ has indegree of $k$ and $w_c$ is not the only vertex with minimum indegree; a contradiction.
\end{proof}

Putting all together, we arrive at the following theorem:
\begin{theorem}
\label{MID_w2}
 \textsc{Minimum Indegree Deletion} is W[2]-hard with respect to the parameter $s_v$ as well as with respect to the parameter $s_a$.
\end{theorem}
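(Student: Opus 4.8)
The plan is to verify that the construction given above, together with the lemmata and observations already established, is precisely a parameterized reduction from \textsc{Dominating Set} (parameterized by the solution size $k$) to \textsc{Minimum Indegree Deletion}, and that the produced instance has small feedback vertex set \emph{and} small feedback arc set, so that W[2]-hardness transfers to both parameters at once. Concretely, I would check the three defining properties of a parameterized reduction for the map $(G^*,k)\mapsto(G,w_c,n-k)$: polynomial running time, dependence of the new parameter on $k$ only, and equivalence of instances.

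For the first property, the vertex set of $G$ consists of $w_c$, the $2n$ vertices of $V\cup D$, the $3(k+1)$ vertices of $X\cup Y\cup Z$, and the $6n^2$ vertices of $\bigcup_{i=1}^n S_i$, and all arc families (the domination arcs $v_i\to d_j$, the arcs into $w_c$, the triangle arcs among $X$, $Y$, $Z$, the ``filler'' arcs into $D$ and $V$, and the arcs into the $S_i$ as specified in Figure~\ref{Concrete assignement of the arcs}) can be written down in time polynomial in $|(G^*,k)|$; hence the reduction is computable in $f(k)\cdot\poly(|(G^*,k)|)$ time. For the second property, Observation~\ref{obs_MFVS_MFAS_sizes} tells us that $G$ admits a feedback vertex set of size at most $k+1$ and a feedback arc set of size at most $(k+1)^2$, so when the produced instance is viewed as parameterized by $s_v$ its parameter is at most $k+1$, and when it is viewed as parameterized by $s_a$ its parameter is at most $(k+1)^2$ — in both cases a function of $k$ alone, as required. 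For the third property, Lemma~\ref{lemma_DS2MID} gives the forward direction of the equivalence and Lemma~\ref{lemma_MID2DS} — resting on Observations~\ref{obs_MID_atleastk}--\ref{obs_MID_onlyV}, which pin any solution set down to exactly $n-k$ vertices of $V$ — gives the converse. Putting the three properties together yields a parameterized reduction with respect to $s_v$ and, \emph{verbatim the same reduction}, with respect to $s_a$; since \textsc{Dominating Set} parameterized by $k$ is W[2]-complete and parameterized reductions compose, the theorem follows.

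I do not expect a genuine obstacle at the level of the theorem statement itself: the real work — and the step I would treat as the crux — is the gadget design and the structural observations already carried out above, in particular the role of the sets $S_1,\dots,S_n$ in ``punishing'' the deletion of any vertex outside $V$ (so that a \textsc{Minimum Indegree Deletion} solution is forced to be the complement of a dominating set), and the role of the triangle $X\cup Y\cup Z$ in inflating every indegree to $k+1$ while keeping the feedback vertex and feedback arc numbers bounded by a function of $k$. Given those ingredients, the proof of the theorem is just the assembly: confirm the time bound, read off the two parameter bounds from Observation~\ref{obs_MFVS_MFAS_sizes}, invoke Lemmata~\ref{lemma_DS2MID} and~\ref{lemma_MID2DS}, and appeal to W[2]-completeness of \textsc{Dominating Set}.
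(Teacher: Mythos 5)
Your proposal follows exactly the same route as the paper's own proof: it verifies that the map $(G^*,k)\mapsto(G,w_c,n-k)$ is a parameterized reduction by checking polynomial running time, bounding the new parameters $s_v$ and $s_a$ via Observation~\ref{obs_MFVS_MFAS_sizes}, and obtaining the equivalence from Lemma~\ref{lemma_DS2MID} and Lemma~\ref{lemma_MID2DS}. It is correct and, if anything, slightly more explicit than the paper about the instance size and the fact that a single construction serves both parameters.
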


\begin{proof}
We show that the transformation from $(G^*,k)$ to $(G,w_c,n-k)$ is a parameterized reduction:
By construction, the transformation can be executed in $f(k) \cdot \poly(|x|)$ time with $f(k)$ being a function only depending on $k$ and $|x|$ being the size of the input.
Due to Observation~\ref{obs_MFVS_MFAS_sizes} the parameter $s_v$ ($s_a$) is bounded by a function only depending on $k$.
The equivalence follows Lemma~\ref{lemma_DS2MID} and Lemma~\ref{lemma_MID2DS}.
\end{proof}

Theorem~\ref{MID_w2} provides a relative lower bound for the parameterized complexity with respect to the feedback set parameters $s_v$ and $s_a$.
An upper bound, namely the membership in $\XP$, is a corollary of the main result of the next section.

\section{Feedback vertex set size and solution size as combined parameter}
\label{Feedback vertex set size and solution size as combined parameter}
Taking $s_v$:=``size of a feedback vertex set'' ($s_a$:=``size of a feedback arc set'') as parameter \textsc{Minimum Indegree Deletion} does not yield fixed-parameter-tractability.
As mentioned in Section~\ref{Known results for Minimum Indegree Deletion}, $k$:=``number of vertices to delete'' as a parameter also leads to W[2]-hardness.
So, it makes sense to consider the combined parameters ($s_v$,$k$) as well as ($s_a$,$k$) for \textsc{Minimum Indegree Deletion}.
In this section, we show that each of these combined parameters leads to a fixed-parameter algorithm.
The algorithm described in the following gets as input a directed graph, a distinguished vertex and a positive integer $k$ and outputs a solution set of size $k$.
Hence, with an additional factor $k$ to the running time it also solves the corresponding minimization problem, where one asks for a minimum-size solution.
The algorithm does not need to know or compute $s_v$ or even a feedback vertex set.

We start with looking at the acyclic special case to describe the basic idea of the algorithm.
One motivation to investigate \textsc{Minimum Indegree Deletion} with respect to the parameters that measure the ``degree of acyclicity'' is that the problem is $\NP$-hard in general but polynomial-time solvable in the acyclic special case~\cite{BU09}. 
The corresponding algorithm is based on the fact that a directed acyclic graph always contains a vertex with indegree zero.
It is an exhaustive application of the following step:
If there is a vertex ($\neq w_c$) with indegree zero, then remove it.
This is trivially correct, since we want that $w_c$ is the only vertex with minimum indegree (zero).

There are two reasons why we cannot apply this algorithm to general directed graphs:
\begin{enumerate}
 \item Adapting the idea directly by ``removing every indegree-zero vertex except $w_c$'' is not correct, because it is possible that there is no vertex with indegree zero.
 \item Adapting the idea in a more generalized way as ``removing every vertex with minimum indegree except $w_c$'' is not correct.
       Unfortunately, it is even possible that we do not need to remove all or even any vertex that has minimum indegree in the input graph.
\end{enumerate}
The first point is trivial.
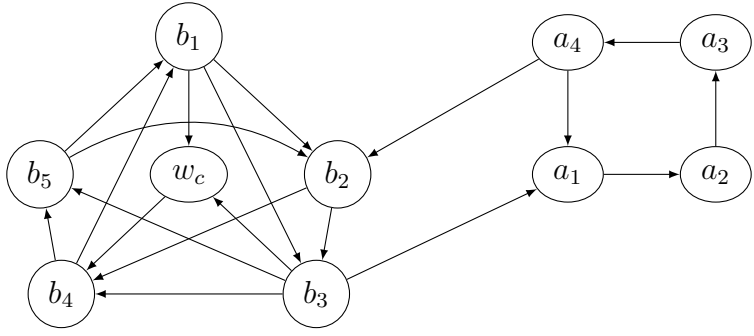
\begin{figure*}
\begin{center}
 \begin{tikzpicture}[>=latex,join=bevel,scale=1.0]
  \node (w_c) [draw,transform shape,ellipse] {$w_c$};

  \node (b_1) [above=of w_c] [draw,transform shape,ellipse] {$b_1$};
  \node (b_2) [right=of w_c] [draw,transform shape,ellipse] {$b_2$};
  \node (b_3) [below right=of w_c] [draw,transform shape,ellipse] {$b_3$};
  \node (b_4) [below left=of w_c] [draw,transform shape,ellipse] {$b_4$};
  \node (b_5) [left=of w_c] [draw,transform shape,ellipse] {$b_5$};

  \node (a_1) [right=4 of w_c] [draw,transform shape,ellipse] {$a_1$};
  \node (a_2) [right=of a_1] [draw,transform shape,ellipse] {$a_2$};
  \node (a_3) [above=of a_2] [draw,transform shape,ellipse] {$a_3$};
  \node (a_4) [above=of a_1] [draw,transform shape,ellipse] {$a_4$};

  \draw [->] (b_1) to (w_c);
  \draw [->] (b_3) to (w_c);
  \draw [->] (w_c) to (b_4);

  \draw [->] (b_1) to (b_2);
  \draw [->] (b_1) to (b_3);
  \draw [->] (b_2) to (b_3);
  \draw [->] (b_2) to (b_4);
  \draw [->] (b_3) to (b_4);
  \draw [->] (b_3) to (b_5);
  \draw [->] (b_4) to (b_5);
  \draw [->] (b_4) to (b_1);
  \draw [->] (b_5) to (b_1);
  \draw [->] (b_5) to [bend left=30] (b_2);

  \draw [->] (a_1) to (a_2);
  \draw [->] (a_2) to (a_3);
  \draw [->] (a_3) to (a_4);
  \draw [->] (a_4) to (a_1);

  \draw [->] (a_4) to (b_2);
  \draw [->] (b_3) to (a_1);

 \end{tikzpicture}
\end{center}
\caption{Directed cyclic graph where a minimum-size solution set of \textsc{Minimum Indegree Deletion} consists of $b_1$ and $b_3$ whereas $a_2, a_3$, and $a_4$ have minimum indegree of $1$.}
\label{Example of MID with a directed cyclic graph.}
\end{figure*}
The second point is illustrated in Figure~\ref{Example of MID with a directed cyclic graph.}.
As we see in this example another strategy to make $w_c$ the only vertex with minimum indegree (besides removing vertices with smaller indegree) is to remove inneighbors of $w_c$ to directly decrease its indegree.
This leads to the algorithm \texttt{MinimumIndegreeDeletion} in Figure~\ref{Fixed-parameter algorithm that solves that solves Minimum Indegree Deletion with respect to the parameter (s_v,k).}:
\begin{figure}
\begin{algorithmic}[1]
\Procedure{\texttt{MinimumIndegreeDeletion}}{}
\For{each $i:=0$ to $s_v$}
  \If{$|N(w_c)| \leq i+k$}
    \For{each size-$i$ subset $U \subseteq N(w_c)$}
      \State Remove $D:=N(w_c) \setminus U$ from $G$.
      \State $M_d:=D$
      \While{there is a vertex $d \neq w_c$ with indegree at most $i$}
        \State Remove $d$ from $G$.
        \State $M_d:=M_d \cup \{d\}$
      \EndWhile
      \If{$|M_d| \leq k$}
        \State \Return $M_d$
      \EndIf
    \EndFor
  \EndIf
\EndFor
\State \Return ``no'';
\EndProcedure
\end{algorithmic}
\caption{Fixed-parameter algorithm that solves \textsc{Minimum Indegree Deletion} with respect to the parameter $(s_v,k)$. The variable $i$ represents the final indegree of $w_c$ in the solution graph.}
\label{Fixed-parameter algorithm that solves that solves Minimum Indegree Deletion with respect to the parameter (s_v,k).} 
\end{figure}
To analyze the algorithm we take a closer look at our directed graph instance.
First, we see that the minimum indegree in the solution graph is bounded by the size of any feedback vertex set.
\begin{lemma}
\label{lemma_min_indegree_leq_MFVS}
 Let $G=(V,E)$ denote a directed graph.
 Let $d_{min}$ denote the minimum indegree of the vertices in $V$ and let $V_f \subseteq V$ be a feedback vertex set.
 Then, $|V_f| \geq d_{min}$.
\end{lemma}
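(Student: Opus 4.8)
The statement claims that any feedback vertex set $V_f$ of a directed graph $G$ has size at least $d_{min}$, the minimum indegree over all vertices. The plan is to argue by contradiction: suppose $|V_f| < d_{min}$, and use the fact that $G - V_f$ is acyclic to derive that some vertex has too small an indegree already in $G$, contradicting the definition of $d_{min}$.

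\textbf{Key steps.} First I would invoke the basic structural fact that a \emph{non-empty} directed acyclic graph always contains a vertex with indegree zero (this is exactly the fact the paper already uses to justify the polynomial-time algorithm on acyclic graphs). Apply this to $H := G - V_f$, which is acyclic by definition of a feedback vertex set. If $H$ is non-empty, it contains a vertex $v$ whose indegree \emph{within $H$} is zero. Now the crucial observation: every inneighbor of $v$ in the original graph $G$ must lie in $V_f$, since any inneighbor not in $V_f$ would survive in $H$ and give $v$ positive indegree there. Hence the indegree of $v$ in $G$ is at most $|V_f|$. Therefore $d_{min} \le \deg_{\neiin}^G(v) \le |V_f|$, which gives $|V_f| \ge d_{min}$ as required. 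The only edge case is when $H$ is empty, i.e. $V_f = V$; then $|V_f| = |V| \ge d_{min}$ trivially (the minimum indegree of a vertex cannot exceed $|V|$, in fact not even $|V|-1$ since the graph is simple and loop-free), so the bound holds there as well.

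\textbf{Main obstacle.} There is no serious obstacle here; this is a short counting argument. The one point that needs a moment of care is making sure the ``sink/source in a DAG'' fact is stated in the direction actually needed — namely that an acyclic directed graph has a vertex of \emph{indegree} zero (a source), which follows from the standard topological-ordering argument (take the first vertex in any topological order). I would simply cite this as well-known, consistent with how the paper invokes the analogous fact earlier. Everything else is the one-line inequality $d_{min} \le \deg^G_{\neiin}(v) \le |V_f|$.
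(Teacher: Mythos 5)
Your proof is correct and follows essentially the same argument as the paper: take a vertex of indegree zero in the acyclic graph $G - V_f$ and observe that all of its $G$-inneighbors must lie in $V_f$, hence $|V_f| \geq d_{min}$. The only difference is that you explicitly handle the degenerate case $V_f = V$, which the paper's proof silently skips; that is a minor but welcome addition.
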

\begin{proof}
 There exists a vertex $v_z$ with indegree zero in $G - V_f$, because $G - V_f$ is acyclic.
 Since $G$ has no vertex with indegree smaller than $d_{min}$, the feedback vertex set $V_f$ must contain at least $d_{min}$ inneighbors of $v_z$.
 Thus, $|V_f| \geq d_{min}$.
\end{proof}
A consequence is that the final indegree of $w_c$ in a solution graph is at most the size of a feedback vertex set $s_v$.
This holds due to Lemma~\ref{lemma_min_indegree_leq_MFVS} and the fact that a feedback vertex set in $G$ implies a feedback vertex set in $G - M_d$ of at most the same size.
Hence, the loop in line 2 of the algorithm \texttt{MinimumIndegreeDeletion} is defined correctly.
The main part (lines 4-14) of \texttt{MinimumIndegreeDeletion} is exhaustive exploration of the search space.
It remains to show that the condition in line 3 is correct.
Assume towards a contradiction that $|N(w_c)| \geq k+i$ in a yes-instance of \textsc{Minimum Indegree Deletion}.
Removing all but $i$ neighbors implies a solution set of size greater than $k$; a contradiction.
Hence, the algorithm is correct and we arrive at the following theorem:
\begin{theorem}
\label{theorem_combinedFPT}
 \textsc{Minimum Indegree Deletion} with respect to the combined parameter $(s_v, k)$, with $s_v$ being the size of a feedback vertex set and $k$ being the size of a solution set, is solvable in $O(s_v \cdot (k+1)^{s_v} \cdot n^2)$ time.
\end{theorem}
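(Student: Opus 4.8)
The plan is to establish correctness and running time of the algorithm \texttt{MinimumIndegreeDeletion} from Figure~\ref{Fixed-parameter algorithm that solves that solves Minimum Indegree Deletion with respect to the parameter (s_v,k).}. Most of the structural groundwork is already in place: Lemma~\ref{lemma_min_indegree_leq_MFVS} bounds the minimum indegree of any (sub)graph by the size of a feedback vertex set, and the surrounding text has argued that the loop bounds in lines~2 and~3 are sound. What remains is to verify that the search is exhaustive --- i.e.\ that every yes-instance is detected --- and then to count operations.

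First I would argue soundness: if the algorithm returns a set $M_d$, then $w_c$ has indegree exactly $i$ in $G - M_d$ (all but $i$ inneighbors of $w_c$ were removed in line~5 and none was re-added), and the \textbf{while}-loop in lines~7--10 terminates only when every vertex $\neq w_c$ has indegree at least $i+1$; since $|M_d|\le k$ was checked in line~11, this is a valid solution set. Next I would argue completeness. Suppose $(G,w_c,k)$ is a yes-instance with solution set $M^\star$, and let $i^\star$ be the indegree of $w_c$ in $G - M^\star$. By Lemma~\ref{lemma_min_indegree_leq_MFVS} applied to $G - M^\star$, and the observation that a feedback vertex set of $G$ induces one of no larger size in $G - M^\star$, we get $i^\star \le s_v$, so the iteration $i = i^\star$ is reached. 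Moreover $M^\star$ must contain all but exactly $i^\star$ of the inneighbors of $w_c$, so $|N(w_c)| \le |M^\star| + i^\star \le k + i^\star$ and line~3 passes; the set $U^\star := N(w_c)\setminus M^\star$ of retained inneighbors has size $i^\star$, hence is one of the size-$i$ subsets enumerated in line~4. For that choice the algorithm removes $D = N(w_c)\setminus U^\star \subseteq M^\star$ and then greedily removes vertices of indegree $\le i^\star$. The key claim is that this greedy phase never removes a vertex outside $M^\star$ and that it does remove every vertex of $M^\star \setminus D$: any vertex kept by the optimal solution has indegree $\ge i^\star + 1$ there, and since the greedy process only deletes vertices (never adds inneighbors), its indegree stays $\ge i^\star+1$ throughout, so it is never a candidate; conversely every vertex of $M^\star\setminus D$ lies outside $w_c$ and, being deleted in the optimal solution while the final minimum indegree is $i^\star$, can be shown by the same monotonicity argument to reach indegree $\le i^\star$ and thus be picked up. Hence the computed $M_d$ satisfies $M_d \subseteq M^\star$, so $|M_d|\le k$, and line~11 succeeds.

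I expect the main obstacle to be this completeness argument --- specifically, pinning down precisely why greedily removing indegree-$\le i^\star$ vertices, starting from the correct choice of retained $w_c$-inneighbors, converges to a valid (indeed minimal) solution rather than over- or under-deleting. The cleanest way is an invariant: at every point of the \textbf{while}-loop, the current deleted set is contained in $M^\star$, proved by induction on loop iterations using that if $d\neq w_c$ has indegree $\le i^\star$ in the current graph $G' \supseteq G - M^\star$ (as a deletion-subgraph relationship), then $d$ has indegree $\le i^\star$ in $G-M^\star$ as well, so $d\in M^\star$ since $w_c$ is the \emph{unique} minimum-indegree vertex there. One must also check termination of the \textbf{while}-loop (each iteration removes a vertex, so at most $n$ iterations) and that when it halts the graph has $w_c$ as unique minimum-indegree vertex (immediate from the halting condition plus $\deg_{\mathrm{in}}(w_c)=i^\star$).

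Finally, the running-time bound: the outer loop runs $s_v + 1$ times; for each $i$, line~4 enumerates at most $\binom{|N(w_c)|}{i} \le \binom{k+i}{i} \le (k+1)^{i} \le (k+1)^{s_v}$ subsets (using $|N(w_c)|\le k+i$ from line~3); for each subset, line~5 is $O(n)$, and the \textbf{while}-loop performs at most $n$ deletions, each requiring an $O(n)$ scan to find a vertex of small enough indegree and to update degrees, for $O(n^2)$ total. Multiplying gives $O(s_v \cdot (k+1)^{s_v} \cdot n^2)$, as claimed. I would remark that the algorithm never needs to know $s_v$ or a feedback vertex set itself --- one may simply let $i$ range up to $n$; the bound $i^\star \le s_v$ is used only in the analysis --- which matches the comment in the text preceding the algorithm.
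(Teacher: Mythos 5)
Your proposal is correct and follows essentially the same route as the paper: the same algorithm, the same justification of the loop bounds via Lemma~\ref{lemma_min_indegree_leq_MFVS} and the condition $|N(w_c)|\le k+i$, and the same count $\binom{s_v+k}{k}\le(k+1)^{s_v}$ for the subsets tried. You actually supply more detail than the paper, whose proof of the theorem only says the correctness ``was already shown'' and rests on the informal remark that lines 4--14 exhaustively explore the search space; your invariant that the greedily deleted set stays inside $M^\star$ is exactly the completeness argument left implicit there (though your side claim that every vertex of $M^\star\setminus D$ is eventually removed is unnecessary for the conclusion and need not hold when $M^\star$ is not minimal).
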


\begin{proof}
 The correctness of the algorithm was already shown.
 It remains to analyze the running time.
 In the worst case, we have to start at most $s_v$ times with the first (out-most) loop.
 In the second loop, we try at most $\binom{s_v+k}{k}$ subsets.
 Thus, we have
 \begin{displaymath}
  \binom{s_v+k}{k} = \frac{(s_v+k)!}{k! \cdot (s_v+k-k)!} = \frac{\prod_{i=1}^{s_v} k+i}{s_v!} \leq \left(\frac{k+1}{1}\right)^{s_v}
 \end{displaymath}
 subsets.
 The third loop can be done in $O(n^2)$ time.
\end{proof}


Looking at the second loop of the algorithm helps us to see that \textsc{Minimum Indegree Deletion} with respect to the single parameter $s_v$ is actually at least in $\XP$.
Branching into all possible subsets of $i \leq s_v$ inneighbors of $w_c$ is of course possible in $O(n^i)$ time.
In a more formal way, we arrive at the following corollary:
\begin{corollary}
 The problem \textsc{Minimum Indegree Deletion} with respect to the parameter $s_v:=$``size of a feedback vertex set'' is in $\XP$.
\end{corollary}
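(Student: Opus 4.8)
The plan is to obtain the statement as a genuine corollary of Theorem~\ref{theorem_combinedFPT}, using only the trivial observation that the solution size $k$ never needs to exceed the number of vertices. First I would dispose of the degenerate case: if $k \geq |W|-1$, then deleting $W \setminus \{w_c\}$ leaves the single vertex $w_c$, which is vacuously the only vertex of minimum indegree, so the instance is a yes-instance and we can answer it immediately; hence from now on we may assume $k \leq |W|-2 < n$.

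With this assumption in place, I would simply run the algorithm \texttt{MinimumIndegreeDeletion} of Section~\ref{Feedback vertex set size and solution size as combined parameter} on the input $(D,w_c,k)$, using the parameter value $s_v$ to bound its outer loop; recall that by Lemma~\ref{lemma_min_indegree_leq_MFVS} the final indegree of $w_c$ in any solution graph is at most $s_v$, so this loop is exhaustive, and correctness has already been established there. Its running time, analyzed as $O(s_v\cdot(k+1)^{s_v}\cdot n^2)$, becomes $O\!\bigl(s_v\cdot n^{s_v}\cdot n^2\bigr)=O\!\bigl(s_v\cdot n^{s_v+2}\bigr)$ once we substitute $k+1\le n$. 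This is of the form $f(s_v)\cdot|x|^{g(s_v)}$ with $f(s_v)=s_v$ and $g(s_v)=s_v+2$, which is exactly the definition of membership in $\XP$. Equivalently — and this is the formulation hinted at just before the corollary — one may drop the guard $|N(w_c)|\le i+k$ in the algorithm and, for each $i\in\{0,\dots,s_v\}$, branch directly over all $\binom{|N(w_c)|}{i}\le n^{s_v}$ size-$i$ subsets $U\subseteq N(w_c)$ of inneighbors of $w_c$; the ensuing greedy deletion phase and the acceptance test $|M_d|\le k$ are unchanged, so correctness carries over verbatim from the proof of Theorem~\ref{theorem_combinedFPT}.

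There is essentially no real obstacle here, which is why the statement is phrased as a corollary; the only points deserving a word are that the parameter value $s_v$ is supplied as part of the parameterized instance, so no actual feedback vertex set needs to be computed — any crude upper bound on the feedback vertex set size suffices to bound the outer loop — and that the reduction $k<n$ (equivalently, the trivial handling of very large $k$) is what keeps $k$ out of the exponent. Everything else, in particular the correctness of the forced greedy deletion and the bound $i\le s_v$ on the final indegree of $w_c$, has already been done in Section~\ref{Feedback vertex set size and solution size as combined parameter}.
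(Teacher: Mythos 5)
Your proposal is correct and follows essentially the same route as the paper: both derive XP membership directly from Theorem~\ref{theorem_combinedFPT} by observing that the factor $(k+1)^{s_v}$ in the running time can be bounded by a power of the input size, so that the exponent depends only on $s_v$. Your explicit handling of the degenerate case $k \geq |W|-1$ is a small extra precaution the paper skips (it simply notes that $k+1$, $n$, and $s_v$ are bounded by the input size), but it changes nothing substantive.
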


\begin{proof}
 Since $(k+1)$, $n$, and $s_v$ are upper bounded by the input size, this follows directly from Theorem~\ref{theorem_combinedFPT}.
 The running time of \texttt{MinimumIndegreeDeletion} is bounded by $O(|x|^{s_v+3})$ with $|x|$ being the input size.
\end{proof}

\chapter{Minimum Degree Deletion}
\label{Minimum Degree Deletion}
In this section, we investigate the parameterized complexity of \textsc{Minimum Degree Deletion} which can be considered as undirected variant of \textsc{Minimum Indegree Deletion}.
It models electoral control by removing candidates of a special voting rule, see Section~\ref{MDD_intro}.
In Chapter~\ref{Minimum Indegree Deletion}, we showed fixed-parameter intractability of \textsc{Minimum Indegree Deletion} with respect to the parameters $s_v:=$``size of a feedback vertex set'' and $s_a:=$``size of a feedback arc set'', respectively.
In contrast, we show in this chapter fixed-parameter tractability even for the stronger parameter ``treewidth of the input graph''.
Furthermore, we differentiate the parameters by comparing their kernel sizes.
Firstly, we show that \textsc{Minimum Degree Deletion} is as well as its directed variant \textsc{Minimum Indegree Deletion} fixed-parameter intractable with respect to the parameter $k:=$``number of vertices to delete''.
An overview about the kernel sizes and the parameterized complexity with respect to the (combined) parameters is given in Figure~\ref{MDD results.}.
\begin{figure}[!b]
\begin{center}
Parameterized complexity:\\
\hspace{0.7cm}
 \begin{tabular}{ r | c c c }
  ~ & parameter description & complexity \\
  \hline
  $t_w$ & treewidth of the input graph & \textbf{FPT} \\
  $s_v$ & size of a feedback vertex set & \textbf{FPT} \\
  $s_a$ & size of a feedback arc set & \textbf{FPT} \\
  $k$ & size of a solution set & \textbf{W[1]-hard, in XP} \\
  $d$ & maximum degree of a vertex & FPT \\
 \end{tabular}\\
\end{center}
\begin{center}
Kernel sizes:\\
\hspace{0.7cm}
 \begin{tabular}{ r | c c }
  ~ & single parameter & combined with $k$ \\
  \hline
  $t_w$ & \textbf{no polynomial} & \textbf{no polynomial} \\
  $s_v$ & \textbf{no polynomial} & \textbf{no polynomial} \\
  $s_a$ & \textbf{vertex-linear} & \textbf{vertex-linear} \\
  $k$ & \textbf{no kernel} & ~ \\
  $d$ & open & open \\
 \end{tabular}\\
\end{center}
 \caption{Overview of the parameterized complexity of \textsc{Minimum Degree Deletion} and the corresponding kernel sizes.
         New results are in boldface.
         The results for the parameter $d:=$``maximum degree of a vertex'' can be directly transferred from the results for the directed variant in \cite{BU09}.
         A vertex-linear kernel is a problem kernel whose size is linear in the number of vertices.}
 \label{MDD results.}
\end{figure}

\section{Solution size as parameter}
\label{Solution size as parameter for Minimum Degree Deletion}
In this section, we analyze the parameterized complexity of \textsc{Minimum Degree Deletion} with respect to the parameter $k:=$``number of vertices to delete''.
Using methods similar to those in Section~\ref{Feedback vertex/arc set size as parameter} we show W[1]-hardness by presenting a parameterized reduction from \textsc{Independent Set} with respect to the parameter ``independent set size''.
Below, we give a description and illustration of the reduction.

We first show that one can assume that each \textsc{Independent Set} instance has an even number of edges.
Let $G=(V,E)$ be an undirected graph with $n:=|V|$ and $m:=|E|$.
One can transform each instance with an odd value of $m$ to a new instance with an even value of $m$ such that the new instance is a yes-instance if and only if the original instance is a yes-instance and the parameter value does not change.
To this end, we have to consider two cases:
The first case is that the number of vertices is odd.
Then we only have to add a new vertex and connect it to each vertex of the original graph.
The second case is that the number of vertices is even.
Here, we have to add a clique of three new vertices and connect each of these vertices of each vertex of the original graph.
So, we get $3n + 3$ new edges which is odd since $n$ is even.
Trivially, none of the new vertices will ever be a part of an independent set of size at least two\footnote{Every graph with at least one vertex has an independent set of size one.} in both cases.

\begin{figure*}
\begin{tikzpicture}[>=latex,join=bevel,scale=0.9]
  \node (w_c) [draw,ellipse] {$w_c$};

  \node (v_1) [below right=2 of w_c] [draw,transform shape,ellipse] {$v_1$};
  \node (v_1_bl) [below left=0.2 of v_1] {};
  \node (v_2) [right=of v_1] [draw,transform shape,ellipse] {$v_2$};
  \node (v_dots) [right=of v_2] [transform shape] {$\dots$};
  \node (v_n) [right=of v_dots] [draw,transform shape,ellipse] {$v_n$};

  \node (C_v_1) [below=1.5 of v_1_bl] [draw,transform shape,star,star points=7,star point ratio=0.8,thick] {$C_{v_1}$};
  \node (C_v_2) [right=of C_v_1] [draw,transform shape,star,star points=7,star point ratio=0.8,thick] {$C_{v_2}$};
  \node (C_v_dots) [right=of C_v_2] [transform shape,thick] {$\dots$};
  \node (C_v_n) [right=of C_v_dots] [draw,transform shape,star,star points=7,star point ratio=0.8,thick] {$C_{v_n}$};

  \node (C'_v_1) [below=of C_v_1] [draw,transform shape,star,star points=7,star point ratio=0.8,thick] {$C_{1}$};
  \node (C'_v_2) [below=of C_v_2] [draw,transform shape,star,star points=7,star point ratio=0.8,thick] {$C_{2}$};
  \node (C'_v_dots) [right=of C'_v_2] [transform shape,thick] {$\dots$};
  \node (C'_v_n) [below=of C_v_n] [draw,transform shape,star,star points=7,star point ratio=0.8,thick] {$C_{n}$};

  \node (e_1) [above=of v_1] [draw,transform shape,ellipse] {$e_1$};
  \node (e_1_bl) [below left=0.2 of e_1] {};
  \node (e_2) [right=of e_1] [draw,transform shape,ellipse] {$e_2$};
  \node (e_dots) [right=of e_2] [transform shape] {$\dots$};
  \node (e_m) [right=of e_dots] [draw,transform shape,ellipse] {$e_m$};

  \node (C_e_1) [above=2.5 of e_1_bl] [draw,transform shape,star,star points=5,star point ratio=0.8,thick] {$C_{e_1}$};
  \node (C_e_2) [right=of C_e_1] [draw,transform shape,star,star points=5,star point ratio=0.8,thick] {$C_{e_2}$};
  \node (C_e_dots) [right=of C_e_2] [transform shape,thick] {$\dots$};
  \node (C_e_m) [right=of C_e_dots] [draw,transform shape,star,star points=5,star point ratio=0.8,thick] {$C_{e_m}$};


  \node (C_Expl2) [right=0.5 of C_e_m] [draw,transform shape,text width=5cm,rectangle,dotted,text justified]
                  {Exactly $n-k$ vertices in $C_x$ with $x \in \{v_1,\dots,v_n\} \cup \{e_1,\dots,e_m\}$ are connected to the vertex $x$.};
  \node (C_Expl3) [above=of C_e_dots] [draw,transform shape,text width=15cm,rectangle,dotted,text justified]
                  {Each vertex from $C_{e_j}$ with $j \in \{1,3,5,\dots,m-1\}$ is connected to exactly one vertex from $C_{e_{j+1}}$.};
  \node (C_Expl4) [below=1.5 of C_Expl2] [draw,transform shape,text width=5cm,rectangle,dotted,text justified]
                  {There is an edge between $e_x$ and $v_y$ if and only if $v^*_y$ is incident to $e^*_x$};
  \node (VE_Expl) [right=0.5 of C_v_n] [draw,transform shape,text width=5cm,rectangle,dotted,text justified]
                  {Each vertex from $C_{v_i}$ with $i \in \{1,2,\dots,n\}$ is connected to exactly one vertex from $C_i$.};

  \draw [-] (v_1) to [out=150,in=290] (w_c);
  \draw [-] (v_2) .. controls +(195:3.5) .. (w_c);
  \draw [-] (v_n) .. controls +(190:8) .. (w_c);

  \draw [-,dotted] (v_1) to (e_1);
  \draw [-,dotted] (v_2) to (e_1);
  \draw [-,dotted] (v_1) to (e_2);
  \draw [-,dotted] (v_n) to (e_2);

  \draw [-,thick] (C_v_1) --node[below left] {$n-k$} (v_1);
  \draw [-,thick] (C_v_2) --node[below left] {$n-k$} (v_2);
  \draw [-,thick] (C_v_n) --node[below left] {$n-k$} (v_n);


  \draw [-,thick] (C_v_1) --node[left] {$n-k+1$} (C'_v_1);
  \draw [-,thick] (C_v_2) --node[left] {$n-k+1$} (C'_v_2);
  \draw [-,thick] (C_v_n) --node[left] {$n-k+1$} (C'_v_n);

  \draw [-,thick] (C_e_1) --node[left] {$n-k$} (e_1);
  \draw [-,thick] (C_e_2) --node[left] {$n-k$} (e_2);
  \draw [-,thick] (C_e_m) --node[left] {$n-k$} (e_m);

  \draw [-,thick] (C_e_1) to [bend left] node[above] {$n-k$} (C_e_2);

 \end{tikzpicture}
 \caption{\textsc{Minimum Degree Deletion} instance obtained from a parameterized reduction from an \textsc{Independent Set} with an even number of edges.
          Each star $C_x$ with $x \in \{v_1,\dots,v_n\}$ represents a clique with $n-k$ vertices.
          Each star $C_x$ with $x \in \{1,\dots,n\} \cup \{e_1,\dots,e_m\}$ represents a clique with $n-k+1$ vertices.}
 \label{Reduction from Independent Set (with parameter solution size) to Minimum Degree Deletion (with parameter solution size)}
\end{figure*}

\paragraph{Parameterized reduction.}
The following reduction is illustrated in Figure~\ref{Reduction from Independent Set (with parameter solution size) to Minimum Degree Deletion (with parameter solution size)}.
Given an \textsc{Independent Set} instance $(G^*=(V^*,E^*),k)$, with $V^*=\{v^*_1,v^*_2,\dots,v^*_n\}$ and $E^*=\{e^*_1,e^*_2,\dots,e^*_m\}$,
we construct an undirected graph $G$, with a distinguished vertex~$w_c$ such that $(G,w_c,k)$ is a yes-instance of \textsc{Minimum Degree Deletion} if and only if $(G^*,k)$ is a yes-instance of \textsc{Independent Set}.
The vertex set of $G$ consists of $w_c$ and the union of the following disjoint vertex sets:
\begin{itemize}
 \item $V$ containing one vertex for every vertex in $V^*$ and $E$ containing one vertex for every edge in $E^*$.
  \begin{itemize}
   \item $V:=\{v_i \mid i\in\{1,\dots,n\}\}$ represents the set of vertices which is illustrated by the nodes $v_1$, $v_2$ and $v_n$ in Figure~\ref{Reduction from Independent Set (with parameter solution size) to Minimum Degree Deletion (with parameter solution size)}.
   \item $E:=\{e_i \mid i\in\{1,\dots,m\}\}$ represents the set of ``connections between vertices'' which is illustrated by the nodes $e_1$, $e_2$, and $e_m$ in Figure~\ref{Reduction from Independent Set (with parameter solution size) to Minimum Degree Deletion (with parameter solution size)}.
  \end{itemize}
 \item For each vertex $v_i \in V$ there are two cliques $C_{v_i}$ and $C_{i}$ of size $n-k+1$, and
       for each vertex $e_i \in E$ there is one clique $C_{e_i}$ of size $n-k$.
       These cliques are illustrated by stars in Figure~\ref{Reduction from Independent Set (with parameter solution size) to Minimum Degree Deletion (with parameter solution size)}.
\end{itemize}
The edges (besides inner-clique edges) of $G$ are drawn as follows:
\begin{itemize}
 \item There is an edge between $v_i$ and $e_j$ if and only if $v^*_i$ is incident to $e^*_j$.
       These edges are illustrated by dotted lines in Figure~\ref{Reduction from Independent Set (with parameter solution size) to Minimum Degree Deletion (with parameter solution size)}.
 \item The distinguished vertex $w_c$ is connected to each vertex in $V$.
       The corresponding edges are illustrated by thin lines between the node $w_c$ and the nodes $v_1$, $v_2$, and $v_n$ in Figure~\ref{Reduction from Independent Set (with parameter solution size) to Minimum Degree Deletion (with parameter solution size)}.
 \item Each vertex of every clique $C_{e_j}$ with $j \in \{1,3,\dots,m-1\}$ is connected to exactly one (different) vertex in $C_{e_{j+1}}$.
       (Note that $m-1$ is odd.)
       The corresponding edges are illustrated by a fat line between $C_{e_1}$ and $C_{e_2}$ which is labeled with $n-k$ in Figure~\ref{Reduction from Independent Set (with parameter solution size) to Minimum Degree Deletion (with parameter solution size)}.
 \item Exactly $n-k$ vertices of each clique $C_{x}$  with $x \in \{v_1,\dots,v_n,e_1,\dots,e_m\}$ are connected to the vertex $x$.
       The corresponding edges are illustrated by the remaining fat lines which are labeled with $n-k$ in Figure~\ref{Reduction from Independent Set (with parameter solution size) to Minimum Degree Deletion (with parameter solution size)}.
 \item Each vertex of every clique $C_{v_i}$ with $i \in \{1,\dots,n\}$ is connected to exactly one (different) vertex in $C_i$.
       The corresponding edges are illustrated by fat lines which are labeled with $n-k+1$ in Figure~\ref{Reduction from Independent Set (with parameter solution size) to Minimum Degree Deletion (with parameter solution size)}.
\end{itemize}
Our goal is to ensure that an optimal solution deletes $k$ vertices from $V$ and no other vertex.
Therefore, we set the degree of $w_c$ to $n$ and for each other vertex to at least $n-k+1$.
Each vertex in $C_i$ with $i \in \{1,\dots,n\}$ and $C_{e_j}$ with $j \in \{1,\dots,m\}$ has degree $n-k+1$ and each vertex in $C_{v_i}$ with $i \in \{1,\dots,n\}$ has degree $n-k+1$ or $n-k+2$.
Each vertex $e_j$ with $j \in \{1,\dots,m\}$ has degree $n-k+2$ and each vertex $v_i$ with $i \in \{1,\dots,n\}$ has degree at least $n-k+1$.
This finishes the description of the construction.
Now, we give several lemmata and observations to prove the correctness of the construction, that is, $(G^*,k)$ is a yes-instance of \textsc{Independent Set} if and only if $(G,w_c,k)$ is a yes-instance of \textsc{Minimum Degree Deletion}.

\begin{lemma}
\label{lemma_IS2MDD}
 If $(G^*,k)$ is a yes-instance of \textsc{Independent Set}, then $(G,w_c,k)$ is a yes-instance of \textsc{Minimum Degree Deletion}.
\end{lemma}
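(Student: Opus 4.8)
The plan is to exhibit an explicit size-$k$ solution set for $(G,w_c,k)$ directly from an independent set of $G^*$, and then verify the degree condition by reading off the degree values already recorded in the construction. Concretely, given a yes-instance I would fix an independent set $I \subseteq V^*$ with $|I| = k$ (if only a larger independent set is guaranteed, any $k$-element subset of it is still independent), and set $M := \{ v_i \in V \mid v^*_i \in I \}$. Since $V$ and $\{w_c\}$ are disjoint, $M \subseteq V \setminus \{w_c\}$ and $|M| = k$, so $M$ is an admissible candidate; it remains to show that $w_c$ is the unique vertex of minimum degree in $G - M$.

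First I would compute $\deg(w_c)$ in $G - M$. All neighbors of $w_c$ lie in $V$, and exactly the $k$ of them belonging to $M$ are removed, so $\deg_{G-M}(w_c) = n - k$. The whole task thus reduces to showing that every vertex of $G - M$ other than $w_c$ has degree at least $n-k+1$. At this point I would use that the vertices of $V$ are pairwise non-adjacent in $G$: the only vertices whose degree can decrease when $M$ is deleted are neighbors of some $v_i \in M$, and such a vertex is either $w_c$, a vertex of a clique $C_{v_i}$ with $v_i \in M$, or a vertex $e_j$ with $v^*_i$ incident to $e^*_j$ for some $v_i \in M$. Every other vertex keeps its original degree, which the construction lists as at least $n-k+1$ for all vertices except $w_c$; in particular the $v_i \notin M$ are unaffected.

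Then I would dispatch the two remaining cases. A vertex of $C_{v_i}$ adjacent to $v_i$ has degree $n-k+2$ in $G$ and loses only that single neighbor, dropping to $n-k+1$; the unique vertex of $C_{v_i}$ not joined to $v_i$ has degree $n-k+1$ and is unaffected. A vertex $e_j$ has degree $n-k+2$ in $G$, and its neighbors in $V$ are exactly the vertices $v_i$ with $v^*_i$ an endpoint of $e^*_j$; since $I$ is independent, at most one endpoint of $e^*_j$ lies in $I$, so $e_j$ loses at most one neighbor and retains degree at least $n-k+1$. Combining the cases, in $G - M$ every vertex except $w_c$ has degree at least $n-k+1 > n-k = \deg_{G-M}(w_c)$, so $w_c$ is the only vertex of minimum degree and $(G,w_c,k)$ is a yes-instance.

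The only genuine subtlety — the point I would be most careful about — is invoking the independence of $I$ at precisely the place where it is needed, namely for the $e_j$ vertices, whose degree in $G$ sits only one above the target value $n-k+1$, so that losing two neighbors would be fatal; the remaining verifications are a direct read-off of the degree list fixed in the construction, together with the trivial but essential observation that $V$ is an independent set in $G$.
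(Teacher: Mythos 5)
Your proposal is correct and follows essentially the same route as the paper's proof: delete the vertices of $V$ corresponding to the independent set, observe that $w_c$ drops to degree $n-k$, and use independence exactly where the paper does, namely to guarantee that each $e_j$ loses at most one of its two $V$-neighbors and so stays at degree at least $n-k+1$. The only difference is that you spell out the case analysis for the $C_{v_i}$ vertices, which the paper dismisses as trivial.
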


\begin{proof}
 Let $(G^*,k)$ be a yes-instance and $V^*_d \subseteq V^*$ a size-$k$ set of independent vertices.
 We delete each vertex $v_i \in V$ from $G$, if $v^*_i \in V^*_d$.
 Since $|V^*_d| = k$ and $|V| = n$, we deleted $k$ vertices and the vertex $w_c$ has now degree of $n-k$.
 It holds that each vertex $e_j$ with $j \in \{1,\dots,m\}$ has degree at least $n-k+1$, because $V^*_d$ is an independent set which means that at most one of the neighbors from $e_j$ in $V$ is deleted.
 Trivially, the degree of all other vertices is at least $n-k+1$.
 Thus, $w_c$ has minimum degree and $(G,w_c,k) \in$ \textsc{Minimum Degree Deletion}.
\end{proof}

Next, we will show the opposite direction of the equivalence.
To this end, several observations are very useful for further argumentation.
In the following, let $G^*=(V^*,E^*)$ be an undirected graph and $k$ a positive integer.
We apply our reduction and denote the resulting graph and its vertices and vertex sets as described above.
Let $M_d$ denote a set of $k$ vertices of $G$ such that its deletion makes $w_c$ become the only vertex with minimum degree.

\begin{observation}
\label{obs_MDD_exactly_n-k}
 It holds that $w_c$ has degree exactly $n-k$ after deleting $M_d$ from $G$.
\end{observation}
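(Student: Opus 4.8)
The plan is to pin down $\deg_{G-M_d}(w_c)$ by proving $n-k \le \deg_{G-M_d}(w_c) \le n-k$, i.e.\ a lower bound from the fact that $w_c$ itself is never deleted and an upper bound from the requirement that $w_c$ be the \emph{unique} minimum-degree vertex of $G-M_d$.

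For the lower bound I would note that, by definition of \textsc{Minimum Degree Deletion}, $M_d \subseteq V(G)\setminus\{w_c\}$, so $w_c$ survives; and in $G$ the vertex $w_c$ is adjacent exactly to the $n$ vertices of $V$, hence $\deg_G(w_c)=n$. Since $|M_d|=k$, at most $k$ of these neighbours lie in $M_d$, so $\deg_{G-M_d}(w_c)\ge n-k$.

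For the upper bound—the crux—I would produce a ``witness'' vertex $u\neq w_c$ surviving $M_d$ whose degree in $G-M_d$ is at most $n-k+1$. The gadget to use is the family of $n$ pairwise vertex-disjoint cliques $C_1,\dots,C_n$: as recorded just before the observation, every vertex of each $C_i$ has degree exactly $n-k+1$ in $G$. Because these cliques are disjoint and $|M_d|=k<n$, at least one of them, say $C_i$, satisfies $C_i\cap M_d=\emptyset$; pick any $u\in C_i$. Then $u$ survives, and deleting vertices only decreases degrees, so $\deg_{G-M_d}(u)\le\deg_G(u)=n-k+1$. Since $w_c$ is the unique minimum-degree vertex of $G-M_d$ and $u\neq w_c$, this forces $\deg_{G-M_d}(w_c)<\deg_{G-M_d}(u)\le n-k+1$, i.e.\ $\deg_{G-M_d}(w_c)\le n-k$. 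Combining the two bounds gives equality.

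The main obstacle is precisely the choice of the witness vertex $u$: it must be guaranteed both to survive $M_d$ and still to have small degree afterwards, and the disjoint cliques $C_1,\dots,C_n$ are tailored for this since there are $n$ of them but only $k<n$ deletions, while their vertices already have the minimal degree $n-k+1$. (The degenerate regime $k\ge n$ is irrelevant, as the underlying \textsc{Independent Set} instance is then trivial, so one may assume $k<n$ throughout the reduction.) Beyond this, no delicate counting is required.
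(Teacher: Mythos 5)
Your proposal is correct and follows essentially the same route as the paper: the lower bound comes from $\deg_G(w_c)=n$ and $|M_d|=k$, and the upper bound comes from the abundance of degree-$(n-k+1)$ clique vertices, which cannot all be destroyed by only $k$ deletions. Your pigeonhole phrasing (one of the $n$ disjoint cliques $C_1,\dots,C_n$ survives untouched and yields an explicit witness $u$ with $\deg_{G-M_d}(u)\le n-k+1$) is just a slightly more explicit rendering of the paper's counting argument, and your aside that one may assume $k<n$ is a reasonable housekeeping remark the paper leaves implicit.
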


\begin{proof}
 Assume towards a contradiction that the degree of $w_c$ does not equal $n-k$.
 First consider the case that $w_c$ has degree less than $n-k$ after deleting $M_d$ from $G$.
 In the original graph $w_c$ has degree $n$.
 To reach degree less than $n-k$ we have to delete more than $k$ vertices; a contradiction.
 Second consider the case that $w_c$ has degree more than $n-k$ after deleting $M_d$ from $G$.
 This means that the degree of $w_c$ is at least $n-k+1$.
 Since $M_d$ is a solution set, no other vertex with degree $n-k+1$ can exist after deleting $M_d$ from $G$.
 So, we have to delete more than $n>k$ vertices, since there are more than $n$ cliques with vertices with degree exactly $n-k+1$; a contradiction.
 Altogether, $w_c$ has degree exactly $n-k$ after deleting $M_d$ from $G$.
\end{proof}

\begin{observation}
\label{obs_MDD_onlyV}
 It holds that $M_d$ only contains vertices of $v_i$ with $i \in \{1,\dots,n\}$.
\end{observation}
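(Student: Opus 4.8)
The plan is to lean entirely on the degree bookkeeping that Observation~\ref{obs_MDD_exactly_n-k} has just established. That observation tells us that after deleting $M_d$ the vertex $w_c$ has degree exactly $n-k$ in $G - M_d$. I would combine this with the single structural fact about $w_c$ coming from the construction, namely that in $G$ the distinguished vertex is adjacent to every vertex of $V$ and to nothing else, so $N(w_c) = V$ and $\deg(w_c) = n$ in $G$.

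The key step is then a one-line counting argument. Deleting a set $M_d$ of vertices decreases $\deg(w_c)$ by exactly the number of neighbors of $w_c$ that lie in $M_d$, i.e.\ by $|M_d \cap N(w_c)| = |M_d \cap V|$. Hence
\[
 n - |M_d \cap V| = n-k,
\]
so $|M_d \cap V| = k$. Since $M_d$ is a solution set of size at most $k$ (indeed we are assuming $|M_d| = k$), this forces $M_d \cap V = M_d$, i.e.\ $M_d \subseteq V = \{v_i \mid i \in \{1,\dots,n\}\}$, which is exactly the claim.

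I do not expect any real obstacle here: the whole content is ``$w_c$ has $n$ neighbours, all in $V$, it loses exactly $k$ of them, and we are only allowed to delete $k$ vertices in total''. The only thing to be careful about is citing Observation~\ref{obs_MDD_exactly_n-k} for the ``exactly $n-k$'' (rather than ``at most $n-k$''); without the upper part of that observation one could not rule out $M_d$ containing some non-neighbor of $w_c$ while $w_c$ still had small enough degree. So the ``hard part'', such as it is, was already done in the previous observation, and this one is just the immediate corollary.
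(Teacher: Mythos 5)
Your proposal is correct and follows exactly the paper's argument: Observation~\ref{obs_MDD_exactly_n-k} forces $M_d$ to contain at least (in fact exactly) $k$ neighbors of $w_c$, all of which lie in $V$, and the size bound $|M_d| = k$ then leaves no room for anything else. The only difference is that you spell out the counting slightly more explicitly than the paper does.
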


\begin{proof}
 The argumentation is very simple.
 Due to Observation~\ref{obs_MDD_exactly_n-k} it is clear that $M_d$ contains at least $k$ vertices $v_i$ with $i \in \{1,\dots,n\}$.
 Since $M_d$ has size $k$, there is no other vertex in $M_d$.
\end{proof}

\begin{lemma}
\label{lemma_MDD2IS}
 If $(G,w_c,k)$ is a yes-instance of \textsc{Minimum Degree Deletion}, then $(G^*,k)$ is a yes-instance of \textsc{Independent Set}.
\end{lemma}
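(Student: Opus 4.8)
The plan is to combine Observation~\ref{obs_MDD_onlyV}, which pins down where a solution set can lie, with a single degree count that forces independence.

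First I would let $M_d$ be a solution set with $|M_d| = k$. By Observation~\ref{obs_MDD_onlyV}, $M_d \subseteq \{v_1,\dots,v_n\}$, so there is an index set $I \subseteq \{1,\dots,n\}$ with $|I| = k$ and $M_d = \{v_i : i \in I\}$. Set $V^*_d := \{v^*_i : i \in I\}$; this has size $k$, and the claim reduces to showing that $V^*_d$ is an independent set in $G^*$.

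Suppose otherwise: some edge $e^*_j \in E^*$ has both of its endpoints $v^*_a, v^*_b$ in $V^*_d$, i.e.\ $v_a, v_b \in M_d$. By the construction, in $G$ the vertex $e_j$ is adjacent to exactly $n-k$ vertices of the clique $C_{e_j}$ and to exactly the two vertices of $V$ corresponding to the endpoints of $e^*_j$ --- namely $v_a$ and $v_b$ --- and to nothing else, so $\deg_G(e_j) = n - k + 2$. Since $M_d \subseteq V$ and the only neighbors of $e_j$ in $V$ are $v_a$ and $v_b$, deleting $M_d$ from $G$ removes precisely these two neighbors of $e_j$ (the clique $C_{e_j}$ and its inter-clique edges are untouched), hence $\deg_{G - M_d}(e_j) = n - k$. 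On the other hand, Observation~\ref{obs_MDD_exactly_n-k} gives $\deg_{G - M_d}(w_c) = n - k$. Since $M_d$ is a solution set, $w_c$ is the \emph{unique} vertex of minimum degree in $G - M_d$; but $e_j \neq w_c$ also has degree $n-k$, a contradiction.

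Therefore $V^*_d$ is an independent set of size $k$ in $G^*$, so $(G^*, k)$ is a yes-instance of \textsc{Independent Set}. I do not expect a genuine obstacle here: the technical work was front-loaded into designing the gadget cliques $C_x$ (which force every vertex other than $w_c$ to degree at least $n-k+1$ in $G$) and into Observations~\ref{obs_MDD_exactly_n-k} and~\ref{obs_MDD_onlyV}; what is left is just the single degree computation for $e_j$ above.
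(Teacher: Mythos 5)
Your proposal is correct and follows essentially the same route as the paper: build $V^*_d$ from $M_d$ via Observations~\ref{obs_MDD_exactly_n-k} and~\ref{obs_MDD_onlyV}, then derive a contradiction from the degree of $e_j$ dropping to $n-k$ when both endpoints of a shared edge are deleted. The only cosmetic difference is that you conclude directly from the tie with $\deg(w_c)$ violating uniqueness, whereas the paper phrases the same fact as ``$e_j$ must be deleted too,'' contradicting Observation~\ref{obs_MDD_onlyV}.
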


\begin{proof}
 Let $(G,w_c,k)$ be a yes-instance of \textsc{Minimum Degree Deletion}.
 We can build a size-$k$ independent set $V^*_d$ as follows:
 For each vertex $v_i \in M_d$ add vertex $v^*_i$ to $V^*_d$.
 Due to Observations~\ref{obs_MDD_exactly_n-k} and \ref{obs_MDD_onlyV}, $V^*_d$ has size $k$.
 It remains to be shown that $V^*_d$ is an independent set.
 Assume that there is an edge $e^*_j$ with $j \in \{1,\dots,m\}$ that is incident to any two vertices $v_a,v_b \in V^*_d$.
 Thus, both $v_a$ and $v_b$ are in $M_d$.
 Due to the construction of $G$ the vertex $e_j$ has degree of $n-k$ after deleting $v_a$ and $v_b$ and must be deleted, too; a contradiction to Observation~\ref{obs_MDD_onlyV}.
\end{proof}

Altogether, we arrive at the following:
\begin{theorem}
\label{MDD_w2}
 \textsc{Minimum Degree Deletion} is W[1]-hard with respect to the parameter $k:=$``number of vertices to delete''.
\end{theorem}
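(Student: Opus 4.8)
The plan is to read the theorem off the construction and the two lemmas already in place. Recall that \textsc{Independent Set} parameterized by the solution size $k$ is W[1]-complete, so it suffices to exhibit a parameterized reduction from \textsc{Independent Set} to \textsc{Minimum Degree Deletion} whose target parameter depends only on $k$. That reduction is exactly the transformation constructed above: given an instance $(G^*,k)$ of \textsc{Independent Set}, first apply the (trivially polynomial-time, answer- and parameter-preserving) preprocessing that makes the number of edges even, then build the graph $G$ together with the distinguished vertex $w_c$ and output $(G,w_c,k)$.

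First I would check that this transformation runs in polynomial time and leaves the parameter a function of $k$ alone. The graph $G$ has $1+n+m$ ``core'' vertices, plus, for each of the $n$ vertices, the two cliques $C_{v_i}$ and $C_i$ of size $n-k+1$, and, for each of the $m$ edges, the clique $C_{e_j}$ of size $n-k$; hence $G$ has $O(n^2+nm)$ vertices and its edge set can be listed in polynomial time, while the preprocessing adds at most three vertices and $O(n)$ edges. Thus the transformation takes $f(k)\cdot\poly(|x|)$ time (with $f$ in fact constant), and the new parameter is literally $k$, as required by the definition of a parameterized reduction.

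Next I would invoke correctness. Lemma~\ref{lemma_IS2MDD} gives the implication that if $(G^*,k)$ is a yes-instance of \textsc{Independent Set} then $(G,w_c,k)$ is a yes-instance of \textsc{Minimum Degree Deletion}, and Lemma~\ref{lemma_MDD2IS} gives the converse; together with the equivalence established by the preprocessing this yields that $(G^*,k)\in$ \textsc{Independent Set} if and only if $(G,w_c,k)\in$ \textsc{Minimum Degree Deletion}. Hence the transformation is a parameterized reduction from a W[1]-hard problem, so \textsc{Minimum Degree Deletion} with parameter $k$ is W[1]-hard, which is the claim.

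The only delicate point is not in this assembly but already handled by Observations~\ref{obs_MDD_exactly_n-k} and \ref{obs_MDD_onlyV} feeding Lemma~\ref{lemma_MDD2IS}: the clique gadgets force any size-$k$ solution to consist of exactly $k$ vertices of $V$, since deleting any other vertex would drop the degree of one of the more than $n$ ``witness'' cliques to $n-k$, triggering more deletions than the budget $k$ permits. I expect no further obstacle; the proof of the theorem itself is bookkeeping on top of the lemmas.
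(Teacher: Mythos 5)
Your proposal is correct and follows exactly the paper's own argument: verify that the transformation runs in polynomial time, note that the new parameter equals the old one, and obtain the equivalence from Lemma~\ref{lemma_IS2MDD} and Lemma~\ref{lemma_MDD2IS}. The extra bookkeeping you supply (the size bound on $G$ and the role of Observations~\ref{obs_MDD_exactly_n-k} and~\ref{obs_MDD_onlyV}) is consistent with, and slightly more explicit than, what the paper writes.
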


\begin{proof}
 We show that the transformation from $(G^*,k)$ to $(G,w_c,k)$ is a parameterized reduction:
 By construction, it can be executed in $f(k) \cdot \poly(|x,k|)$ time.
 The new parameter equals the original one. 
 The equivalence follows Lemma~\ref{lemma_IS2MDD} and Lemma~\ref{lemma_MDD2IS}.
\end{proof}

Theorem~\ref{MDD_w2} provides a relative lower bound for the parameterized complexity with respect to the parameter $k$.
An upper bound, namely the membership in $\XP$, is quite easy to see.
Simply checking for each $V' \subseteq V$ whether $w_c$ is the only vertex with minimum degree in $G - V'$ already leads to the following:

\begin{proposition}
 \textsc{Minimum Degree Deletion} is in $\XP$ with respect to the parameter $k:=$``number of vertices to delete''.
\end{proposition}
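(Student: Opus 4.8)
The plan is simply to exhibit the obvious brute-force algorithm and to check that its running time has the shape $f(k)\cdot |x|^{g(k)}$ required by the definition of $\XP$ from Section~\ref{Parameterized complexity}. First I would enumerate every candidate solution set, that is, every subset $V' \subseteq V \setminus \{w_c\}$ with $|V'| \le k$. Writing $n := |V|$, there are $\sum_{i=0}^{k}\binom{n-1}{i} = O(n^{k})$ such subsets, and this enumeration can be carried out in $O(n^{k})$ time, for instance by looping over all $i$ with $0 \le i \le k$ and over all size-$i$ subsets of $V \setminus \{w_c\}$.

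For each candidate $V'$, I would construct the induced subgraph $G - V'$, compute the degree of every remaining vertex, determine the minimum of these degrees, and test whether $w_c$ is the unique vertex attaining this minimum; each of these tasks is clearly possible in $\poly(n)$ time (already $O(n^2)$ suffices). The algorithm returns ``yes'' as soon as some candidate $V'$ passes this test and ``no'' otherwise. Correctness is immediate from the definition of \textsc{Minimum Degree Deletion}: a yes-instance is by definition one for which some $V' \subseteq V \setminus \{w_c\}$ of size at most $k$ makes $w_c$ the only vertex of minimum degree in $G - V'$, and the algorithm inspects every such $V'$.

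The total running time is $O(n^{k}) \cdot n^{O(1)} = n^{k + O(1)}$, which is of the form $f(k)\cdot |x|^{g(k)}$ with $f \equiv 1$ and $g(k) = k + O(1)$; hence \textsc{Minimum Degree Deletion} parameterized by $k$ lies in $\XP$. There is essentially no obstacle here; the only point to be slightly careful about is that the problem asks for a solution set of size \emph{at most} $k$ that does \emph{not} contain $w_c$, so the enumeration must range over all subsets of $V \setminus \{w_c\}$ of size $0, 1, \dots, k$ rather than over subsets of size exactly $k$ of all of $V$.
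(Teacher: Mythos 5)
Your proposal is correct and coincides with the paper's own argument: the paper also just observes that checking every candidate subset $V'$ of size at most $k$ in polynomial time yields the $\XP$ bound. Your version is in fact slightly more careful than the paper's one-line sketch, since you make explicit that the enumeration ranges over subsets of $V \setminus \{w_c\}$ of size at most $k$ and that the resulting running time $n^{k+O(1)}$ has the required form $f(k)\cdot |x|^{g(k)}$.
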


\section{Size of a feedback edge set as parameter}
\label{Size of a Feedback Edge Set as parameter for Minimum Degree Deletion}
In Section~\ref{Feedback vertex set size and solution size as combined parameter}, we showed that \textsc{Minimum Indegree Deletion} is fixed-parameter-tractable with respect to the combined parameter ``size of a feedback vertex/arc set'' and ``size of a solution set''.
In contrast, it is W[2]-hard for both single parameters (Section~\ref{Feedback vertex/arc set size as parameter} and~\cite{BU09}).
It is easy to adapt the algorithm of Section~\ref{Feedback vertex set size and solution size as combined parameter} to the undirected version \textsc{Minimum Degree Deletion}.
Hence, we can show fixed-parameter tractability for the combined parameter ``size of a feedback vertex/edge set'' and ``size of a solution set''.
However, the parameterized intractability for the single parameters ``size of a feedback vertex set'' and ``size of a feedback arc/edge set'' cannot be transferred as easily.

In contrast to the hardness results of the directed problem we will show fixed-parameter tractability for both parameters.
We will start with a data reduction rule that achieves a \textit{vertex-linear kernel}, that is, a problem kernel whose size is linear in the number of vertices, for \textsc{Minimum Indegree Deletion} with respect to the parameter ``size of a feedback edge set''.

\paragraph{A vertex-linear kernel.}
Our kernelization is based on a simple data reduction rule.
The following lemma ensures polynomial running time for this rule.

\begin{lemma}
\label{lemma_mddWithD01}
 Let $G=(V,E)$ be an undirected graph and $k$ a positive integer.
 In $O(n^2 \cdot k)$ time and $O(n^2 + n)$ space one can determine whether there is a set of vertices $M_d \subseteq V$ with $| M_d | \leq k$ such that $w_c$ is the only vertex with minimum degree in $G - M_d$ and $\deg(w_c) \leq 1$.
\end{lemma}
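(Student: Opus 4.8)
The plan is to split on the final degree $i$ of $w_c$ in the solution graph; by hypothesis $i \in \{0,1\}$, so the admissible deletion sets are exactly those achieving $\deg(w_c)=0$ with every other surviving vertex of degree $\ge 1$, together with those achieving $\deg(w_c)=1$ with every other surviving vertex of degree $\ge 2$. A first easy observation is that reducing $w_c$ to degree at most $1$ forces the deletion of at least $\deg(w_c)-1$ of its neighbours, so if $\deg(w_c) > k+1$ we may answer ``no'' immediately; from now on assume $\deg(w_c) \le k+1$.

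For the case $i=0$ I would argue that all of $N(w_c)$ must be deleted (so this case needs $\deg(w_c)\le k$), and that afterwards every vertex $v\ne w_c$ that has become isolated --- equivalently, every $v\notin N(w_c)\cup\{w_c\}$ with $N(v)\subseteq N(w_c)$ --- must be deleted too, since deletions never raise a degree and $v$ must not remain a second vertex of minimum degree $0$. Deleting an isolated vertex affects no other degree, so no further deletions are forced and what remains is a valid solution graph. Hence this case succeeds iff $\deg(w_c)$ plus the number of such $v$ is at most $k$, which I can evaluate in $O(n^2)$ time from the adjacency matrix.

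For the case $i=1$, exactly one neighbour $u\in N(w_c)$ survives, so I would branch over the at most $k+1$ choices of $u$. Having fixed $u$, the set $N(w_c)\setminus\{u\}$ must be deleted, after which $w_c$ has degree $1$ (its sole neighbour being $u$); then I apply the forced rule to exhaustion: as long as some $v\ne w_c$ has current degree at most $1$, delete it, because deletions only decrease degrees and $v$ has to finish with degree $\ge 2$. If this rule ever picks $u$ the branch fails ($u$ must survive to keep $\deg(w_c)=1$ but cannot then reach degree $2$), and it also fails if the running total of deletions exceeds $k$. Conversely, if the cascade halts within budget without ever picking $u$, then $N(w_c)\setminus\{u\}$ together with the cascaded vertices is a valid solution, since $w_c$ then has degree exactly $1$ while every other surviving vertex has degree $\ge 2$. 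Maintaining a degree array and a bucket of the vertices of current degree at most $1$, one branch costs $O(n^2)$ (including the $O(n^2)$ setup of the degree array for the graph after the committed deletions): at most $n$ deletions, each updating $O(n)$ entries via the adjacency matrix.

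The overall answer is ``yes'' iff the $i=0$ case succeeds or some $i=1$ branch succeeds. The running time is $O(n^2)$ for $i=0$ plus at most $k+1$ branches of $O(n^2)$, i.e.\ $O(n^2\cdot k)$ since $k\ge 1$, and the space is the $O(n^2)$ adjacency matrix plus $O(n)$ auxiliary arrays, i.e.\ $O(n^2+n)$. The part I expect to be the real obstacle --- and the genuine content of the proof --- is the forced-deletion argument: showing that in any solution graph with $\deg(w_c)=i$, a vertex $v\ne w_c$ whose degree in any intermediate graph built only from provably necessary deletions has already dropped below $i+1$ must itself be deleted. This is what makes the greedy cascade \emph{exact} (it deletes a minimum-size forced set given the commitments), and one must still carefully check the side points that no ``useful'' deletion of a currently high-degree vertex is ever required and that protecting $u$ is consistent with the cascade that might otherwise want to delete it.
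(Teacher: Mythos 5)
Your proposal is correct and follows essentially the same route as the paper's proof: branch on the final degree $i\in\{0,1\}$ of $w_c$, commit to the surviving neighbour(s), and then exhaustively delete the forced vertices of degree at most $i$, using an adjacency matrix with maintained degree counts. Your accounting is slightly different (at most $k+1$ branches each costing $O(n^2)$, versus the paper's up to $n$ branches each costing $O(kn)$) and your justification of the forced-deletion cascade is more explicit, but both yield the claimed $O(n^2\cdot k)$ time and $O(n^2+n)$ space.
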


\begin{proof}
 Determining whether there is a solution set $M_d \subseteq V$ with $| M_d | \leq k$ such that $w_c$ is the only vertex with minimum degree and $\deg(w_c) \leq 1$ works as follows:
 To manage the degree information of the vertices we use an adjacency matrix and store the sums of each columns and each row.
 The column and row sums give us directly the degree of each vertex.
 The initialization of the matrix costs $O(n^2)$ time and $O(n^2 + n)$ space.
 In a first step we remove all neighbors of $w_c$ if $\deg(w_c)=0$ and all but one neighbor if $\deg(w_c)=1$.
 Subsequently, we remove all vertices with degree at most $\deg(w_c)$.
 Removing one vertex costs $O(n)$ time, because we need to update the matrix.
 There are at most $k$ removal steps.
 If $\deg(w_c)=1$ there are up to $n$ possible neighbor which are not removed.
 This means an additional factor of $O(n)$ in the case $\deg(w_c)=1$.
 Thus, we need $O(n \cdot (k \cdot n ) ) = O( n ^ 2 \cdot k )$ time and $O(n ^ 2 + n)$ space.
\end{proof}

\begin{reductionrule}[Remove Low Degree]
\label{Remove Low Degree}
 Let $G=(V,E)$ be an undirected graph and $k$ be a positive integer.
 We denote by $\RemoveLowDegree(G)$ the graph resulting by the following data reduction:
 If there is set of vertices $M_d \subseteq V$ with $| M_d | \leq k$ such that $w_c$ is the only vertex with minimum degree and $\deg(w_c) \leq 1$ in $G - M_d$, then replace $G$ with a new graph which only contains the single vertex $w_c$ and set the parameter to zero.
 Otherwise, $w_c$ has degree at least two in every optimal solution, iteratively remove each vertex with degree at most two and decrease the parameter by one in each removal step.
\end{reductionrule}

Due to Lemma~\ref{lemma_mddWithD01}, Reduction Rule ``Remove Low Degree'' can be executed in polynomial time.
It is easy to verify that the rule is sound.
The next observation follows directly from the construction of $\RemoveLowDegree(G)$.

\begin{observation}
 \label{lemma_mddDegreeGT2}
 Every vertex ($\neq w_c$) in $\RemoveLowDegree(G)$ has degree at least three.
\end{observation}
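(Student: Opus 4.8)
The plan is simply to unfold the definition of $\RemoveLowDegree(G)$ from Reduction Rule~\ref{Remove Low Degree} and distinguish its two cases. In the first case there is a set $M_d \subseteq V$ with $|M_d| \leq k$ such that $w_c$ becomes the only minimum-degree vertex in $G - M_d$ with $\deg(w_c) \leq 1$; then the rule replaces $G$ by the graph consisting of the single vertex $w_c$. That graph contains no vertex different from $w_c$ at all, so the statement holds vacuously.

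In the remaining case, $\RemoveLowDegree(G)$ is by definition obtained from $G$ by \emph{iteratively} removing a vertex $\neq w_c$ of degree at most two, each removal decreasing the parameter by one. First I would note that this process terminates: every removal step strictly decreases the number of vertices, so after at most $n$ steps no further removal is possible. Then I would observe that ``no further removal is possible'' means exactly that $\RemoveLowDegree(G)$ contains no vertex $\neq w_c$ of degree at most two; equivalently, every vertex $\neq w_c$ has degree at least three, which is the claim.

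The only point worth stressing — and the reason the rule is phrased with ``iteratively'' rather than as a single sweep — is that deleting a vertex decreases the degrees of its former neighbours, so a vertex that had degree three before a deletion may drop to degree two and then has to be removed as well; applying the removal exhaustively, as the rule prescribes, takes care of this. There is no real obstacle beyond observing that the iteration is finite and reading off its termination condition, so the proof is a couple of lines.
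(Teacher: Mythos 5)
Your proposal is correct and matches the paper, which gives no separate proof and simply notes that the observation ``follows directly from the construction of $\RemoveLowDegree(G)$'': the single-vertex case is vacuous, and in the other case the exhaustive (hence terminating) removal of degree-at-most-two vertices leaves no vertex other than $w_c$ of degree below three. Your remark about cascading degree drops is exactly why the rule is stated iteratively, and nothing further is needed.
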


Now, we are ready to bound the number of vertices in $\RemoveLowDegree(G)$ with the help of this observation.
For the ease of argumentation we firstly define a transformation which, given a forest, removes every inner vertex with degree two by connecting its neighbors.

\begin{definition}
 Let $G$ denote a forest.
 The function $\DisLined(G)$ denotes the result of the exhaustive application of the following procedure:
 If there is an induced $P_3$ (path of length three) such that the middle vertex $v'$ in $P_3$ has degree two in $G$, then remove $v'$ and connect both its neighbors.
\end{definition}

\begin{theorem}
\label{MDD_linearkernel}
 There is a $2s_e$-vertex kernel for \textsc{Minimum Degree Deletion} with respect to the parameter $s_e:=$``size of a feedback edge set''.
 It is computable in $O(n^2 \cdot k)$ time with $n$ being the number of vertices and $k$ being the number of vertices to delete.
\end{theorem}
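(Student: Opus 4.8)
The plan is to apply Reduction Rule~\ref{Remove Low Degree} exhaustively and then bound the number of surviving vertices by a handshaking argument against a feedback edge set. First I would invoke the rule: by Lemma~\ref{lemma_mddWithD01} it runs in $O(n^2\cdot k)$ time, and afterwards either the instance has collapsed to the trivial one-vertex instance (nothing to show) or, by Observation~\ref{lemma_mddDegreeGT2}, every vertex of $G':=\RemoveLowDegree(G)$ other than $w_c$ has degree at least $3$. I would also note that $w_c$ itself has degree at least $2$ in $G'$: deleting vertices can only decrease $\deg(w_c)$, so if $\deg_{G'}(w_c)\le 1$ then every solution leaves $w_c$ with degree at most $1$, and such instances are exactly the ones the rule has already trivialized.

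Next comes the counting. Let $F$ be a feedback edge set of $G'$ with $|F|=s_e$, so $T:=G'-F$ is a spanning forest; write $N:=|V(G')|$, $M:=|E(G')|$, and let $c\ge 1$ be the number of connected components of $G'$. A forest on $N$ vertices with $c$ components has $N-c$ edges, hence $M=N-c+s_e$. Plugging this into the handshaking identity $2M=\sum_{v}\deg_{G'}(v)\ge 3(N-1)+\deg_{G'}(w_c)\ge 3N-1$ yields $2s_e+2N-2c\ge 3N-1$, i.e.\ $N\le 2s_e-2c+1\le 2s_e-1<2s_e$; the degenerate case $N=1$ is itself a trivial instance. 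Since deleting vertices and edges never increases the feedback edge set size, $s_e(G')\le s_e(G)$, so the kernel has at most $2s_e$ vertices in terms of the original parameter, and since the rule is sound and only decreases $k$, this is a valid kernelization.

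A more structural way to carry out the same estimate — and, I suspect, the reason the transformation $\DisLined$ was introduced — is to classify each vertex $v\neq w_c$ by its degree in the spanning forest $T$. If $\deg_T(v)\ge 3$ then $v$ is a branching vertex of $T$; otherwise at least $3-\deg_T(v)\ge 1$ of the edges at $v$ lie in $F$. Because suppressing a degree-two vertex of a forest and reconnecting its two neighbours leaves every other vertex's forest-degree unchanged, $\DisLined(T)$ retains exactly the vertices of $T$ whose forest-degree differs from $2$, and in $\DisLined(T)$ every vertex has degree $1$ or at least $3$, so its branching vertices are at most as numerous as its leaves. The leaves (and any isolated vertices) of $\DisLined(T)$, together with the suppressed degree-two vertices, are all endpoints of $F$-edges; carefully adding up the $F$-incidences they force (at least $1$, $2$, and $3$ respectively) against the $2s_e$ available endpoints reproduces $N\le 2s_e$.

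The main obstacle I anticipate is pure bookkeeping: handling $w_c$ (which need not have degree $\ge 3$) and the component/leaf corrections tightly enough to land on the clean constant $2$ rather than $2s_e+O(1)$, and being careful about the small degenerate cases ($s_e=0$ or $N=1$), where the kernel is simply a constant-size trivial instance. The running time is immediate: $O(n^2\cdot k)$ for the solvability test of Lemma~\ref{lemma_mddWithD01}, plus $O(n^2)$ for the at most $k$ vertex removals and for the linear-time spanning-forest (hence feedback edge set) computation.
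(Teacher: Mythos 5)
Your proposal is correct, and your primary argument takes a genuinely different route from the paper's. The paper also reduces via Reduction Rule~\ref{Remove Low Degree} and then works in the forest $G'-E_d$, but it bounds the vertex count by partitioning the forest's vertices into three classes --- leaves (each forced to absorb two $E_d$-endpoints, hence at most $l\le s_e$ of them), inner vertices of degree two (each absorbing at least one of the remaining $2s_e-2l$ endpoints), and inner vertices of forest-degree at least three (bounded by $l$ via the $\DisLined$ suppression and a binary-tree count) --- and sums $l+(2s_e-2l)+l=2s_e$. Your handshaking computation, $2|E(G')| \ge 3(N-1)+\deg_{G'}(w_c)$ against $|E(G')|=N-c'+s_e$, collapses all of that case analysis into one line, dispenses with $\DisLined$ entirely, and even yields the marginally sharper bound $N\le 2s_e-1$. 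The one ingredient you need that the paper's route does not make explicit is a lower bound on $\deg_{G'}(w_c)$; your justification (if $\deg_{G'}(w_c)\le 1$ then, by soundness of the rule, the original instance already admitted a solution with $\deg(w_c)\le 1$ and would have been trivialized by the first branch) is sound, and in fact the paper's own proof silently treats \emph{every} vertex as having degree at least three, so your version is the more careful one on this point. Your second, structural sketch is essentially the paper's argument and need not be carried out given that the first one closes the bound.
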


\begin{proof}
 Let $G$ denote an undirected graph.
 We show that there are at most $2 \cdot s_e$ vertices with $s_e$ being the size of a feedback edge set in $\RemoveLowDegree(G)$.
 Let $E_d$ denote a feedback edge set of size $s_e$.
 Clearly, $G - E_d$ is a forest.
 Since each vertex in $G$ has degree at least three (Observation~\ref{lemma_mddDegreeGT2}), each leaf in $G - E_d$ must be incident to at least two edges in $E_d$.
 It holds that $G - E_d$ contains $l \leq s_e$ leaves, because each leaf must be incident to two edges of the feedback edge set and each edge of the feedback edge set can be incident to at most two leaves.
 Furthermore, the sum of incidences of the edges in $E_d$ is $2s_e$.
 Each inner vertex of degree two in $G - E_d$ must be incident to at least one edge in $E_d$.
 Since there are $l$ leaves in $G - E_d$, only $2s_e - 2l$ incidences are left over.
 Hence, $G - E_d$ contains at most $2s_e - 2l$ inner vertices with degree two.
 It holds that $G - E_d$ has at most $l$ inner vertices with degree at least three if $\DisLined(G - E_d)$ has at most $l$ inner vertices with degree at least three.
 Even if $\DisLined(G - E_d)$ is a complete binary tree there are at most $l/2 + l/4 + \dots + 1 = l$ inner vertices.
 Altogether, $G$ has at most $l + 2s_e - 2l + l = 2s_e$ vertices.
 The running time follows Lemma~\ref{lemma_mddWithD01}.
\end{proof}

\paragraph{A search tree algorithm.}
Of course, the $2s_e$-vertex-kernel already provides fixed-parameter tractability.
The running time of a corresponding brute-force algorithm is $O(4^{s_e} \cdot n^2)$ with $s_e$ being the size of a feedback edge set.
The polynomial factor is for checking the correctness of the ``guessed'' solution set.
A simple refinement helps to give the search tree algorithm \texttt{MDD-search} (see Figure~\ref{MDD_search}).
The input is an instance of \textsc{Minimum Degree Deletion} $(G=(V,E),w_c,k)$ and a feedback edge set $E_f$ with $|E_f| = s_e$.
\begin{figure}
\begin{algorithmic}[1]
\Procedure{\texttt{MDD-search}}{$G=(V,E)$,$w_c$,$k$,$E_f$}
\State $(G',k') := \RemoveLowDegree(G)$
 \For{each $N' \subseteq N_E$ with $N_E := \{x \mid \{x,w_c\} \in E_f\}$}
  \State Remove $N'$ from $G'$.
  \State $k':= k - |N'|$
  \State $(G',k') := \RemoveLowDegree(G')$
  \For{each $N'' \subseteq (N(w_c) \setminus N_E)$ in $G'$}
   \State Remove $N''$ from $G'$.
   \State $k':= k' - |N''|$
   \While{there is a vertex $v$ with $\deg(v) \leq \deg(w_c)$ in $G'$}
    \State Remove $v$ from $G'$.
    \State $k' := k' - 1$
   \EndWhile
   \If{$k' \geq 0$}
    \State \Return ``yes''
   \EndIf
  \EndFor
 \EndFor
\State \Return ``no''
\EndProcedure
\end{algorithmic}
\caption{Fixed-parameter algorithm that solves \textsc{Minimum Degree Deletion} with respect to the parameter in $O(2^{s_e} \cdot n^2)$ time.}
\label{MDD_search} 
\end{figure}
We start with showing the correctness of \texttt{MDD-search}.
After applying the data reduction rule (see line 2), \texttt{MDD-search} branches over each $N' \subseteq N_E$ with $N_E$ being the set of $w_c$-neighbors that are connected to $w_c$ by an edge of $E_d$ (see line 3).
Here, the set $N'$ expresses the ``vertices from $N_E$ that are part of the solution set''.
\texttt{MDD-search} removes $N'$ from the graph, decreases the parameter value, and applies the data reduction rule again (see line 4-6).
Then, \texttt{MDD-search} branches over $N'' \subseteq N(w_c) \setminus N_E$ which are the ``remaining neighbors of $w_c$ that are part of the solution set'' (see line 7).
The algorithm removes $N''$ from the graph and decreases the parameter value (see lines 8-9).
Finally, \texttt{MDD-search} determines the remaining part of the solution set by iteratively removing all vertices with degree $\leq \deg(w_c)$ (see lines 10-13).
Clearly, the algorithm is correct, because it finally branches over all neighbors of $w_c$ that can be part of a solution set (see lines 2 and 9) and detects the remaining solution set vertices which are uniquely determined in each branching (see lines 10-13).

It remains to analyze the running time.
The first loop (line 3) iterates $O(2^{|N_E|})$ times.
By definition, $|N_E| \leq s_e$.
Now consider the graph $G' - N_E$ in line 7.
Clearly, $G'- N_E$ has a feedback edge set $E'_f$ with $|E'_f| = s_e - |N_E|$.
Due to the proof of Theorem~\ref{MDD_linearkernel} we already know that $(G' - N_E) - E'_f$ has at most $|E'_f|$ leaves.
Since the ``remaining neighborhood of $w_c$ in $G'$'', namely $N(w_c) \setminus N_E$ in $G'$, does not change after removing $N_E$ and $E'_f$ from $G'$, $|N(w_c) \setminus N_E|$ is also bounded by $|E'_f|$.
Hence, the second loop iterates at most $O(2^{|E'_f|}) = O(2^{s_e - |N_E|})$ times.
The remaining operations can be done in $O(n^3)$ time.
Putting all together we arrive at the following:
\begin{theorem}
 \textsc{Minimum Degree Deletion} is solvable in $O(2^{s_e} \cdot n^3)$ time with $s_e$ being the size of a feedback edge set and $n$ being the number of vertices.
\end{theorem}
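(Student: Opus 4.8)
The statement packages together the correctness and the running-time bound of the algorithm \texttt{MDD-search} from Figure~\ref{MDD_search}, so the plan is to prove soundness, completeness, and the time bound in turn. Soundness is the easy direction: every vertex the algorithm ever deletes is either removed by Reduction Rule ``Remove Low Degree'' (which is sound) or lies in $N'\cup N''$ (a guessed subset of $N(w_c)$) or is a vertex of current degree at most $\deg(w_c)$ removed in the final \textbf{while}-loop. Hence when the algorithm reports ``yes'' it has in hand a concrete vertex set $M$ with $|M|\le k$ — this is exactly what the test $k'\ge 0$ records — and after deleting $M$ every remaining vertex other than $w_c$ has degree strictly larger than $\deg(w_c)$, so $w_c$ is the unique minimum-degree vertex; thus $(G,w_c,k)$ is a yes-instance.

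For completeness I would start from an arbitrary solution set $M$ of size at most $k$ and exhibit a branch of \texttt{MDD-search} that succeeds. First note that whenever Reduction Rule ``Remove Low Degree'' deletes a vertex, that vertex lies in every optimal solution (it has degree at most two while $w_c$ provably has degree at least two in every solution), so the rule preserves the property ``there is a solution of size at most the current parameter''. Now consider the run that guesses $N' := M\cap N_E$ in line~3 and $N'' := M\cap(N(w_c)\setminus N_E)$ in line~7. After these deletions the degree of $w_c$ equals its degree in $G-M$, because the algorithm deletes no further neighbour of $w_c$. Consequently every vertex $v\ne w_c$ with current degree at most $\deg(w_c)$ must belong to $M$, and conversely the set of vertices forced out in this way is closed under the cascade ``deleting a forced vertex may create new forced vertices'' and is independent of the order of deletion; the \textbf{while}-loop in lines~10--13 therefore removes exactly the remaining vertices of $M$. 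Since the union of all vertices deleted along this branch equals $M$ up to reduction-rule deletions (which are themselves contained in $M$), the parameter never goes negative and the branch returns ``yes''. The delicate points here are the confluence of the forced-deletion cascade, the fact that $w_c$ itself is never forced out (the \textbf{while}-loop explicitly excludes it, and $w_c$'s degree can only be matched, never undershot, by another vertex), and that the two guesses $N',N''$ really do exhaust $M\cap N(w_c)$ across the two loops.

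For the running time, the first loop iterates $O(2^{|N_E|})$ times, and $|N_E|\le s_e$ because the edges joining $w_c$ to $N_E$ are, by definition of $N_E$, distinct edges of the feedback edge set $E_f$. After deleting $N'$ and re-applying the reduction rule, the current graph $G'$ still has a feedback edge set of size at most $s_e-|N_E|$, and by Observation~\ref{lemma_mddDegreeGT2} every vertex of $G'$ other than $w_c$ has degree at least three; running the leaf-counting argument from the proof of Theorem~\ref{MDD_linearkernel} on the forest obtained by deleting this remaining feedback edge set shows that $w_c$ has at most $s_e-|N_E|$ neighbours outside $N_E$, so the second loop iterates $O(2^{s_e-|N_E|})$ times. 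Multiplying the two factors gives $O(2^{|N_E|}\cdot 2^{s_e-|N_E|})=O(2^{s_e})$ search-tree leaves. At each leaf the work consists of a constant number of applications of Reduction Rule ``Remove Low Degree'' and of the \textbf{while}-loop; using the adjacency-matrix bookkeeping of Lemma~\ref{lemma_mddWithD01} each such application costs $O(n^2\cdot k)=O(n^3)$ time (assuming, without loss of generality, $k\le n$), so the total running time is $O(2^{s_e}\cdot n^3)$.

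I expect the main obstacle to be this running-time step, specifically justifying the bound $|N(w_c)\setminus N_E|\le s_e-|N_E|$ on the reduced graph: one has to be careful about which feedback edge set is used (the restriction of $E_f$ surviving the deletion of $N'$), about the fact that $w_c$ only enjoys the weaker degree-at-least-two guarantee rather than degree-at-least-three, and about whether $w_c$ is a leaf, a degree-two vertex, or an internal high-degree vertex of the residual forest; these cases must be handled exactly as in the kernel proof so that each relevant tree leaf is charged to two feedback edges and each feedback edge to at most two leaves.
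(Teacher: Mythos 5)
Your proposal is correct and follows essentially the same route as the paper: branch over the subsets of $N_E$ and of $N(w_c)\setminus N_E$, finish each branch with the forced cascade of low-degree deletions, bound $|N_E|\le s_e$ directly and $|N(w_c)\setminus N_E|\le s_e-|N_E|$ via the leaf-counting argument from the kernel proof, and multiply. You are in fact somewhat more explicit than the paper about the completeness direction (exhibiting the branch $N'=M\cap N_E$, $N''=M\cap(N(w_c)\setminus N_E)$) and about the edge cases in the leaf count involving $w_c$'s weaker degree guarantee, which the paper glosses over.
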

As already mentioned we also show fixed-parameter tractability for the parameter $s_v$.
Since $t_w:=$``treewidth of the input graph'' is a lower bound for $s_v$, fixed-parameter tractability for $t_w$ would trivially imply fixed-parameter tractability for $s_v$.
Hence, we start with the investigation of the parameterized complexity of \textsc{Minimum Degree Deletion} with respect to the parameter $t_w$ in the following section.

\section{Treewidth as parameter}
In the previous sections, we have presented explicit fixed-parameter algorithms to prove that the problems are tractable with respect to the corresponding parameters.
It is sometimes hard to find explicit algorithms that solve a parameterized problem. 
Fortunately, there are results that state that large classes of problems can be solved in linear time when a tree decomposition with constant treewidth is known (see~\cite{Cou90,Cou09}).
Note that the method stated in this section is of purely theoretical interest.
The corresponding running times have huge constant factors and combinatorial explosions with respect to the parameter treewidth.
Hence, for practical applications one should search for an effective problem-specific algorithm.

\subsection{Monadic second-order logic}
We use a tool called \textit{monadic second-order logic} (or short \textit{MSO}).
This is an extension to the well-known first-order logic by quantification over sets.
Courcelle's Theorem~\cite{Cou90} says that the verification of a graph property is fixed-parameter tractable with respect to the parameter treewidth if the property can be expressed with monadic second-order logic.
An extensive overview about the field of monadic second-order logic can be found in~\cite{Cou09}.
In the following, we describe the language and syntax of MSO-formulae.
An MSO-formula consists of:
\begin{itemize}
 \item an infinite supply of \textit{individual variables}, by convention denoted by small letters $x$, $y$, $z$, $\dots$,
 \item an infinite supply of \textit{set variables}, by convention denoted by capital letters $X$, $Y$, $Z$, $\dots$,
 \item to express graph properties two unary relations, by convention denoted as~$V$ and~$E$, and a binary relation, by convention denoted as $I$, where
 \begin{itemize}
  \item the relation $V$ can be interpreted as ``being a vertex'',
  \item the relation $E$ can be interpreted as ``being an edge'',
  \item and the relation $I$ can be interpreted as ``being incident'',
 \end{itemize}
 \item some logical operators ($\neg$, $\wedge$, $\vee$, $\rightarrow$, and $\leftrightarrow$) and the quantifiers $\exists$ and $\forall$.
\end{itemize}
The relations will be used in prefix notation.
For a graph $G=(V,E)$ let $U:=V \cup E$.
An \textit{assignment} $\alpha$ for an MSO-formula maps each individual variable to an element of~$U$ and each set variable to a subset of $U$.
One defines the concept of an assignment $\alpha$ \textit{satisfying} an MSO-formula $\phi$, written $(G,\alpha) \models \phi$ for a given graph $G$.
Now, we can define the \textit{atomic} MSO-formulae and their semantics.
Let $G=(V,E)$ be a graph, $x$ and $y$ being individual variables, and $X$ be a set variable.
We have the following atomic MSO-formulae:
\begin{center}
\begin{tabular}{l | l}
 atomic formula & semantics \\
 \hline
 $x = y$ & $(G,\alpha) \models x = y$ $\Leftrightarrow$ $\alpha(x) = \alpha(y)$ \\
 $Vx$ & $(G,\alpha) \models Vx$ $\Leftrightarrow$ $\alpha(x) \in V$ \\
 $Ex$ & $(G,\alpha) \models Ex$ $\Leftrightarrow$ $\alpha(x) \in E$ \\
 $Ixy$ & $(G,\alpha) \models Ixy$ $\Leftrightarrow$ $\alpha(x) \in V$ is incident to $\alpha(y) \in E$ \\
 $Xx$ & $(G,\alpha) \models Xx$ $\Leftrightarrow$ $\alpha(x) \in X$ \\
\end{tabular}
\end{center}
Moreover, all other (more complex) MSO-formulae can be inductively built as follows:
\begin{itemize}
 \item If $\phi$ is an MSO-formula, then $\neg \phi$ is an MSO-formula as well.
 \item If $\phi$ and $\psi$ are MSO-formulae, then $\phi \wedge \psi$, $\phi \vee \psi$, $\phi \rightarrow \psi$, and $\phi \leftrightarrow \psi$ are MSO-formulae as well.
 \item If $\phi$ is an MSO-formula, $x$ is an individual variable, and $X$ is a set variable, then $\exists x \phi$, $\forall x \phi$, $\exists X \phi$, and $\forall X \phi$ are MSO-formulae as well.
\end{itemize}
Although ``$\rightarrow$'' and ``$\leftrightarrow$'' are not explicitly necessary we list them for sake of completeness.
Their semantics is analogous to first-order logic by combining ``$\wedge$'', ``$\vee$'', and ``$\neg$''.
The constructions have the following semantics:
\begin{center}
\begin{tabular}{l | l}
 construct & semantics \\
 \hline
 $\neg \phi$ & $(G,\alpha) \models \neg \phi$ $\Leftrightarrow$ $(G,\alpha) \not\models \phi$ \\
 $\phi \wedge \psi$ & $(G,\alpha) \models \phi \wedge \psi$ $\Leftrightarrow$ $(G,\alpha) \models \phi$ and $(G,\alpha) \models \psi$ \\
 $\phi \vee \psi$ & $(G,\alpha) \models \phi \vee \psi$ $\Leftrightarrow$ $(G,\alpha) \models \phi$ or $(G,\alpha) \models \psi$ \\
 $\exists x$ & $(G,\alpha) \models \exists x \phi$ $\Leftrightarrow$ there exists an $a \in U$ such that $(G,\alpha \frac{a}{x}) \models \phi$ \\
 $\forall x$ & $(G,\alpha) \models \forall x \phi$ $\Leftrightarrow$ for all $a \in U$ it holds that $(G,\alpha \frac{a}{x}) \models \phi$ \\
 $\exists X$ & $(G,\alpha) \models \exists X \phi$ $\Leftrightarrow$ there exists an $A \subseteq U$ such that $(G,\alpha \frac{A}{X}) \models \phi$ \\
 $\forall X$ & $(G,\alpha) \models \forall X \phi$ $\Leftrightarrow$ for all $A \subseteq U$ it holds that $(G,\alpha \frac{A}{X}) \models \phi$ \\
\end{tabular}
Herein $\alpha \frac{\xi}{\delta}$ denotes an assignment with $\alpha \frac{\xi}{\delta}(\delta)=\xi$ and $\alpha \frac{\xi}{\delta}(\zeta)=\alpha(\zeta)$ for all $\zeta \neq \delta$.
\end{center}

Analogously to first order logic, an MSO-\textit{sentence} is an MSO-formula without free variables.
Now, we are ready to present the main result in this field according to fixed-parameter algorithms.
To this end, we define the following problem: 

\begin{verse}
 \textsc{MSO-Check}\\
 \textit{Given:} A graph $G$ and an MSO-sentence $\varphi$.\\
 \textit{Question:} Is there an assignment $\alpha$ such that $(G,\alpha) \models \varphi$?
\end{verse}
It is easy to see that one can reduce every graph problem that is based on a graph property, expressible by an MSO-sentence, to \textsc{MSO-Check}:
Compute an MSO-sentence that expresses the graph property and take the graph and the sentence as input for the \textsc{MSO-Check} algorithm.
Courcelle developed the following important theorem~\cite{Cou90}:
\begin{theorem}[Courcelle's Theorem]
\label{theorem_courcelle}
 \textsc{MSO-Check} is fixed-parameter-tractable with respect to the combined parameter $(\tw(G),|\varphi|)$.
 Moreover, there is a computable function~$f$ and an algorithm that solves \textsc{MSO-Check} in time $f(\tw(G),|\varphi|) \cdot |G| + O(|G|)$.
\end{theorem}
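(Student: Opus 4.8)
The plan is to reduce \textsc{MSO-Check} to the evaluation of a finite bottom-up tree automaton on a labelled tree, following the classical automata-theoretic route. First I would preprocess the input. Using Bodlaender's linear-time algorithm I compute a tree decomposition of $G$ of width $O(\tw(G))$ and, by a standard transformation, turn it into a \emph{nice} tree decomposition with $O(|G|)$ nodes. Since $\varphi$ may quantify over edge sets (we have the relations $V$, $E$ and the incidence relation $I$), I would pass to the incidence structure: replace $G$ by the bipartite graph on $V \cup E$, whose treewidth is at most $\tw(G)+1$, so that quantification over subsets of edges becomes quantification over subsets of the ``edge side'' of the universe and $\varphi$ becomes an ordinary MSO sentence over a structure of bounded treewidth. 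Finally I would encode the nice tree decomposition together with its bags as a term $t$ over a finite signature $\Sigma_w$ of ``graph operations'' (introduce, forget, join, combining structures with bounded boundary), so that the structure decoded from $t$ is exactly this incidence structure. All of this costs $f_0(\tw(G))\cdot|G| + O(|G|)$ time and yields a tree $t$ of size $O(|G|)$.

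The core step is the automaton construction. For every formula $\psi(X_1,\dots,X_r)$ with free set variables I would build a bottom-up tree automaton $\mathcal{A}_\psi$ over the alphabet $\Sigma_w$ enriched by $r$ Boolean coordinates (recording at each node of $t$ which introduced elements lie in which $X_i$) such that $\mathcal{A}_\psi$ accepts an enriched term iff the decoded structure, under the encoded assignment, satisfies $\psi$. This proceeds by induction on $\psi$: for the atomic formulas $x=y$, $Vx$, $Ex$, $Ixy$, $Xx$ one reads membership, the vertex/edge classification, and incidences directly off the bag labels, using that an incidence is visible at the node where both incident elements are simultaneously present; for $\neg\psi$ one first determinises $\mathcal{A}_\psi$ and complements the accepting set; for $\psi_1\wedge\psi_2$ one takes the product automaton; and for $\exists X_i\,\psi$ one \emph{projects away} the $i$-th Boolean coordinate, obtaining a nondeterministic automaton that is then determinised before the next inductive step. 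Each step keeps the state set finite and termination of the induction is immediate; the point that needs care is that the projection--determinisation combination never breaks finiteness — and this is exactly where the number of states, hence the function $f$ in the statement, grows as a tower of exponentials in $|\varphi|$, whose height is governed by the quantifier-alternation depth.

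With $\varphi$ a sentence (so $r=0$) I would run the deterministic automaton $\mathcal{A}_\varphi$ bottom-up on $t$ in time linear in $|t|=O(|G|)$ and answer \textbf{yes} iff the state reached at the root is accepting. Correctness follows from the inductive correctness of the construction — equivalently, from the invariant that the state reached at a node records precisely the MSO-theory (up to the relevant quantifier rank) of the substructure decoded below it — together with the fact that $t$ decodes to $G$. Adding the preprocessing cost and the automaton run gives the claimed bound $f(\tw(G),|\varphi|)\cdot|G| + O(|G|)$, with $f$ absorbing both the width-dependent constants of the decomposition and the (non-elementary) size of $\mathcal{A}_\varphi$.

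The main obstacle is twofold. The first is Bodlaender's theorem itself: obtaining a bounded-width tree decomposition in genuinely linear time is a deep result, and without it one gets only a quadratic (or approximation-based) dependence on $|G|$; since the statement asks for the $f(\cdot)\cdot|G| + O(|G|)$ form I would cite it as a black box. The second, and the conceptual heart of the argument, is the MSO-to-tree-automaton correspondence: one must show that the class of finite automata over $\Sigma_w$-terms is closed under complement and under projection, the subtle part being that the repeated determinisations forced by nested set quantifiers preserve finiteness at every level. I would also need the lemma bounding the treewidth of the incidence graph by $\tw(G)+1$ (to legitimise edge-set quantification) and the routine verification that $V$, $E$, $I$ are MSO-definable from the $\Sigma_w$-encoding. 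As a fallback presentation I could replace the explicit automata by a direct dynamic program over the nice tree decomposition whose table entries are the sets of realisable MSO-$q$-types (for $q$ the quantifier rank of $\varphi$) of the partial structure, with the Feferman--Vaught composition lemma supplying the transitions; this is the same argument in different dress and has the same non-elementary type count as its bottleneck.
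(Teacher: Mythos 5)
The paper does not prove this statement at all: it is imported verbatim as Courcelle's Theorem with a citation to Courcelle (1990) and used as a black box, so there is nothing in the source to compare your argument against line by line. That said, your sketch is a faithful and essentially correct outline of the standard automata-theoretic proof: compute a bounded-width (nice) tree decomposition in linear time via Bodlaender's algorithm, pass to the incidence structure so that the relations $V$, $E$, $I$ and quantification over edge sets become ordinary MSO over a universe of treewidth at most $\tw(G)+1$, encode the decomposition as a term over a finite signature, and translate $\varphi$ by induction into a finite bottom-up tree automaton, using products for conjunction, complementation after determinisation for negation, and projection followed by determinisation for existential set quantifiers. You correctly isolate the two genuinely hard ingredients (Bodlaender's linear-time decomposition and closure of recognisable tree languages under projection and complement with the attendant non-elementary blow-up in $|\varphi|$), and the Feferman--Vaught / $q$-type dynamic-programming fallback you mention is indeed the standard equivalent reformulation. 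The only caveat is that, at this level of detail, the argument remains a proof plan rather than a proof --- the inductive correctness of the automaton construction and the decoding invariant are asserted rather than carried out --- but every step you name is a known, correct component of the published proof, so there is no gap in the approach itself.
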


We end with some simple examples for the application of Theorem~\ref{theorem_courcelle}.
\paragraph{Examples and extension.}
We start with a simple and well-studied graph property: \textit{3-colorability}.
A (vertex-)\textit{coloring} of an undirected graph is a mapping from the set of vertices to a (finite) set of colors, such that no two adjacent vertices have the same color.
We say that the graph is 3-colorable if there is a coloring with three colors.
The graph property ``to be 3-colorable'' is expressible with an MSO-sentence:
\begin{gather*}
 \varphi = \exists C_1 \exists C_2 \exists C_3 \bigg( \forall x: V x \rightarrow ( C_1 x \vee C_2 x \vee C_3 x) \bigg) \wedge \\
 \bigg( \forall e, \forall a \neq b: ( E e \wedge I a e \wedge I b e) \rightarrow \neg( (C_1 a \wedge C_1 b) \vee (C_2 a \wedge C_2 b) \vee (C_3 a \wedge C_3 b)) \bigg).
\end{gather*}
Thus, due to Theorem~\ref{theorem_courcelle} to decide whether a graph is 3-colorable is fixed-parameter tractable with respect to the parameter treewidth.

Arnborg et al.~\cite{ALS91} generalized this theorem to the case of ``extended monadic second-order logic''.
Here, we can additionally make use of set cardinalities.
This also includes the optimization problems regarding the set sizes. 
In this work, we only need the operation ``$\min X: P(X)$'' that expresses that a specific predicate $P$ holds for a minimum-size vertex set $X$.
Independently, Borie et al.~\cite{BPT92} obtained similar results, but from a more algorithmic point of view.
In their Theorem~3.5, they explicitly say that if a ``minimum edge/vertex deletion problem`` \textsc{Prob} that is based on a graph property which is expressible in MSO-logic, then \textsc{Prob} is expressible in extended monadic second-order logic.
To become familiar with the minimization extension, we give an easy example here:
Let $G=(V,E)$ be an undirected graph.
A \textit{vertex cover} is a subset $C$ of vertices in $V$ such that every edge in $E$ is incident to at least one vertex in $C$.
The predicate $\VC(X):=$``to be a vertex cover'', with $X$ being a vertex set, is expressible in MSO-logic:
$$ \VC(C) = \forall e: Ee \rightarrow (\exists v: (Cv \wedge I v e)).$$
The optimization variant of vertex cover where one asks for a minimum-size vertex cover is expressible in extended MSO-logic:
$$ \min C: \VC(C).$$
We use extended MSO in the next section to investigate the parameterized complexity of \textsc{Minimum Degree Deletion} with respect to the parameter treewidth.

\subsection{MSO expression for Minimum Degree Deletion}
\label{MSO expression for Minimum Degree Deletion}
In the following, we give a monadic second order sentence to prove the fixed-parameter tractability for \textsc{Minimum Degree Deletion} with respect to the parameter $t_w:=$``treewidth of the input graph''.
Since $s_v \geq t_w$ it follows that \textsc{Minimum Degree Deletion} is also fixed-parameter tractable with respect to the parameter $s_v$.

We start with an observation that helps us gaining an alternative view on the problem.

\begin{observation}
\label{wc_maxdegree_tw}
 Let $G=(V,E)$ be an undirected graph with treewidth $t_w$.
 Let $M^*$ be any solution set.
 Let $C$ be a tree decomposition of $G - M^*$ with maximum bag-size $t_w + 1$ and $C$ is minimal, that is, it is not possible is obtain another tree decomposition by removing vertices from the bags.
 It holds that $w_c$ has degree of at most $t_w - 1$ in $G - M^*$.
\end{observation}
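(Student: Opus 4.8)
The plan is to combine the hypothesis that $w_c$ is the \emph{unique} minimum-degree vertex of $H := G - M^*$ with the classical fact that every graph of treewidth $t$ contains a vertex of degree at most $t$; the uniqueness of the minimum is precisely what will yield the extra $-1$ in the bound. Throughout I would work with the given minimal tree decomposition $C = (\{X_i \mid i \in I\}, T)$ of $H$, whose bags have size at most $t_w + 1$.

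The key step is the claim that, when $T$ has at least two nodes, every leaf bag $X_\ell$ of $T$ contains a vertex $v$ occurring in no other bag, and that such a $v$ satisfies $N(v) \subseteq X_\ell \setminus \{v\}$ in $H$, hence $\deg_H(v) \le |X_\ell| - 1 \le t_w$. To prove the first part I would argue by contradiction: if no vertex of $X_\ell$ were private to $X_\ell$, then by the connectivity axiom (property~3 of a tree decomposition) every $v \in X_\ell$ would also lie in the bag $X_p$ of the unique neighbour $p$ of $\ell$ in $T$, so $X_\ell \subseteq X_p$; but then deleting any one vertex $v$ from $X_\ell$ would still satisfy all three axioms --- coverage of $v$ because $v \in X_p$, coverage of any edge at $v$ because its other endpoint lies in $X_\ell \subseteq X_p$, and connectivity because $\ell$ is a leaf of $T$ and hence a leaf of the subtree of bags containing $v$ --- contradicting the minimality of $C$. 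The containment $N(v) \subseteq X_\ell$ then follows since every edge of $H$ lies in some bag and $X_\ell$ is the only bag containing $v$. Two distinct leaves of $T$ give two \emph{distinct} such private vertices, so $H$ has at least two vertices of degree at most $t_w$.

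Finally I would assemble the cases. If $T$ has at least two nodes, pick one of the two degree-$\le t_w$ vertices above that differs from $w_c$, call it $v$; since $w_c$ is the only vertex of minimum degree in $H$ and $v \ne w_c$, we must have $\deg_H(w_c) < \deg_H(v) \le t_w$, i.e.\ $\deg_H(w_c) \le t_w - 1$. If $T$ consists of a single node, then $V(H)$ is its only bag and $|V(H)| \le t_w + 1$; taking any vertex $v \ne w_c$ (the case $V(H) = \{w_c\}$ being degenerate), we get $\deg_H(v) \le |V(H)| - 1 \le t_w$ and conclude as before. I expect the only real obstacle to be the careful verification that minimality of $C$ forces each leaf bag to contain a private vertex, in particular that removing such a vertex from its leaf bag preserves the connectivity axiom; the remaining degree counting is routine.
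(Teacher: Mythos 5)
Your proof is correct and follows essentially the same route as the paper: a minimal tree decomposition has a leaf bag containing a vertex private to that bag, whose neighbourhood is therefore confined to the bag, giving a vertex of degree at most $t_w$, and uniqueness of the minimum degree then forces $\deg(w_c) \le t_w - 1$. You are in fact more careful than the paper's four-line contradiction argument, which does not explicitly address the possibility that the private vertex coincides with $w_c$ (you resolve this by taking two distinct leaves) nor the single-bag case.
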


\begin{proof}
 Assume towards a contradiction that $w_c$ has degree at least $t_w$ in $G - M^*$.
 Let $L$ be a leaf bag of $C$.
 There is a vertex $v_{l}$ that is only in bag $L$, because $C$ is minimal.
 Since $L$ has size $t_w+1$ by definition, $v_l$ can have at most $t_w$ neighbors; a contradiction to the fact that $w_c$ is the only vertex of minimum degree in $G - M^*$.
\end{proof}

Inspired by Observation~\ref{wc_maxdegree_tw} we can formalize \textsc{Minimum Degree Deletion} as follows:

\begin{verse}
 \textsc{Minimum Degree Deletion$^*$}\\
 \textit{Given:} An undirected graph $G=(V,E)$, a distinguished vertex $w_c \in V$, an integer $k \geq 1$, and an integer $i \leq t_w$.\\
 \textit{Question:} Is there a subset $K \subseteq V \setminus {w_c}$ of size $k$ such that $w_c$ has exactly $i$ neighbors not in $K$ and each other vertex is in $K$ or has more than $i$ neighbors not in $K$.
\end{verse}
It is easy to see that $(G,k)$ is a yes-instance of \textsc{Minimum Degree Deletion} if and only if $(G,k,i)$ is a yes-instance of \textsc{Minimum Degree Deletion$^*$} for some $i \leq t_w$.
Hence, we describe an MSO-sentence $\varphi$ expressing the graph property given by the question of \textsc{Minimum Degree Deletion$^*$}.
For the ease of presentation we first describe some of the parts of the sentence:
The formula part $\adj(x,y)$ is satisfiable if and only if $x$ and $y$ are adjacent:
\begin{gather*}
 \adj(x,y) = Vx \wedge Vy \wedge ( \exists e ( Ee \wedge Ixe \wedge Iye )).
\end{gather*}

The formula part $\iNotKNeighbors(x,K,i)$ is satisfiable if and only if there are at least $i$ neighbors of $x$ which are not contained in the vertex subset $K$:
\begin{gather*}
 \iNotKNeighbors(x,K,i) = \\ \exists n_1 \exists n_2 \dots \exists n_i \bigg( \bigg(\bigwedge_{1 \leq a \leq i} \adj(x,n_a) \wedge \neg K n_a \bigg) \wedge \bigg(\bigwedge_{1 \leq a < b \leq i} n_a \neq n_b \bigg) \bigg).
\end{gather*}

The formula part $\iNotKNeighborsEx(x,K,i)$ is satisfiable if and only if there are exactly $i$ neighbors of $x$ which are not contained in the vertex subset $K$:
\begin{gather*}
 \iNotKNeighborsEx(x,K,i) = \\
 \exists n_1 \exists n_2 \dots \exists n_i \bigg( \bigg( \bigwedge_{1 \leq a \leq i} \adj(x,n_a) \wedge \neg K n_a \bigg) \wedge \bigg( \bigwedge_{1 \leq a < b \leq i} n_a \neq n_b \bigg) \\
 \wedge \bigg( \forall n_z \bigg( \bigg( \adj(x,n_z) \wedge \bigg( \bigwedge_{1 \leq b \leq i} n_z \neq n_b \bigg) \bigg) \rightarrow Kn_z \bigg) \bigg) \bigg).
\end{gather*}

Putting all together we get:
\begin{gather*}
 \varphi_i = \min K \bigg( \iNotKNeighborsEx(w_c,K,i) \\
 \wedge \bigg( \forall x ( Vx \wedge x \neq w_c ) \rightarrow ( \iNotKNeighbors(x,K,i+1) \vee Kx ) \bigg) \bigg).
\end{gather*}

It is easy to see that $\varphi_i$ is an MSO-sentence.
By construction there is an assignment $\alpha$ such that $(G,\alpha) \models \varphi_i$ if and only if there is a subset $K \subseteq V$ such that $w_c$ is the only vertex with minimum degree and $\deg(w_c)=i$ in $G - K$.
Due to Theorem~\ref{theorem_courcelle} it follows that \textsc{Minimum Degree Deletion$^*$} is fixed-parameter tractable with respect to the combined parameter $(t_w,|\varphi_i|)$.
Due to Observation~\ref{wc_maxdegree_tw} an upper bound for the complexity of each $\varphi_i$ only depends linear on $t_w$ (which defines the maximum value for $i$).
Thus, we get the following theorem:
\begin{theorem}
\label{theorem_MDDtw}
  \textsc{Minimum Degree Deletion} is fixed-parameter tractable with respect to the parameter $t_w:=$``treewidth of the input graph''.
\end{theorem}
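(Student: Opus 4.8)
The plan is to turn \textsc{Minimum Degree Deletion} into a bounded collection of \textsc{MSO-Check} instances and then invoke Courcelle's Theorem. First I would use the equivalence already recorded above: $(G,k)$ is a yes-instance of \textsc{Minimum Degree Deletion} if and only if $(G,k,i)$ is a yes-instance of \textsc{Minimum Degree Deletion$^*$} for some $i \le t_w$, the bound $i \le t_w$ being exactly Observation~\ref{wc_maxdegree_tw}. Hence it suffices, for each of the $t_w+1$ candidate final degrees $i \in \{0,1,\dots,t_w\}$, to decide whether some vertex subset $K \subseteq V \setminus \{w_c\}$ of size at most $k$ makes $w_c$ the unique minimum-degree vertex with $\deg_{G-K}(w_c)=i$.

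Next I would, for each fixed $i$, take the sentence $\varphi_i$ assembled above from the formula parts $\adj$, $\iNotKNeighbors$, and $\iNotKNeighborsEx$ together with the minimization operator $\min K$ of extended MSO. The correctness claim --- that the extended-MSO quantity ``minimum $|K|$ with $(G,\cdot) \models \varphi_i$'' equals the smallest size of a deletion set turning $w_c$ into the unique minimum-degree vertex of final degree $i$ --- follows directly from the semantics of the building blocks; comparing this minimum against $k$ then answers \textsc{Minimum Degree Deletion$^*$} for that $i$. Running the test for every $i \le t_w$ and returning ``yes'' iff some test succeeds is correct by the equivalence of the previous paragraph (deleting a smaller set can only help, so restricting the decision problem to ``$\le k$'' rather than ``$=k$'' is harmless).

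Finally, I would feed each pair $(G,\varphi_i)$ to the algorithm of Theorem~\ref{theorem_courcelle}, which runs in time $f(\tw(G),|\varphi_i|) \cdot |G| + O(|G|)$. The decisive observation --- and essentially the only point that needs care --- is that $|\varphi_i|$ grows linearly in $i$, so by $i \le t_w$ it is linear in $t_w$; consequently $f(\tw(G),|\varphi_i|)$ is a computable function of $t_w$ alone, and summing over the $t_w+1$ values of $i$ still gives running time $g(t_w) \cdot n$ for a computable $g$, which is exactly fixed-parameter tractability. I do not expect a deep obstacle here: the main thing to get right is this size bound on $\varphi_i$, so that the combined parameter $(\tw,|\varphi_i|)$ of Courcelle's Theorem collapses to a function of $t_w$; a secondary remark worth including is that Theorem~\ref{theorem_courcelle} is phrased directly in terms of $\tw(G)$, so no explicit tree decomposition needs to be supplied. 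Fixed-parameter tractability with respect to $s_v$ then follows at once from $s_v \ge t_w$.
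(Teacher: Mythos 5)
Your proposal follows essentially the same route as the paper: reduce to \textsc{Minimum Degree Deletion$^*$} for each $i \le t_w$ via Observation~\ref{wc_maxdegree_tw}, express each instance by the extended-MSO sentence $\varphi_i$, and invoke Courcelle's Theorem, noting that $|\varphi_i|$ is bounded by a function of $t_w$ so the combined parameter collapses to $t_w$ alone. The only cosmetic remark is that $|\varphi_i|$ is in fact quadratic in $i$ (because of the pairwise-distinctness clauses $\bigwedge_{a<b} n_a \neq n_b$) rather than linear, but this does not affect the fixed-parameter tractability conclusion.
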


\section{Size of a feedback vertex set as parameter}
\label{Size of a Feedback Vertex Set as parameter for Minimum Degree Deletion}
In this section, we investigate the parameterized complexity of \textsc{Minimum Degree Deletion} with respect to the parameter $s_v$.
Since treewidth is a stronger parameter than $s_v$, the fixed-parameter tractability of \textsc{Minimum Degree Deletion} with respect to the parameter $s_v$ follows from Theorem~\ref{theorem_MDDtw}.
This result is mainly a classification.
A practicable algorithm is not given trough Courcelle's Theorem~\cite{Cou09}.
However, we investigate an interesting special case of the parameter ``size of a feedback vertex set''.
More  precisely, we show fixed-parameter tractability for \textsc{Minimum Degree Deletion} with respect to the parameter $s_v^*:=$``size of a feedback vertex set that does not contain $w_c$''.

Let $(G=(V,E),w_c,k)$ be the \textsc{Minimum Degree Deletion} instance and let $V_f:=\{v_{f_1},\dots,v_{f_{s_v^*}}\}$ be a concrete feedback vertex set that does not contain $w_c$.
The following observation bounds the final degree of $w_c$ in a solution graph from above by the parameter $s_v$:
\begin{observation}
\label{wc_maxdegree}
 Let $M^*$ be any solution set.
 It holds that $w_c$ has degree of at most $|V_f|$ in $G - M^*$.
\end{observation}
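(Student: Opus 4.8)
The plan is to mimic the argument behind Lemma~\ref{lemma_min_indegree_leq_MFVS} (and its treewidth analogue, Observation~\ref{wc_maxdegree_tw}), exploiting the elementary fact that every nonempty forest contains a vertex of degree at most one. First I would note that deleting vertices never creates new cycles, so $V_f$ is still a feedback vertex set of the solution graph $G - M^*$; moreover it still avoids $w_c$, since $w_c \notin V_f$ and $w_c \notin M^*$ (a solution set is a subset of $V \setminus \{w_c\}$). Hence $F := (G - M^*) - V_f$ is a forest, and $w_c \in V(F)$.

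Next I would distinguish two cases according to $|V(F)|$. If $V(F) = \{w_c\}$, then every neighbour of $w_c$ in $G - M^*$ lies in $V_f$, so $\deg_{G-M^*}(w_c) \le |V_f|$ and the bound holds trivially. Otherwise $|V(F)| \ge 2$, and I claim $F$ contains a vertex $v \neq w_c$ with $\deg_F(v) \le 1$: every nonempty tree has such a vertex (an isolated vertex has degree $0$, and a tree on at least two vertices has at least two leaves), and at most one connected component of $F$ contains $w_c$, so one may always pick a degree-$\le 1$ vertex of $F$ distinct from $w_c$ — either a second leaf of $w_c$'s component, or a leaf/isolated vertex of another component.

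Finally, for such a vertex $v$ the only neighbours of $v$ outside $F$ lie in $V_f$, whence $\deg_{G-M^*}(v) \le \deg_F(v) + |V_f| \le |V_f| + 1$. Since $w_c$ is, by assumption, the \emph{unique} vertex of minimum degree in $G - M^*$ and $v \neq w_c$, we get the strict inequality $\deg_{G-M^*}(w_c) < \deg_{G-M^*}(v) \le |V_f| + 1$, i.e.\ $\deg_{G-M^*}(w_c) \le |V_f|$, as desired. The only mildly delicate point — the ``main obstacle'' — is ensuring that the low-degree vertex handed over by the forest can be taken different from $w_c$ itself; this is precisely what forces the case split on $|V(F)|$ and the small counting argument on leaves and components, and it is also what lets us shave the bound down to $|V_f|$ rather than $|V_f|+1$, the extra $-1$ coming from the strictness of the minimum-degree inequality.
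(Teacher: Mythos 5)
Your proof is correct and follows essentially the same route as the paper's: remove a feedback vertex set from the solution graph, find a degree-at-most-one vertex of the resulting forest other than $w_c$, and use the uniqueness of $w_c$'s minimum degree to bound $\deg(w_c)$ by $|V_f|$. The only difference is cosmetic — you work with $V_f$ itself and treat the case $V(F)=\{w_c\}$ explicitly, whereas the paper argues by contradiction via a minimal feedback vertex set of the solution graph; your version is, if anything, slightly more careful about the edge cases.
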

\begin{proof}
 Assume that $w_c$ has degree at least $|V_f| + 1$ in the solution graph $G^* := G - M^*$.
 Consider a minimal feedback vertex set $V_f^*$ of the solution graph $G^*$.
 Due to the minimality of $V_f^*$, $G^* - V_f^*$ is a forest with at least two vertices.
 Hence, $G^* - V_f^*$ contains at least two vertices with degree at most $1$.
 This means  $G^*$ contains at least two vertices with at most $|V_f| + 1$ neighbors, because each of them has at most $|V_f^*| \leq |V_f|$ neighbors in the feedback vertex set.
 Thus, $w_c$ is not the only with minimum degree; a contradiction.
\end{proof}
\begin{figure}
 \begin{algorithmic}[1]
  \Procedure{\texttt{MDD-solv}}{$G$, $w_c$, $k$, $V_f$}
   \For{each $V_f^* \subseteq V_f$ with $|V_f^*| \leq k$}
    \State $G' := G - V_f^*$
    \State $k' := k - |V_f^*|$
    \For{$i:=$0 to $|V_f|$}
     \While{there is a vertex $v \neq w_c$ with degree at most $i$}
      \State Remove $v$ from $G'$.
      \State $k':=k' - 1$
     \EndWhile
     \If{\texttt{AnnotatedMDD}($G'$, $w_c$, $k'$, $V'_f := V_f \setminus V_f^*$, $i$)}
      \State \Return 'yes'
     \EndIf
    \EndFor
   \EndFor
  \State \Return 'no'
  \EndProcedure
 \end{algorithmic}
 \caption{The algorithm \texttt{MDD-solv} solves \textsc{Minimum Degree Deletion}. The variable~$i$ represents the final degree of $w_c$ in the solution graph. Due to Observation~\ref{wc_maxdegree}, this final degree is bounded from above by the parameter.}
 \label{Fixed-parameter algorithm that solves that solves Minimum Degree Deletion with respect to the parameter s_v.} 
\end{figure}
We first introduce a template algorithm \texttt{MDD-solv} (see Figure~\ref{Fixed-parameter algorithm that solves that solves Minimum Degree Deletion with respect to the parameter s_v.}).
It solves \textsc{Minimum Degree Deletion} by calling a subroutine that solves a slightly modified version of \textsc{Minimum Degree Deletion} after doing some preprocessing and branching.
The preprocessing and branching part \texttt{MDD-solv} works as follows:
\begin{enumerate}
 \item Iterate over all subsets of the feedback vertex set $V_f$ (of size at most $k$) to fix which of the feedback vertex set vertices belongs to the solution set.
       Remove these vertices from $G$ and from $V_f$ and decrease the parameter accordingly. (lines 2-4)
 \item Iterate over all possible values $i$ for the final degree of $w_c$ in the solution graph. (line 5)
 \item Remove iteratively every vertex ($\neq w_c$) with degree at most $i$ and decrease the parameter accordingly. (lines 6-9)
\end{enumerate}
Now, we specify \texttt{AnnotatedMDD} which called as subroutine in the algorithm \texttt{MDD-solv}.
In line 10, \texttt{AnnotatedMDD} is used to solve the following problem:
\newpage
\begin{verse}
 \textsc{Annotated Minimum Degree Deletion}\\
 \textit{Given:} An undirected graph $G=(V,E)$, a feedback vertex set $V_f$, a distinguished vertex $w_c \in V$ with $w_c \notin V_f$, and two positive integer $k$ and $i$.
                 Every vertex in $G$ except $w_c$ has degree at least $i+1$. \\
 \textit{Question:} Is there a subset $M^* \subseteq V \setminus (V_f \cup \{w_c\})$ of size at most $k$ such that $w_c$ has degree $i$ in $G - M^*$ and every other vertex from $G - M^*$ has degree at least $i+1$?
\end{verse}

\begin{figure}
 \begin{algorithmic}[1]
  \Procedure{\texttt{Annotated-MDD-XP}}{$G$, $w_c$, $k$, $V_f$}
   \For{each $N_r \subseteq N(w_c)$ with $|N_r| = i$}
    \State $M^*:= V \setminus N_r$
    \State Remove each vertex in $M^*$ from $G$
    \While{there is a vertex $v \neq w_c$ with degree at most $i$}
     \State Remove $v$ from $G$.
     \State $M^*:= M^* \cup \{v\}$
    \EndWhile
    \If{$M^* \cap V_f = \emptyset$}
     \If{$|M^*| \leq k$}
      \State \Return 'yes'
     \EndIf
    \EndIf
   \EndFor
  \State \Return 'no'
  \EndProcedure
 \end{algorithmic}
 \caption{The $\XP$-algorithm \texttt{Annotated-MDD-XP} solves \textsc{Annotated Minimum Degree Deletion} in $O(|x|^{|V_f|})$ time with $|x|$ being the input size.}
 \label{The XP-algorithm.} 
\end{figure}
The preprocessing and branching of \texttt{MDD-solv} takes $O(f(s_v^*) \cdot \poly(|x|)$ time such that fixed-parameter tractability for \textsc{Annotated Minimum Degree Deletion} with respect to $s_v^*$ implies fixed-parameter tractability for \textsc{Minimum Degree Deletion} with respect to $s_v^*$.
Hence, our goal is to develop a fixed-parameter algorithm for \textsc{Annotated Minimum Degree Deletion}.
To become familiar with the problem, we start with an exponential-time algorithm:
The algorithm \texttt{Annotated-MDD-XP} in Figure~\ref{The XP-algorithm.} solves \textsc{Annotated Minimum Degree Deletion} by branching over all possible size-$i$ subsets of the neighborhood of $w_c$ (line 2).
The neighbors of $w_c$ that are not in this subset are removed (lines 3-4). 
This ensures that $w_c$ will have final degree $i$.
The algorithm removes iteratively each vertex with degree at most $i$ (lines 5-8).
This ensures that the remaining graph is a solution graph.
Clearly, \texttt{Annotated-MDD-XP} solves \textsc{Annotated Minimum Degree Deletion}, but branching over the neighborhood subsets takes $O(\binom{n}{i})$ time.
A fixed-parameter algorithm needs an improved approach.
The following two subsection present two different methods for showing fixed-parameter tractability for \textsc{Annotated Minimum Degree Deletion} with respect to the parameter $s_v^*$.

\subsection{Integer linear programming}
\label{Integer linear programming.}
In this section, we present a fixed-parameter algorithm that solves \textsc{Annotated Minimum Degree Deletion} with respect to the parameter $s_v^*:=$``size of a feedback vertex set that does not contain $w_c$'' by using the technique of integer linear programming together with a result from Lenstra~\cite{L83}.
Consider an instance $(G=(V,E), V_f, w_c, k, i)$ of \textsc{Annotated Minimum Degree Deletion} with $V_f = \{v_1, \dots, v_{s_v^*}\}$.
Without loss of generality we assume that $w_c$ has final degree at least $2$ in the solution graph.
Instances with smaller final degree for $w_c$ can be identified and solved in polynomial time.
Furthermore, we use the terms of solution set and nearly-solution set.
Note that the following definition of solution set is conform to its general definition in Section~\ref{BDD_intro}.
\begin{definition}
 Let $(G = (V,E), V_f, w_c, k, i)$ be an instance of \textsc{Annotated Minimum Degree Deletion}.
 We say $M^* \subseteq V$ is a \textbf{solution set} if:
 \begin{enumerate}
  \item $|M^*| \leq k$,
  \item $M^* \cap (\{w_c\} \cup V_f) = \emptyset$,
  \item $w_c$ has degree $i$ in $G - M^*$,
  \item every vertex $v \neq w_c$ has degree at least $i+1$ in $G - M^*$.
 \end{enumerate}
 We say $M' \subseteq V$ is a \textbf{nearly-solution set} if:
 \begin{enumerate}
  \item $|M'| \leq k$,
  \item $M' \cap (\{w_c\} \cup V_f) = \emptyset$,
  \item $w_c$ has degree $i-1$ in $G - M'$,
  \item every vertex $v \neq w_c$ has degree at least $i$ in $G - M'$.
 \end{enumerate}
\end{definition}
We start with an important observation concerning the existence of optimal solution sets.
More precisely, we show that if there is a an optimal solution set $M^*$, then $M^*$ is detectable in time $O(f(|V_f|) \cdot \poly(|x|))$ or there is a nearly-solution set that is detectable in time $O(f(|V_f|) \cdot \poly(|x|))$ for a computable function $f$ only depending on the parameter value $|V_f|$ and a polynomial only depending on the input size $|x|$.

\begin{proposition}
\label{Md_remaining}
 Let $M^*$ be an optimal solution set.
 It holds that:
 \begin{enumerate}
  \item $N(w_c) \setminus (M^* \cup V_f) = \emptyset$, or
  \item $M^* \subseteq N(w_c)$, or
  \item $\exists M' \subseteq N(w_c)$ such that $|M'| \leq |M^*|$ and $M'$ is a nearly-solution set.
 \end{enumerate}
\end{proposition}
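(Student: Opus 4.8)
The plan is to dispose of the first two alternatives and, when both fail, to produce the required nearly-solution set by a single exchange. So suppose $N(w_c)\setminus(M^*\cup V_f)\neq\emptyset$ and $M^*\not\subseteq N(w_c)$; fix $x\in N(w_c)\setminus(M^*\cup V_f)$, write $M_1^*:=M^*\cap N(w_c)$, and note that $M^*\setminus N(w_c)\neq\emptyset$. Since $w_c\notin M^*$ and $w_c$ has degree $i$ in $G-M^*$, we have $\deg_G(w_c)=i+|M_1^*|$. The candidate is $M':=M_1^*\cup\{x\}$, i.e.\ we delete exactly the neighbours of $w_c$ that $M^*$ deletes, plus the one extra neighbour $x$.

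First I would check the cheap requirements of a nearly-solution set. By construction $M'\subseteq N(w_c)$; moreover $M'\cap(\{w_c\}\cup V_f)=\emptyset$, because $M_1^*\subseteq M^*$ avoids $\{w_c\}\cup V_f$, $x\notin V_f$, and every element of $M'$ is a neighbour of $w_c$ and hence $\neq w_c$. The size bound holds since $|M'|=|M_1^*|+1\le|M_1^*|+|M^*\setminus N(w_c)|=|M^*|\le k$, using that $M^*\setminus N(w_c)$ is non-empty (this is exactly where alternative~2 is used). Finally, all of $M'$ lies in $N(w_c)$, so $\deg_{G-M'}(w_c)=\deg_G(w_c)-|M'|=(i+|M_1^*|)-(|M_1^*|+1)=i-1$, and the standing assumption $i\geq 2$ makes this meaningful.

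It remains to verify that every vertex $v\neq w_c$ surviving in $G-M'$ has degree at least $i$, which I would split according to whether $v\in M^*$. If $v\notin M^*$, then $v$ already survives in $G-M^*$ with $\deg_{G-M^*}(v)\geq i+1$; since $M'\subseteq M^*\cup\{x\}$, at most one further neighbour of $v$ (possibly $x$) is removed, so $\deg_{G-M'}(v)\geq\deg_{G-M^*}(v)-1\geq i$. The delicate case is $v\in M^*\setminus N(w_c)$: such a $v$ survives $M'$ (it is not a neighbour of $w_c$), and its degree must be bounded from below directly. Here I would invoke optimality of $M^*$: $M^*\setminus\{v\}$ is not a solution set, yet in $G-(M^*\setminus\{v\})$ the vertex $w_c$ still has degree $i$ and every vertex other than $v$ and $w_c$ has degree at least $i+1$; hence the obstruction is $v$ itself, so $\deg_{G-(M^*\setminus\{v\})}(v)\leq i$, i.e.\ $|N_G(v)\cap M^*|\geq\deg_G(v)-i$. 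This I would combine with the standing hypothesis $\deg_G(v)\geq i+1$ of \textsc{Annotated Minimum Degree Deletion} to estimate $\deg_{G-M'}(v)$.

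I expect this last step to be the main obstacle: the crude bound $|N_G(v)\cap M'|\leq|N_G(v)\cap M^*|+1$ is too weak, since $v$ may have many neighbours inside $M_1^*\subseteq M'$. The fix will presumably require either a more careful choice of which $|M_1^*|+1$ neighbours of $w_c$ are placed into $M'$ — one must keep all of $N(w_c)\cap V_f$, which is possible because $|N(w_c)\cap V_f|\leq i-1$ once alternative~1 fails, and then one retains further neighbours that prop up the degrees of the surviving vertices of $M^*\setminus N(w_c)$ — or else following the deletion of $M'$ by a bounded greedy clean-up that removes only low-degree \emph{non-neighbours} of $w_c$, accompanied by an argument that this clean-up removes at most $|M^*\setminus N(w_c)|-1$ vertices so that $|M'|\leq|M^*|$ is preserved. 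Making one of these two refinements precise, using the optimality of $M^*$, is where the real work lies.
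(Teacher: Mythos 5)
Your construction of $M'$, the size bound, the computation of $\deg_{G-M'}(w_c)=i-1$, and the treatment of the vertices surviving $M^*$ all coincide with the paper's proof. The gap is exactly where you locate it, but the missing ingredient is neither of the two repairs you sketch; it is the feedback vertex set itself. Since $G-V_f$ is a forest, $w_c\notin V_f$, and $M'\subseteq N(w_c)\setminus V_f$, no vertex $v\notin V_f\cup\{w_c\}$ can have two distinct neighbours $b,c\in M'$: otherwise $v,b,w_c,c$ would form a cycle lying entirely in $G-V_f$ (all four vertices avoid $V_f$, because $v\in M^*$ and $b,c\in M'$, and both sets are disjoint from $V_f$). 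In particular every $v\in M^*\setminus N(w_c)$ satisfies $|N_G(v)\cap M'|\leq 1$, hence $\deg_{G-M'}(v)\geq \deg_G(v)-1\geq (i+1)-1=i$, using only the standing degree hypothesis of \textsc{Annotated Minimum Degree Deletion}. Optimality of $M^*$ plays no role in this step, and the bound you dismiss as too crude is in fact not crude at all: $v$ cannot have ``many neighbours inside $M_1^*\subseteq M'$'', because already two such neighbours would create a cycle through $w_c$ avoiding $V_f$.

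Your two proposed refinements would also not salvage the argument as stated: a clean-up phase that deletes low-degree non-neighbours of $w_c$ destroys the property $M'\subseteq N(w_c)$ that the proposition asserts, and re-choosing which neighbours of $w_c$ enter $M'$ breaks the comparison with $G-M^*$ that you use for the vertices outside $M^*$. With the forest observation inserted in place of your final paragraph, your proof becomes exactly the paper's proof.
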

\begin{proof}
 We show if $N(w_c) \setminus M^*$ contains a vertex that is not a feedback vertex set element and $M^*$ contains a vertex that is not a neighbor of $w_c$, then $\exists M' \subseteq N(w_c)$ such that $|M'| \leq |M^*|$ and  $M'$ is a nearly-solution set.
 We start with a simple partition of $M^*$ into $M_y$ and $M_x$. 
 Let $M_y$ be $M^* \cap N(w_c)$ or in other words the solution set vertices which are neighbors of $w_c$.
 Let $M_x$ be $M^* \setminus M_y$ or in other words the solution set vertices which are not neighbors of $w_c$.
 Clearly, it holds that $M_x \neq \emptyset$.
 We build an alternative solution set  $M' = M_y \cup {a}$ with $a$ being a neighbor of $w_c$ but not already in $M_y$ and $a \notin V_f$.
 The vertex $a$ exists, because $w_c$ has degree at least $2$ in $G - M^*$ and $N(w_c) \setminus (M^* \cup V_f) \neq \emptyset$.
 First, we show that $M'$ is at most as big as $M^*$.
 Since $M_x \neq \emptyset$, $|M_x| \geq 1$.
 Thus, $|M'| = |M^*| - |M_x| + 1 \leq |M^*|$.
 Second, we show that $M'$ is a nearly-solution set by showing each point of the definition:
 \begin{enumerate}
  \item Since $|M^*| \leq k$ and $|M'| \leq |M^*|$, it holds that $|M'| \leq k$.
  \item The set $M'$ contains only one vertex that is not also in $M^*$, namely $a$.
        Since $a \neq w_c$ and $a \notin V_f$, it hold that $M' \cap (\{w_c\} \cup V_f) = \emptyset$.
  \item By definition $w_c$ has degree $i$ in $G -M^*$.
        Since $M'$ contains only one neighbor of $w_c$ that is not also in $M^*$, it holds that $w_c$ has degree $i-1$ in $G - M'$.
  \item Each vertex from $V \setminus M^*$ except $w_c$ has degree at least $i+1$ in $G - M^*$.
        The additional vertex $a$ can decrease the degree of each vertex from $V \setminus M^*$ at most by one.
        Hence, each vertex from $V \setminus M^*$ except $w_c$ has degree at least $i$ in $G - M'$.
        It remains to show that the degree of each vertex from $M_x$ is at least $i$ in $G - M'$, too.
        Clearly, each vertex from $M_x$ has degree $i+1$ in the input graph $G$.
        Assume that $\exists x \in M_x$ with $\deg(x) < i$ in $G - M'$.
        Thus, $x$ must have two neighbors in $M'$.
        Furthermore, $x$ must be adjacent to two distinct vertices $b,c \in N(w_c)$, because $M' \subseteq N(w_c)$.
        Since $\{x,b,c,w_c\} \cap V_f = \emptyset$, there is a cycle $G[\{x,b,c,w_c\}]$ in the forest $G - V_f$; a contradiction.
 \end{enumerate}
\end{proof}

\begin{figure}
\textbf{Coefficients}:\\
\begin{itemize}
 \item $\tilde{v}_j$ denotes the number of neighbors of $v_j \in V_f$ that are not in $N(w_c)$
 \item $\tilde{g}_j$ denotes the number of group-$g_j$ vertices in $N(w_c)$.
 \item $\hat{g}_j$ denotes the number of group-$g_j$ vertices in $N(w_c)$ that have a neighbor $x \neq w_c$ with $\deg(x)=i+1$
\end{itemize}
The variable $x_j$ denotes the number of group-$g_j$ vertices in $N(w_c)$ that are not part of the solution set.\\
~\\
\textbf{ILP-1}\\
\textbf{minimize:}
\begin{gather*}
 \sum_{1 \leq j \leq 2^{s_v^*}} x_j
\end{gather*}
\textbf{subject to}
\begin{align*}
 \text{for all } 1 \leq j \leq 2^{s_v^*}:& \text{ } \hat{g}_j \leq x_j \leq \tilde{g}_j \\
 \text{for all } 1 \leq q \leq {s_v^*}:& \text{ } \tilde{v}_q + \sum_{j \text{ with } v_q \in G_j} x_j \geq i+1\\
\end{align*}
~\\
\textbf{ILP-2}\\
\textbf{minimize:}
\begin{gather*}
 \sum_{1 \leq j \leq 2^{s_v^*}} x_j
\end{gather*}
\textbf{subject to}
\begin{align*}
 \text{for all } 1 \leq j \leq 2^{s_v^*}:& \text{ } x_j \leq \tilde{g}_j \\
 \text{for all } 1 \leq q \leq {s_v^*}:& \text{ } \tilde{v}_q + \sum_{j \text{ with } v_q \in G_j} x_j \geq i\\
\end{align*}
 \caption{Integer linear programs ``ILP-1'' and ``ILP-2''. There is an optimal solution set $M^1 \subseteq N(w_c)$ if and only if the objective of ILP-1 is at most $i$ and a nearly-solution set $M^2 \subseteq N(w_c)$ if and only if the objective of ILP-2 is at most $i-1$.}
 \label{ILPs.}
\end{figure}
Due to Proposition~\ref{Md_remaining}, one can determine whether $(G, V_f, w_c, k, i)$ is a yes-instance of \textsc{Annotated Minimum Degree Deletion} by checking each of the three cases.
The first case is quite simple.
Since all neighbors of $w_c$ which are not in the optimal solution set are feedback vertex set elements and the solution set does not contain any feedback vertex set element, the optimal solution can be found as follows:
Firstly, remove $N(w_c) \setminus V_f$ from $G$.
The remaining solution set vertices are determined by iteratively removing every vertex with degree at most $i$.
Finally, if one did not remove any vertex from $V_f$ and the total number of removed vertices is at most $k$, then $(G, V_f, w_c, k, i)$ is a yes-instance.

The remaining two cases can be handles by two quite similar integer linear programs.
We briefly discuss the two cases and develop the integer linear programs in Figure~\ref{ILPs.}.
For the second case it remains to search for a set $M^1 \subseteq N(w_c)$ of size at most $k$ such that $w_c$ is the only degree-$i$ vertex in $G - M^1$ and every other vertex has degree at least $i+1$.
For the third case it remains to search for a set $M^2 \subseteq N(w_c)$ of size at most $k$ such that $w_c$ is the only degree-$i-1$ vertex in $G - M^2$ and every other vertex has degree at least $i$.
If $M^1$ or $M^2$ exists, then $(G, V_f, w_c, k, i)$ is a yes-instance.
To determine whether $M^1$ or $M^2$ exists, we consider the input graph $G = (V,E)$.
Every vertex in $G$ (except $w_c$) has degree at least $i+1$.
Clearly, no vertex in $M^1$ can have a neighbor with degree $i+1$ in $G$ (except $w_c$).
In contrast, $M^2$ may contain also vertices with degree-$i+1$ neighbors.
The main question is, which neighbors of $w_c$ are in $M^1$ respectively in $M^2$.

To this end, we consider the dual question: Which neighbors of $w_c$ are not in $M^1$ respectively in $M^2$.
Formally, we denote $N^1 := N(w_c) \setminus M^1$ and $N^2 := N(w_c) \setminus M^2$ as the remaining neighbors of $w_c$.
We have to ensure that each vertex from $V_f$ has final degree at least $i+1$ in $G - M^1$ respectively final degree at least $i$ in $G - M^2$.
Hence, the $V_f$-neighborhood of a vertex in $N(w_c)$ is important to decide whether the vertex is in $N^1$ respectively $N^2$.
Of course, each of the vertices in $V_f$ can have neighbors that are not in $N(w_c)$.
To express this, we need the following definition:
\begin{definition}
 We denote the number of neighbors of $v_j \in V_f$ that are not in $N(w_c)$ as $\tilde{v}_j$.
\end{definition}
The remaining neighbors of each feedback vertex set element must be in $N^1$ respectively $N^2$.
Moreover, we can group the vertices in $N(w_c)$ by the subsets of vertices in $V_f=\{v_1,\dots,v_{s_v^*}\}$ that are connected to them.
Note that in the case of searching for $N^1$, we already know that each vertex that has a neighbor $x \neq w_c$ with $\deg(x) = i+1$ must be in $N^1$.
\begin{definition}
 Let $\{G_1,\dots,G_{2^{s_v^*}}\}$ be the subsets of $V_f$.
 We say a vertex $v$ is a group-$g_j$ vertex if $N(v) \cap V_f = G_j$.
 Furthermore, $\tilde{g}_j$ denotes the number of group-$g_j$ vertices in $N(w_c)$ and $\hat{g}_j$ denotes the number of group-$g_j$ vertices in $N(w_c)$ that have a neighbor $x \neq w_c$ with $\deg(x)=i+1$.
\end{definition}
Besides the existence of a degree-$i+1$ neighbor, it is not important which vertex of each group is contained in $N^1$ respectively $N^2$.
Only the total number of vertices in $N^1$ respectively $N^2$ must be as small as possible under the condition that the final degree of each feedback vertex set element is at least $i+1$ respectively at least $i$.

Summarizing, in the second case, we have to find a minimum-size set $N^1 \subseteq N(w_c)$ by determining $x_j$, that is, the number of vertices of each group $g_j$ that are in $N^1$ for $j \in \{1,\dots,2^{s_v^*}\}$.
The first condition is that from each group $g_j$ at least $\hat{g}_j$ and at most $\tilde{g}_j$ vertices must be in $N^1$.
The second condition is that for each feedback vertex set element $v_l$ the sum of $x_j$ with $v_l \in G_j$ plus $\tilde{v}_j$ exceeds $i+1$.
Clearly, if the minimal sum of $x_j$ with $j \in \{1,\dots,2^{s_v^*}\}$ is at most $i$, the corresponding sets $N^1$ and $M^1$ exists as well.

In the third case, we have to find a minimum-size set $N^2 \subseteq N(w_c)$ by determining $x_j$, that is, the number of vertices of each group $g_j$ that are in $N^1$ for $j \in \{1,\dots,2^{s_v^*}\}$.
The first condition is that from each group $g_j$ at most $\tilde{g}_j$ vertices must be in $N^1$.
The second condition is that for each feedback vertex set element $v_l$ the sum of $x_j$ with $v_l \in G_j$ plus $\tilde{v}_j$ exceeds $i$.
Clearly, if the minimal sum of $x_j$ with $j \in \{1,\dots,2^{s_v^*}\}$ is at most $i-1$, the corresponding sets $N^2$ and $M^2$ exists as well.
Both tasks are formulated as integer linear programs (see Figure~\ref{ILPs.}).

Due to results of Lenstra~\cite{L83} and Kannan~\cite{K87} we use the following theorem to show fixed-parameter tractability for the computation of the integer linear program results:
\begin{theorem}[Lenstra's theorem]
\label{theorem_lenstra}
 \textsc{Integer Linear Programming Feasibility} can be solved with $O(p^{9p/2} \cdot L)$ arithmetic operations in integers of $O(p^{2p} \cdot L)$ bits in size, where $p$ is the number of ILP variables and $L$ is the number of bits in the input.
\end{theorem}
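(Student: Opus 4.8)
The plan is to prove Lenstra's theorem by the classical route that turns integer feasibility in fixed dimension into a bounded-branching recursion on the dimension, with each branching step powered by lattice basis reduction. First I would reformulate the problem: an instance of \textsc{Integer Linear Programming Feasibility} with $p$ variables is a rational polyhedron $P = \{x \in \mathbb{R}^p : Ax \le b\}$, and we must decide whether $P \cap \mathbb{Z}^p \ne \emptyset$. Using the standard fact that a nonempty rational polyhedron contains a point of encoding size polynomial in $L$, we may add box constraints of that magnitude and assume $P$ is a bounded polytope without changing feasibility.

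The geometric engine is the \emph{flatness theorem}: there is a function $f(p)$ such that every convex body $K \subseteq \mathbb{R}^p$ whose lattice width $\min_{0 \ne c \in \mathbb{Z}^p}\bigl(\max_{x \in K} c^\top x - \min_{x \in K} c^\top x\bigr)$ exceeds $f(p)$ necessarily contains an integer point (crude bounds on $f(p)$ suffice here). To make this algorithmic, I would compute an approximate John ellipsoid of $P$, apply the linear map that turns it into a ball so that $P$ is sandwiched between two concentric balls of radius ratio at most $p$, and then run LLL reduction on the image of $\mathbb{Z}^p$ under this map. A reduced basis exposes a nonzero (dual) direction $c$ that is within a $2^{O(p)}$ factor of realizing the true lattice width of $P$. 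Either the width of $P$ along $c$ exceeds $f(p)\cdot 2^{O(p)}$ — then the true lattice width exceeds $f(p)$, so by the flatness theorem $P \cap \mathbb{Z}^p \ne \emptyset$, and a witness can be rounded out of the ball sandwich — or the width is at most $f(p)\cdot 2^{O(p)}$, in which case $c^\top x$ takes at most a bounded (function of $p$ only) number of integer values on $P$, so we branch over the slices $P \cap \{c^\top x = t\}$, each of which (after an affine identification of the corank-$1$ sublattice with $\mathbb{Z}^{p-1}$) is an instance of the same problem in dimension $p-1$, and recurse. The recursion has depth $p$ and branching bounded purely in $p$; each node performs one LLL call, one ellipsoid/LP computation, and an affine change of coordinates on data of size polynomial in $L$. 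Accumulating these costs and tracking the precision needed yields the stated $O(p^{9p/2}\cdot L)$ arithmetic operations over integers of $O(p^{2p}\cdot L)$ bits.

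The main obstacle is quantitative rather than conceptual, and it is twofold. First, proving the flatness theorem with an explicit and usable $f(p)$ is itself delicate: it rests on transference inequalities relating the covering radius of a lattice to the successive minima of its dual, and squeezing out the constants there is precisely what propagates into the $p^{2p}$ and $p^{9p/2}$ factors. Second, one must control the growth of numerators and denominators through the John-ellipsoid rounding, the change of basis, and every LLL invocation, so that all numbers stay of size $O(p^{2p}\cdot L)$ bits across the depth-$p$ recursion; this bookkeeping, not any single trick, is the technically demanding part, and it is where the refinement of Kannan~\cite{K87} over the original argument of Lenstra~\cite{L83} tightens the exponent.
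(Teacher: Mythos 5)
The paper does not prove this statement at all: it is imported verbatim as a known result, cited to Lenstra~\cite{L83} and Kannan~\cite{K87}, and used as a black box to solve the ILPs in Section~\ref{Integer linear programming.}. So there is no in-paper proof to compare yours against. Your outline does correctly reproduce the standard architecture of the Lenstra--Kannan argument (reduce to a bounded polytope, round via an ellipsoid, run lattice basis reduction to find an approximately width-minimizing integral direction, then either conclude feasibility via flatness or branch over boundedly many hyperplane slices and recurse in dimension $p-1$), and you correctly place the responsibility for the tightened exponent with Kannan's refinement.

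However, as a proof of the statement actually being claimed, there is a genuine gap: the theorem is a purely quantitative assertion, namely that the operation count is $O(p^{9p/2}\cdot L)$ and the bit-size is $O(p^{2p}\cdot L)$, and your proposal never derives either bound. The sentence ``accumulating these costs and tracking the precision needed yields the stated bounds'' is precisely the content of the theorem, deferred rather than established. You would need, at minimum, an explicit bound on the flatness function $f(p)$ (so that the branching factor at each level is controlled), an explicit bound on the blow-up of the LLL approximation factor and of the ellipsoid rounding ratio per level, and a level-by-level recurrence showing that the product of branching factors over the depth-$p$ recursion is $p^{O(p)}$ with the right constant in the exponent, together with a bit-size invariant that survives $p$ changes of basis. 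Your own closing paragraph identifies exactly these items as ``the technically demanding part,'' which is an accurate self-diagnosis: the conceptual skeleton is right, but the theorem as stated is not proved. For the purposes of this thesis that is harmless, since the result is used only as a citation, but the proposal should not be mistaken for a proof of the stated bounds.
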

The running time of the integer linear program is in $O({|V_f|}^{\frac{9}{2} \cdot |V_f|} \cdot \poly(|x|))$ time.
Putting all together, we arrive at the following:
\begin{theorem}
 The problem \textsc{Annotated Minimum Degree Deletion} can be solved in $O({s_v^*}^{\frac{9}{2} \cdot {s_v^*}} \cdot \poly(|x|))$ time with $s_v^*$ being the size of a feedback vertex set that does not contain $w_c$ and $|x|$ being the size of the input.
\end{theorem}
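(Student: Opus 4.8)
The plan is to combine the structural dichotomy of Proposition~\ref{Md_remaining} with the two integer linear programs of Figure~\ref{ILPs.} and Lenstra's theorem (Theorem~\ref{theorem_lenstra}). First I would dispose of the degenerate case: if the target degree $i$ for $w_c$ is at most $1$, the instance is solvable in polynomial time (this is what Lemma~\ref{lemma_mddWithD01} together with Reduction Rule ``Remove Low Degree'' delivers), so we may assume $i \ge 2$, which is exactly what the ILP encoding below needs.

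By Proposition~\ref{Md_remaining}, whenever an optimal solution set exists at least one of three situations occurs, and the algorithm checks all three. Case~(1), $N(w_c) \setminus (M^* \cup V_f) = \emptyset$: every surviving neighbour of $w_c$ lies in $V_f$, so I first delete $N(w_c) \setminus V_f$ from $G$ and then exhaustively delete vertices of degree at most $i$ (the forced cascade, as in \texttt{Annotated-MDD-XP}); the instance is a yes-instance via this case iff no vertex of $V_f$ got deleted and the total number of deletions is at most $k$. This is plainly polynomial.

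Cases~(2) and~(3) are the substance, and I would model them by ILP-1 and ILP-2. The key point is that once we fix \emph{which} neighbours of $w_c$ to keep, everything else is forced, and a kept neighbour matters only through (a) the subset $G_j \subseteq V_f$ it is adjacent to and (b) in case~(2) whether it has a neighbour $x \ne w_c$ of degree exactly $i+1$ --- removing it would push $x$ to degree $i$ and trigger further deletions outside $N(w_c)$, violating $M^1 \subseteq N(w_c)$. Hence I partition $N(w_c)$ into the at most $2^{s_v^*}$ groups $g_j$, let $x_j$ count the kept group-$g_j$ vertices, and read off the constraints: $\hat g_j \le x_j \le \tilde g_j$ (resp.\ $x_j \le \tilde g_j$) for within-group feasibility, and $\tilde v_q + \sum_{j:\,v_q \in G_j} x_j \ge i+1$ (resp.\ $\ge i$) so that every feedback vertex $v_q$ retains enough neighbours. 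Minimising $\sum_j x_j$ and testing it against $i$ (resp.\ $i-1$) decides the existence of a set $M^1 \subseteq N(w_c)$ as in case~(2) (resp.\ a nearly-solution set $M^2 \subseteq N(w_c)$ as in case~(3)). I would then prove both directions of this correspondence by a counting argument; the nontrivial direction is realising a feasible ILP assignment as an honest (nearly-)solution set, where the degree-$(i+1)$ bookkeeping of case~(2) and the fact that $G - V_f$ is a forest --- so no kept vertex can have two kept neighbours in $N(w_c)$, precisely the cycle argument from the proof of Proposition~\ref{Md_remaining} --- guarantee that no extra cascade of deletions is set off.

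Finally, each ILP has $2^{s_v^*}$ variables, a number depending only on the parameter $s_v^*$, so by Theorem~\ref{theorem_lenstra} each can be solved in $f(s_v^*) \cdot \poly(|x|)$ time (the minimisation being reduced to feasibility by binary search on the objective); combined with the polynomial work of case~(1) and the surrounding preprocessing, this yields the stated bound. I expect the main obstacle to be not the ILP solving (Lenstra is a black box) but the correctness proof of the encoding --- pinning down precisely that every feasible assignment is realised by a legitimate solution or nearly-solution set with no unaccounted forced deletions, and confirming that the three cases of Proposition~\ref{Md_remaining} are genuinely exhaustive.
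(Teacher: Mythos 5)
Your proposal follows essentially the same route as the paper: the three-case dichotomy of Proposition~\ref{Md_remaining}, a direct polynomial-time check for the first case, and ILP-1/ILP-2 over the $2^{s_v^*}$ neighbourhood groups solved via Lenstra's theorem, with the forest structure of $G-V_f$ ruling out unaccounted deletion cascades. One caveat you share with the paper: with $2^{s_v^*}$ ILP variables, Lenstra's bound literally gives $(2^{s_v^*})^{\frac{9}{2}\cdot 2^{s_v^*}}$ rather than the stated ${s_v^*}^{\frac{9}{2}s_v^*}$, so your safer formulation ``$f(s_v^*)\cdot\poly(|x|)$'' establishes fixed-parameter tractability but, like the paper's own argument, does not actually justify the precise exponent in the theorem statement.
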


\subsection{Dynamic programming}
\label{Dynamic programming.}
In this section, we present a fixed-parameter algorithm that solves \textsc{Annotated Minimum Degree Deletion} with respect to the parameter $s_v^*:=$``size of a feedback vertex set that does not contain $w_c$'' by using the technique of dynamic programming.
Let $(G=(V,E), V_f, w_c, k, i)$ be a \textsc{Annotated Minimum Degree Deletion} instance with $V_f:=\{v_1,\dots,v_{s_v^*}\}$.
We start with some general observation and define basic concepts.
By definition, every solution set contains only vertices in $V_S := V \setminus (V_f \cup \{w_c\})$.
The following observation considers the neighborhood of the vertices in $V_S$.
\begin{definition}
 Let $G=(V,E)$ be an undirected graph and $x \in V_S$ a vertex.
 We denote $\Component_{G[V_S]}(x)$ as the connected component from $G[V_S]$ including the vertex $x$.
\end{definition}
\begin{observation}
\label{obs_compo}
 Let $n_a,n_b$ be two distinct neighbors of $w_c$.
 It holds that:
 $$\Component_{G[V_S]}(n_a) \cap \Component_{G[V_S]}(n_b) = \emptyset$$.
\end{observation}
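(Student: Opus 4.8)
The plan is to argue by contradiction, exploiting the fact that $G - V_f$ is a forest. Recall that $V_f$ is a feedback vertex set, so by definition $G - V_f$ is acyclic, hence a forest; and $G[V_S]$ equals $(G - V_f)$ with the vertex $w_c$ additionally removed, so it is an induced subgraph of a forest and therefore itself a forest. In particular, any path inside $G[V_S]$ is also a path inside $G - V_f$ and uses no vertex of $V_f$ and not $w_c$.

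First I would suppose, towards a contradiction, that some vertex belongs to $\Component_{G[V_S]}(n_a) \cap \Component_{G[V_S]}(n_b)$. Then $n_a$ and $n_b$ lie in the same connected component of $G[V_S]$, so there is a path $P$ from $n_a$ to $n_b$ all of whose vertices are in $V_S$. Next I would observe that, since $n_a \neq n_b$, the edges $\{w_c, n_a\}$ and $\{w_c, n_b\}$ are two distinct edges of $G$, and that $w_c$ occurs on $P$ nowhere (because $w_c \notin V_S$). Consequently, appending these two edges to $P$ yields a closed walk through $w_c$, $n_a$, and $n_b$ which is a genuine cycle of $G - V_f$ — it has at least three vertices precisely because $n_a \neq n_b$. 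This contradicts the acyclicity of $G - V_f$, establishing the claim.

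The only point that requires any care is verifying that the object obtained is an honest (simple) cycle and not a degenerate backtracking walk; this is exactly what is secured by $n_a \neq n_b$ (so the two $w_c$-edges differ) together with $w_c \notin V_S$ (so $w_c$ does not lie on $P$). I do not expect any substantive obstacle: the observation is in essence just the statement that $w_c$ is a vertex of the forest $G - V_f$, so it cannot have two neighbors that are joined by a path avoiding it.
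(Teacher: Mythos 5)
Your proof is correct and follows essentially the same route as the paper: assume the components intersect, connect $n_a$ and $n_b$ by a path in $G[V_S]$, and close it up through $w_c$ (which lies in $G - V_f$ since the feedback vertex set here excludes $w_c$) to obtain a cycle contradicting the acyclicity of $G - V_f$. If anything, your version is more careful than the paper's, which phrases the contradiction as a $C_3$ on $\{v_a, w_c, v_b\}$ rather than acknowledging that the cycle may be longer.
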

\begin{proof}
 Assume that there is an edge between one vertex $v_a$ of $\Component_{G[V_S]}(n_a)$ and another vertex $v_b$ of $\Component_{G[V_S]}(n_b)$ in $G[V_S]$.
 In $G - V_f$ both components are connected trough $w_c$.
 Thus, $G[\{v_a,w_c,v_b\}]$ is a $C_3$ and $G - V_f$ is cyclic; a contradiction.
\end{proof}
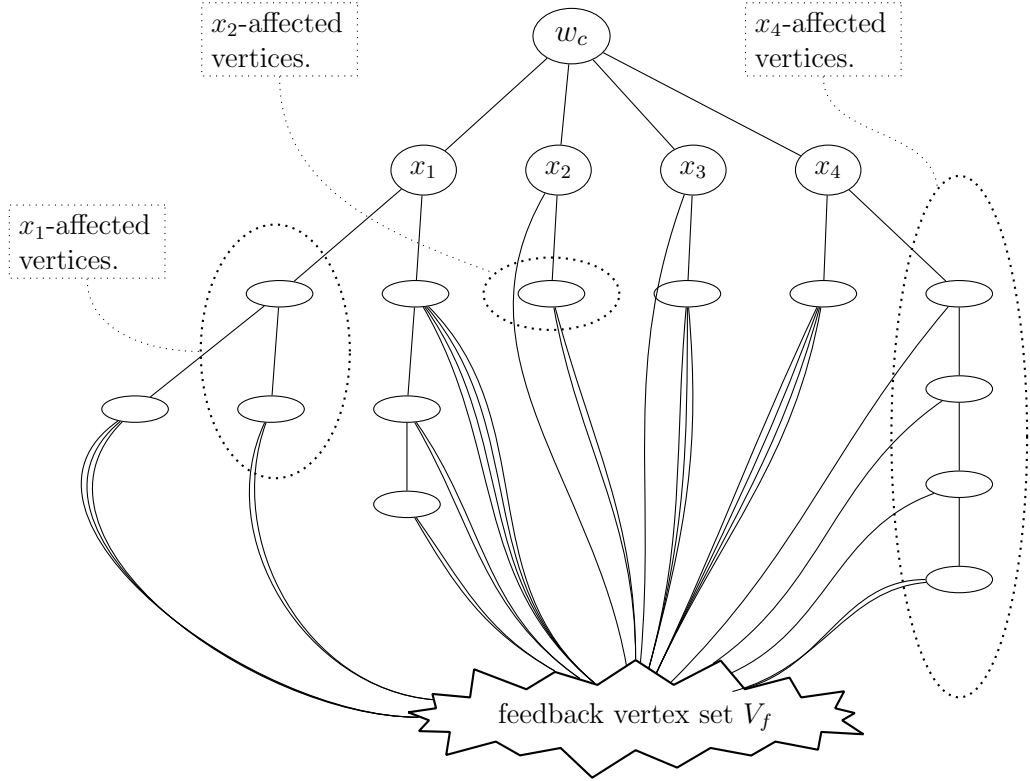
\begin{figure*}
\begin{tikzpicture}[>=latex,join=bevel,scale=0.9]
  \node (w_c) [draw,ellipse] {$w_c$};

  \node (n_1) [below left=2 of w_c] [draw,transform shape,ellipse] {$x_1$};
  \node (n_2) [right=of n_1] [draw,transform shape,ellipse] {$x_2$};
  \node (n_3) [right=of n_2] [draw,transform shape,ellipse] {$x_3$};
  \node (n_4) [right=of n_3] [draw,transform shape,ellipse] {$x_4$};

  \draw [-] (n_1) to (w_c);
  \draw [-] (n_2) to (w_c);
  \draw [-] (n_3) to (w_c);
  \draw [-] (n_4) to (w_c);

  \node (v_11) [below left=2 of n_1] [draw,transform shape,ellipse] {~~~~};
  \node (v_12) [right=of v_11] [draw,transform shape,ellipse] {~~~~};
  \node (v_21) [right=of v_12] [draw,transform shape,ellipse] {~~~~};
  \node (v_31) [right=of v_21] [draw,transform shape,ellipse] {~~~~};
  \node (v_41) [right=of v_31] [draw,transform shape,ellipse] {~~~~};
  \node (v_42) [right=of v_41] [draw,transform shape,ellipse] {~~~~};

  \draw [-] (n_1) to (v_11);
  \draw [-] (n_1) to (v_12);
  \draw [-] (n_2) to (v_21);
  \draw [-] (n_3) to (v_31);
  \draw [-] (n_4) to (v_41);
  \draw [-] (n_4) to (v_42);

  \node (v_13) [below left=2 of v_11] [draw,transform shape,ellipse] {~~~~};
  \node (v_14) [right=of v_13] [draw,transform shape,ellipse] {~~~~};
  \node (v_15) [right=of v_14] [draw,transform shape,ellipse] {~~~~};
  \node (v_16) [below=of v_15] [draw,transform shape,ellipse] {~~~~};
  \node (v_43) [below=of v_42] [draw,transform shape,ellipse] {~~~~};
  \node (v_44) [below=of v_43] [draw,transform shape,ellipse] {~~~~};
  \node (v_45) [below=of v_44] [draw,transform shape,ellipse] {~~~~};

  \draw [-] (v_11) to (v_13);
  \draw [-] (v_11) to (v_14);
  \draw [-] (v_12) to (v_15);
  \draw [-] (v_15) to (v_16);
  \draw [-] (v_42) to (v_43);
  \draw [-] (v_43) to (v_44);
  \draw [-] (v_44) to (v_45);

  \node (C) [below right=3.5 of v_16] [draw,transform shape,starburst,thick] {feedback vertex set $V_f$};

  \draw [-] (C) to[out=180,in=215] (v_13);
  \draw [-] (C) to[out=180,in=220] (v_13);
  \draw [-] (C) to[out=180,in=225] (v_13);

  \draw [-] (C) to[out=175,in=245] (v_14);
  \draw [-] (C) to[out=175,in=250] (v_14);

  \draw [-] (C) to[out=140,in=300] (v_12);
  \draw [-] (C) to[out=140,in=305] (v_12);
  \draw [-] (C) to[out=140,in=310] (v_12);
  \draw [-] (C) to[out=140,in=315] (v_12);

  \draw [-] (C) to[out=145,in=300] (v_15);
  \draw [-] (C) to[out=145,in=305] (v_15);

  \draw [-] (C) to[out=150,in=300] (v_16);
  \draw [-] (C) to[out=150,in=305] (v_16);

  \draw [-] (C) to[out=100,in=240] (n_2);

  \draw [-] (C) to[out=90,in=280] (v_21);
  \draw [-] (C) to[out=90,in=285] (v_21);

  \draw [-] (C) to[out=85,in=250] (n_3);

  \draw [-] (C) to[out=75,in=265] (v_31);
  \draw [-] (C) to[out=75,in=270] (v_31);
  \draw [-] (C) to[out=75,in=275] (v_31);

  \draw [-] (C) to[out=65,in=245] (v_41);
  \draw [-] (C) to[out=65,in=250] (v_41);
  \draw [-] (C) to[out=65,in=255] (v_41);
  \draw [-] (C) to[out=65,in=260] (v_41);

  \draw [-] (C) to[out=45,in=230] (v_42);

  \draw [-] (C) to[out=35,in=215] (v_43);

  \draw [-] (C) to[out=25,in=200] (v_44);

  \draw [-] (C) to[out=15,in=185] (v_45);
  \draw [-] (C) to[out=15,in=180] (v_45);

  \node (n1_aff) [draw,ellipse,thick,dotted,fit=(v_11) (v_14)] {};
  \node (n1_expl) [above left=4.5 of v_16] [draw,transform shape,text width=2cm,rectangle,dotted,text justified]
                  {$x_1$-affected vertices.};
  \draw [-,dotted] (n1_expl) to[out=270,in=180] (n1_aff);

  \node (n2_aff) [draw,ellipse,thick,dotted,fit=(v_21)] {};
  \node (n2_expl) [above=3 of v_11] [draw,transform shape,text width=2cm,rectangle,dotted,text justified]
                  {$x_2$-affected vertices.};
  \draw [-,dotted] (n2_expl) to[out=270,in=160] (n2_aff);

  \node (n4_aff) [draw,ellipse,thick,dotted,fit=(v_42) (v_43) (v_44) (v_45)] {};
  \node (n4_expl) [above=3 of v_41] [draw,transform shape,text width=2cm,rectangle,dotted,text justified]
                  {$x_4$-affected vertices.};
  \draw [-,dotted] (n4_expl) to[out=270,in=95] (n4_aff);

 \end{tikzpicture}
 \caption{Affected and unaffected vertices. Let $i=2$ denote the final degree of $w_c$ given by the input.
          For example, if one deletes $x_1$, then one has also to delete the marked $x_1$-affected vertices.}
 \label{Affected and unaffected vertices.}
\end{figure*}
\begin{figure}
 \begin{algorithmic}[1]
  \Require ~\hspace{0.5cm}
  \begin{itemize}
   \item $G$ is an undirected graph such that each vertex (except $w_c$) must have degree at least $i+1$
   \item $V_f$ is a feedback vertex set of $G'$ that does not contain $w_c$
  \end{itemize}
  \Procedure{\texttt{AnnotatedMDD}}{$G$, $w_c$, $k$, $V_f$, $i$}
   \State Compute to-remain-tuple $S^0:=(s_1^0,\dots,s_{|V_f|}^0)$ \Comment{Initialization}
   \For{$z$ = $1$ to $\deg(w_c)$}
    \For{each $S' \in \mathcal{S}$}
     \State $T(0,z,D) = +\infty$
    \EndFor
   \EndFor
   \For{each $S'=(s'_1,s'_2,\dots,s'_{|V'_f|}) \in \mathcal{S}$}
    \If{$s'_j < s_j^0$ for any $j \in \{1,\dots,|V'_f|\}$}
     \State $T(0,0,S') = +\infty$
    \Else
     \State $T(0,0,S') = 0$
    \EndIf
   \EndFor
   \For{$x=1,\dots,|N(w_c)|$} \Comment{Table update}
    \For{$z=1,\dots,x$}
     \For{each $S' \in \mathcal{S}$}
      \State $\text{minCostRemove} := T(x-1,z-1,S') + \cost_i(n_{x})$
      \State $\text{minCostNotRemove} := T(x-1,z,\Remain(S',n_{x}))$
      \State $T(x,z,S') := \min(\text{minCostRemove}, \text{minCostNotRemove})$
     \EndFor
    \EndFor
   \EndFor
   \If{$T(\deg(w_c), \deg(w_c)-i, (0,\dots,0)) \leq k$} \Comment{Result}
    \State \Return 'yes'
   \Else
    \State \Return 'no'
   \EndIf
  \EndProcedure
  \Ensure Returns yes if and only if there is a solution set $M^*$ such that
   \begin{itemize}
    \item $|M^*| \leq k$
    \item $V_f \cap M^* = \emptyset$
    \item $w_c$ has degree exactly $i$ in $G - M^*$
   \end{itemize}
 \end{algorithmic}
 \caption{AnnotatedMDD.}
 \label{AnnotetedMDD.} 
\end{figure}
Now, we take a look at an optimal solution set $M^*$:
Clearly, $M^*$ must contain $\deg(w_c) - i$ neighbors of $w_c$.
In addition, putting a vertex $x \in N(w_c)$ into a solution set can decrease the degree of other vertices from $\Component_{G[V_S]}(x)$.
Hence, removing $x$ can enforce to remove further vertices from $\Component_{G[V_S]}(x)$ recursively.
To measure this, we need the following definition which is illustrated in Figure~\ref{Affected and unaffected vertices.}:
\begin{definition}
\label{def_affect}
 We denote $A[x]$ as the \textit{$x$-affected} vertices, that is, the set of vertices that have to be removed when removing $x$.
 Furthermore, we denote $\cost(x):=|A[x]|$ as the number of vertices that have to be removed when removing $x$.
\end{definition}
To determine a solution set, one is interested in a set consisting of $\deg(w_c) - i$ neighbors of $w_c$ such that the sum of the corresponding costs is minimal.
The critical point is that it is also necessary to ``measure'' the effect that putting a vertex $x$ into the solution set has to the vertices from $V_f$.
By definition, we can not remove any vertex from $V_f$.
Thus, we must ensure that the final degree of every vertex from $V_f$ is at least $i+1$.
In the following we consider a partition of $V_S$ into a set $V_u$ containing vertices that must not belong to the solution and $V_r$ containing vertices that might be part of the solution.
For each vertex $v \in V_f$ one can compute ``how many neighbors from $V_r$ are not allowed to be deleted''.
To this end, we define a tuple expressing the effect of $V_u$ to the hole feedback vertex set:
\begin{definition}
\label{to-remain-tuple}
 We define the \textbf{to-remain-tuple} for $V_r \subseteq V_S$ as $S=(s_1,\dots,s_{|V_f|})$ where $s_j$ denotes the minimum number of $V_r$-neighbors of $v_j$ that are not allowed to be deleted.
\end{definition}
Now, we describe the dynamic programming algorithm \texttt{AnnotatedMDD} (see Figure~\ref{AnnotetedMDD.}).
Basically the algorithm iterates over neighbor subsets of $N(w_c)$ and computes for every such subset the minimum cost under the condition that deleting a number of vertices from $N(w_c)$ results in a specific to-remain-tuple.
More specifically, the dynamic programming table is defined as $T(x,z,S')$ with:
\begin{itemize}
 \item $x \in \{1,\dots,|N(w_c)|\}$,
 \item $z \leq x$, and
 \item $S' \subseteq \mathcal{S}:=\{(s'_1,\dots,s'_{|V_f|}) \mid 0 \leq s'_i \leq i+1\}$
\end{itemize}
The entry $T(x,z,S')$ then contains the minimum cost of deleting a size-$z$ subset $N'$ with $N' \subseteq \{n_i \in N(w_c) \mid i \leq x \}$ that ``realizes'' the to-remain-tuple $S'$ for $N'_r := N(w_c) \setminus N'$.
By definition of the table, $T(\deg(w_c),\deg(w_c)-i,(0,\dots,0)) \leq k$ if and only if $(G, V_f, w_c, k, i)$ is a yes-instance of \textsc{Annotated Minimum Degree Deletion}.

In the following, we describe the details of the algorithm \texttt{AnnotatedMDD} and show its correctness.
To state the $\Remain()$ function in line 19 (see Figure~\ref{AnnotetedMDD.}), we need a function that computes the modification of a to-remain-tuple when the algorithm fixes a specific neighbor of $w_c$ to be not contained in the solution set.
\begin{definition}
 Let $S' := \{s'_1, \dots, s'_{|V_f|}\}$ be a to-remain-tuple for a subset $V'_r \subseteq V_S$ and $x \subseteq (N(w_c) \cap V'_r)$.
 We define $\Remain(S',x) := (r_1,\dots,r_{|V_f|})$ where $r_j$ denotes the minimum number of $V''_r$-neighbors of $v_j$ that are not allowed to be deleted with $V''_r := V'_r \setminus A[x]$. More specifically:
 $$r_j := \min \{ \max \{0, s'_j + e(v_j,\{x\})\} , i+1 \}$$
 with $e(v_j,x)$ denoting the number of neighbors of a feedback vertex set element $v_j$ in $A[x]$.
\end{definition}
Furthermore, a kind of ``upper bound'' for the to-remain-tuple is given:
Let $U$ denote the \textit{unaffected vertices}, that is, those vertices that are not affected by removing any neighbor of $w_c$.
The to-remain-tuple $S^0 := (s_1^0,\dots,s_{|V_f|}^0)$ for $V_S \setminus U$ is given by:
$$s_j^0 := \max \{0, i+1 - |N(v_j) \cap U|\}$$
\begin{lemma}
 \texttt{AnnotatedMDD} is correct.
\end{lemma}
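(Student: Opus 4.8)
The plan is to prove that the dynamic-programming algorithm \texttt{AnnotatedMDD} correctly decides \textsc{Annotated Minimum Degree Deletion} by establishing a precise semantic invariant for the table entries $T(x,z,S')$ and then reading off the claim from the terminal entry $T(\deg(w_c),\deg(w_c)-i,(0,\dots,0))$. Concretely, I would prove by induction on $x$ that for every $z\le x$ and every $S'\in\mathcal S$, the entry $T(x,z,S')$ equals the minimum over all size-$z$ subsets $N'\subseteq\{n_1,\dots,n_x\}$ of $N(w_c)$ (with $N'_r:=N(w_c)\setminus N'$) of $\sum_{n\in N'}\cost_i(n)$, subject to the constraint that the to-remain-tuple of $N'_r$ is componentwise at most $S'$ (i.e.\ $S'$ ``dominates'' the actual to-remain-requirement), and $+\infty$ if no such subset exists. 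Here $\cost_i(n)$ is $|A[n]|$, the number of $n$-affected vertices that a removal of $n$ forces out by the cascade of Reduction-Rule--style degree-$\le i$ deletions inside $\Component_{G[V_S]}(n)$.

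**Key structural facts to invoke.** The correctness rests on two things proved earlier. First, Observation~\ref{obs_compo}: distinct neighbors of $w_c$ lie in disjoint connected components of $G[V_S]$, so the affected sets $A[n_a]$ and $A[n_b]$ are disjoint for $n_a\ne n_b$, and the forced deletions triggered by removing one neighbor never interfere with those triggered by another. This is what makes the costs additive and makes it legitimate to process the neighbors $n_1,\dots,n_{|N(w_c)|}$ one at a time. Second, since $V_f$ is a feedback vertex set not containing $w_c$, each vertex in $A[n]$ has at most one neighbor on $V_f$ ``via a path avoiding $V_f$ through $n$'' — more precisely the cascade is well-defined and finite, and the only global coupling between different neighbor-branches is the shared feedback vertex set, whose degree constraints are exactly what the to-remain-tuple tracks. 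I would state a short lemma that $\cost_i$ and $A[\cdot]$ are well-defined (the cascade terminates and is confluent), which follows because $G[V_S]$ is a forest away from $V_f$.

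**The induction.** For the base case $x=0$: $T(0,0,S')=0$ iff $S'\ge S^0$ componentwise (where $S^0$ is the to-remain-tuple of $V_S\setminus U$, i.e.\ the contribution of the unaffected vertices $U$, which must never be deleted), and $T(0,z,\cdot)=+\infty$ for $z\ge1$; this matches the initialization lines. For the inductive step I would analyze the two choices for neighbor $n_x$. If $n_x\in N'$ (the ``remove'' branch): by component-disjointness the forced deletions $A[n_x]$ are disjoint from everything charged so far, and they modify the feedback-vertex degree budget by removing $e(v_j,\{n_x\})$ neighbors of each $v_j$; hence the surviving-neighbor requirement on $V_f$ that must still be met by $\{n_1,\dots,n_{x-1}\}\cap N'_r$ is exactly $\Remain(S',n_x)$ — wait, that is the ``not remove'' case; let me keep them straight. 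If $n_x\notin N'$, then $n_x$ stays, so it already satisfies part of the $V_f$-requirement, and the residual requirement on the earlier neighbors is $\Remain(S',n_x)$; this gives the term $T(x-1,z,\Remain(S',n_x))$. If $n_x\in N'$, then $z$ drops by one, the cost $\cost_i(n_x)$ is added, and the residual requirement on the earlier neighbors is unchanged (because $n_x$ no longer contributes to any $v_j$, but also removing $n_x$ cannot help satisfy a ``keep'' requirement), giving $T(x-1,z-1,S')+\cost_i(n_x)$. Taking the minimum of the two reproduces the recurrence in lines 18--20. Finally, at $x=\deg(w_c)$, $z=\deg(w_c)-i$ forces exactly $i$ neighbors of $w_c$ to remain (so $w_c$ has final degree $i$), and demanding the all-zero to-remain-tuple means every $v_j\in V_f$ ends with degree $\ge i+1$; so $T(\deg(w_c),\deg(w_c)-i,(0,\dots,0))$ is the minimum size of a valid solution set, and it is $\le k$ exactly when a solution set exists, which is what the \textbf{Ensure} block claims.

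**Main obstacle.** The delicate point — and where I would spend the most care — is proving that the greedy cascade defining $A[n]$ and $\cost_i(n)$ genuinely captures ``all and only'' the extra deletions forced by removing $n$, and that these cascades for different neighbors of $w_c$ are independent even through their interaction with $V_f$. One must check that a vertex of $V_f$ never becomes the conduit by which deleting $n_a$ forces a deletion inside $\Component_{G[V_S]}(n_b)$: this is true because $V_f$ vertices are never deleted, so their degree only ever decreases passively and the cascade cannot ``pass through'' them. I would also need to confirm that the to-remain-tuple is a sound summary — that two choices of $N'_r$ with the same (or dominating) tuple are genuinely interchangeable for feasibility — which again uses the forest structure of $G-V_f$ and Observation~\ref{obs_compo}. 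Once these independence and summary properties are nailed down, the table recurrence and the final read-off are routine.
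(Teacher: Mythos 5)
Your proposal is correct and follows essentially the same route as the paper's proof: fix the semantic meaning of the table entry $T(x,z,S')$, check the initialization against $S^0$, and verify the recurrence by the case split on whether $n_x$ is removed (adding $\cost(n_x)$ and decrementing $z$) or kept (updating the tuple via $\Remain$), reading the answer off $T(\deg(w_c),\deg(w_c)-i,(0,\dots,0))$. You are in fact somewhat more careful than the paper, which asserts the additivity of the costs and the soundness of the to-remain-tuple summary without explicitly appealing to Observation~\ref{obs_compo} and the forest structure of $G-V_f$ as you do.
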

\begin{proof}
 Consider the algorithm in Figure~\ref{AnnotetedMDD.}.
 It starts with the computation of $S^0$ (line 2).
 In the initialization, the cost of ``removing at least one vertex from a set of zero vertices'' must be set to infinity (lines 3-7).
 Furthermore, without fixing any neighbor to be ``not contained in the solution set'' it is not possible to realize a to-remain-tuple that is better than  $S^0$ which means that there is an entry with a smaller value.
 Clearly, initializing these table values with infinity is correct (lines 9-10).
 By definition, it is easy to realize a to-remain-tuple $S'$ which is worse than $S^0$, that is, every entry $S'$ is at least as big as in $S^0$.
 Therefore on has no costs (lines 11-12).
 Now, consider the update step.
 It is easy to verify that the three loops (lines 15-17) ensure that every previous value that is used for computation (lines 18-19) has been computed before.
 For showing correctness it remains to prove the correctness of this computation.
 Consider a table entry $T(x,z,S')$.
 It contains the minimum costs for removing $z$ vertices from $N' \subseteq \{n_i \in N(w_c) \mid i \leq x \}$.
 Regarding the neighbor $n_x$, either it is part of the solution or not.
 If $n_x$ is removed, then the minimum costs are exactly the minimum costs for removing $z-1$ vertices from $N' \subseteq \{n_i \in N(w_c) \mid i \leq (x-1) \}$ plus $\cost(n_x)$ (line 18).
 Otherwise, if $n_x$ is not removed, then the costs are exactly the minimum costs for removing $z$ vertices from $N' \subseteq \{n_i \in N(w_c) \mid i \leq (x - 1) \}$, but realizing the to-remain-tuple under the condition that $n_x$ is fixed to be not part of the solution (line 19).
\end{proof}
Now, we analyze the running time and table size:
\begin{lemma}
 The table-size of $T$ is bounded by $O( (|V_f|+2)^{|V_f|} \cdot \deg(w_c)^2)$.
 The running time of \texttt{AnnotatedMDD} is bounded by a function in $O((|V_f|+2)^{|V_f|} \cdot \deg(w_c)^2 \cdot n^2)$.
\end{lemma}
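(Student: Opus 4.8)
The plan is to bound the three index ranges of the table $T(x,z,S')$ separately and then multiply by the per-entry cost. First I would note that $x$ ranges over $\{0,1,\dots,|N(w_c)|\}$, so there are $O(\deg(w_c))$ choices for $x$ (using $|N(w_c)|=\deg(w_c)$), and since the second index satisfies $z\le x\le\deg(w_c)$ there are also $O(\deg(w_c))$ choices for $z$. For the to-remain-tuple coordinate, recall $\mathcal{S}=\{(s'_1,\dots,s'_{|V_f|})\mid 0\le s'_j\le i+1\}$, so $|\mathcal{S}|\le (i+2)^{|V_f|}$. The crucial point is that \texttt{AnnotatedMDD} is only ever invoked (inside \texttt{MDD-solv}) with $i\le |V_f|$, which in turn is justified by Observation~\ref{wc_maxdegree}: in any solution graph $w_c$ has degree at most $|V_f|$. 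Hence $|\mathcal{S}|\le (|V_f|+2)^{|V_f|}$, and the table has $O\big((|V_f|+2)^{|V_f|}\cdot\deg(w_c)^2\big)$ entries, which is the first claim.

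For the running time I would argue that, by the dynamic-programming order already established in the correctness lemma, every reachable entry is filled exactly once, and that computing one entry $T(x,z,S')$ via the recurrence in lines 18--20 costs $O(n^2)$. The two predecessor look-ups are $O(1)$, and copying or comparing a to-remain-tuple is $O(|V_f|)=O(n)$; the only non-trivial ingredients are $\cost_i(n_x)=|A[n_x]|$ and $\Remain(S',n_x)$. Computing $A[n_x]$ amounts to deleting $n_x$ and then iteratively deleting the vertices of $\Component_{G[V_S]}(n_x)$ whose degree has dropped to at most $i$; maintaining degrees through the adjacency matrix, this peeling performs at most $n$ removals, each costing $O(n)$, hence $O(n^2)$. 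Given $A[n_x]$, the values $e(v_j,\{n_x\})$ needed to evaluate $\Remain$ are obtained by scanning the at most $|V_f|\le n$ feedback-vertex rows against $A[n_x]$, again $O(n^2)$. So each entry costs $O(n^2)$. (If desired, one can even precompute all sets $A[n_x]$ once in $O(n^2)$ total, since by Observation~\ref{obs_compo} the components $\Component_{G[V_S]}(n_x)$ are pairwise disjoint, but this optimization is not needed for the stated bound.)

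Finally I would check that the initialization block (lines 2--14) is dominated by the above: computing $S^0$ costs $O(|V_f|\cdot n)=O(n^2)$, and the zero/infinity initialization costs $O(\text{table size})$. Multiplying the number of entries by the $O(n^2)$ per-entry cost then yields the running-time bound $O\big((|V_f|+2)^{|V_f|}\cdot\deg(w_c)^2\cdot n^2\big)$. I do not expect a genuine obstacle here; the one point that really needs care is the appeal to Observation~\ref{wc_maxdegree} to bound $i$ by $|V_f|$ — without it, $i$ could be $\Theta(n)$ and the factor $(i+2)^{|V_f|}$ would not be controlled by the parameter alone — together with verifying that the affected-set peeling is a correct, terminating polynomial-time subroutine.
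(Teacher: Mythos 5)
Your proof is correct and follows essentially the same route as the paper's: bound the first two table dimensions by $\deg(w_c)$, bound $|\mathcal{S}|$ by $(|V_f|+2)^{|V_f|}$, and charge $O(n^2)$ per entry. Your explicit appeal to Observation~\ref{wc_maxdegree} to justify $i\le|V_f|$ (and hence $(i+2)^{|V_f|}\le(|V_f|+2)^{|V_f|}$), and your accounting for the cost of computing $A[n_x]$ and $\Remain$, merely spell out details the paper leaves implicit.
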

\begin{proof}
 The first two dimensions are easily bounded by $\deg(w_c)$
 The to-remain-tuple is only defined such that each of the $|V_f|$ entries has an integer value between $0$ and $|V_f|+1$ (see Definition~\ref{to-remain-tuple}).
 Hence, $|\mathcal{S}| = {(|V_f|+2)}^{|V_f|}$.
 Clearly, the remaining steps can be accomplished in $O(n^2)$ time.
\end{proof}
Together with the preprocessing steps of the algorithm \texttt{MDD-solv} in Figure~\ref{Fixed-parameter algorithm that solves that solves Minimum Degree Deletion with respect to the parameter s_v.} we arrive at the following:
\begin{theorem}
 \textsc{Minimum Degree Deletion} can be solved in $O(({s_v^*}+2)^{s_v^*} \cdot 2^{s_v^*} \cdot n^4 \cdot \deg(w_c)^2)$ with $s_v^*$ being the size of a feedback vertex set that does not contain $w_c$.
\end{theorem}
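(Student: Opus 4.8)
The plan is to prove the theorem by combining the template algorithm \texttt{MDD-solv} of Figure~\ref{Fixed-parameter algorithm that solves that solves Minimum Degree Deletion with respect to the parameter s_v.} with the dynamic programming subroutine \texttt{AnnotatedMDD} of Figure~\ref{AnnotetedMDD.} as the procedure invoked in its inner loop. The correctness of \texttt{AnnotatedMDD} and its running time of $O((|V_f|+2)^{|V_f|} \cdot \deg(w_c)^2 \cdot n^2)$ have already been established above, and the preprocessing-and-branching part of \texttt{MDD-solv} has been argued to run in $O(f(s_v^*) \cdot \poly(|x|))$ time; so what remains is to verify that \texttt{MDD-solv} correctly reduces \textsc{Minimum Degree Deletion} to polynomially (in the parameter) many instances of \textsc{Annotated Minimum Degree Deletion} and to multiply out the running times.

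For correctness I would fix an optimal solution set $M^*$ and trace the branches of \texttt{MDD-solv}. The outer loop enumerates all subsets $V_f^* \subseteq V_f$ of size at most $k$; one branch picks $V_f^* = M^* \cap V_f$ correctly, and after removing it the set $V_f' := V_f \setminus V_f^*$ is a feedback vertex set of the current graph not containing $w_c$, of size at most $s_v^*$ (deleting vertices preserves the feedback-vertex-set property and $w_c \notin V_f$). The loop over $i$ guesses the final degree of $w_c$ in the solution graph, and by Observation~\ref{wc_maxdegree} this degree is at most $|V_f| = s_v^*$, so the range $0,\dots,|V_f|$ is exhaustive. Once $i$ is fixed, the while-loop iteratively deletes every vertex $\neq w_c$ of degree at most $i$; these deletions are forced, since such a vertex can never attain degree at least $i+1$ without being removed, and after the loop every vertex except $w_c$ has degree at least $i+1$ — exactly the precondition of \textsc{Annotated Minimum Degree Deletion}. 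Hence the remaining task in that branch is precisely an \textsc{Annotated Minimum Degree Deletion} instance with feedback-vertex-set parameter at most $s_v^*$, and by the correctness of \texttt{AnnotatedMDD} the overall algorithm answers ``yes'' exactly when the original instance is a yes-instance.

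For the running time I would count: at most $2^{s_v^*}$ choices of $V_f^*$, and for each at most $s_v^*+1$ choices of $i$; each branch performs polynomial-time bookkeeping (the iterated low-degree removals together with the construction of the affected sets $A[x]$ and their costs needed to set up the subroutine call), bounded by $O(n^4)$ in the worst case, followed by one call to \texttt{AnnotatedMDD} costing $O((s_v^*+2)^{s_v^*} \cdot \deg(w_c)^2 \cdot n^2)$. Multiplying these and absorbing the linear factor $s_v^*+1$ into the polynomial part yields $O((s_v^*+2)^{s_v^*} \cdot 2^{s_v^*} \cdot n^4 \cdot \deg(w_c)^2)$, as claimed.

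The main obstacle here is not a single hard step but nailing down the interface between the two algorithms: one must check that the forced low-degree removals in \texttt{MDD-solv} never destroy an optimal solution, that after them the instance actually satisfies the ``every vertex $\neq w_c$ has degree $\ge i+1$'' precondition, and that it still carries a feedback vertex set of size at most $s_v^*$ avoiding $w_c$ so that the parameter bound is preserved across the reduction. Once this bridge is in place, the remainder is routine accounting of the branching factors against the already-proven complexity of \texttt{AnnotatedMDD}.
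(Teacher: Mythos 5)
Your argument is correct and follows essentially the same route as the paper, which likewise obtains the theorem by wrapping the dynamic program \texttt{AnnotatedMDD} inside the branching of \texttt{MDD-solv} and attributing the additional factor $2^{s_v^*}\cdot n^2$ to the outer loops; your correctness discussion of the interface (forced low-degree removals, the degree-$\geq i+1$ precondition, preservation of the feedback vertex set) is in fact more explicit than the paper's. One verbal slip: the reduction produces $2^{s_v^*}(s_v^*+1)$ subinstances, i.e.\ exponentially many in the parameter rather than ``polynomially many,'' but your subsequent running-time count handles this correctly.
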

The additional factor of $2^{|V_f|} \cdot n^2$ is due to the outer loops of \texttt{MDD-solv} (lines 2-9).

\section{No polynomial kernel with respect to $s_v$}
\label{No polynomial kernel with respect to s_v}

In the previous sections, we presented several fixed-parameter tractability results for \textsc{Minimum Degree Deletion} with respect to the parameters $t_w:=$``treewidth of the input graph'', $s_v:=$``size of a feedback vertex set'', and $s_v^*:=$``size of a feedback vertex set that does not contain $w_c$''.
These fixed-parameter tractability results imply problem kernels.
However, such kernels can have exponential (or even greater) sizes.
For theoretical and of course also for practical reasons one is interested in problem kernels of small sizes.
Although we showed a vertex-linear problem kernel with respect to the parameter $s_e:=$``size of a feedback edge set'', we did not find at least a polynomial kernel for $t_w$, $s_v$, and $s_v^*$.
In this section, we show that (unless the polynomial-time hierarchy~\cite{Stock76} collapses at third level) there is no polynomial kernel for \textsc{Minimum Degree Deletion} even with respect to the parameter $s_c^*:=$``size of a vertex cover that does not contain $w_c$''.
Since every vertex cover is also a feedback vertex set, $s_c^*$ is a weaker parameter than $s_v^*$ which is cleary a weaker parameter than $s_v$.
Hence, the non-existence of the polynomial kernel can be carried over to $t_w$, $s_v$ and $s_v^*$.

Bodlaender et al.~\cite{BDFH09} and Fortnow et al.~\cite{FS08} developed a framework for showing the existence or non-existence of a polynomial kernel.
We need the following definition:
\begin{definition} (\cite{BDFH09,FS08})
 A \textbf{composition algorithm} for a parameterized problem $L \subseteq \Sigma \times \mathbb{N}$ is an algorithm that receives as input a sequence $((x_1 , k)$, $\dots$, $(x_t , k))$, with $(x_i , k) \in \Sigma \times \mathbb{N}^+$  for each $1 \leq i \leq t$, uses time polynomial in $\sum_{i=1}^t |x_i| + k$, and outputs $(y, k') \in \Sigma \times \mathbb{N}^+$ with:
 \begin{enumerate}
  \item $(y, k' ) \in L \Leftrightarrow (x_i,k) \in L$ for some $1 \leq i \leq t$, and
  \item $k'$ is polynomial in $k$.
 \end{enumerate}
 A parameterized problem is called \textbf{compositional}, if there exists a composition algorithm.
\end{definition}

\begin{theorem} (\cite{BDFH09,FS08})
 \label{theorem_compositional}
 If any compositional problem whose unparameterized version is NP-complete has a polynomial kernel, then $\coNP \subseteq \NP / \poly$.
\end{theorem}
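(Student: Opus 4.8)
The plan is to show that the hypothesised polynomial kernel, combined with the composition algorithm, turns the unparameterised version $\tilde{L}$ of $L$ (which is $\NP$-complete by assumption) into a language admitting an \emph{OR-distillation}, and then to invoke the theorem of Fortnow and Santhanam that no $\NP$-hard language has an OR-distillation unless $\coNP \subseteq \NP/\poly$. So the argument splits into two parts: first, build the distillation out of composition plus kernelisation; second, quote the distillation-implies-collapse result, which I would treat as a black box.

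For the first part, let $\mathcal{C}$ be the composition algorithm for $L$ and let $\mathcal{K}$ be a kernelisation for $L$ whose output size is bounded by a polynomial $p$ in the parameter. Suppose we are given strings $x_1,\dots,x_t$, all of length $n$, that we wish to test for membership in $\tilde{L}$; each encodes a pair $(x_i,k_i)$ with $k_i$ polynomially bounded in $n$. I would partition the $x_i$ into blocks sharing a common parameter value $k$ — there are only polynomially many such values — and within each block run $\mathcal{C}$: in time polynomial in $\sum_i|x_i|$ this produces one instance $(y_k,k')$ with $k'$ polynomially bounded in $k$, hence in $n$, and with $(y_k,k') \in L$ if and only if some instance in the block lies in $L$. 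Applying $\mathcal{K}$ to each $(y_k,k')$ then yields, again in time polynomial in $\sum_i|x_i|$, an equivalent instance of size at most $p(k')$, that is, bounded by a fixed polynomial in $n$ \emph{independent of $t$}. Bundling these polynomially many short instances into a single string $z$ of length $\poly(n)$ gives a polynomial-time map $(x_1,\dots,x_t)\mapsto z$ such that $z$ lies in the disjunctive target language — which is in $\NP$, hence polynomial-time reducible to $\tilde{L}$ — exactly when some $x_i \in \tilde{L}$. That is an OR-distillation of $\tilde{L}$.

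For the second part I would simply cite Fortnow and Santhanam. If one wanted the idea: for length $n$ the distillation compresses $t$-tuples of $n$-bit strings into strings of length at most $m(n)$ for a fixed polynomial $m$; taking $t$ to be a sufficiently large polynomial in $m(n)$ and restricting attention to tuples of instances outside $\tilde{L}$, a counting argument yields, for each $n$, an advice string of length $\poly(n)$ relative to which membership in the complement $\overline{\tilde{L}}$ admits a short nondeterministic certificate, so $\overline{\tilde{L}} \in \NP/\poly$ and therefore $\coNP \subseteq \NP/\poly$ since $\tilde{L}$ is $\NP$-complete.

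The genuine obstacle is this second part: the composition-and-kernel bookkeeping in the first part is routine, whereas the step from the existence of an OR-distillation to the complexity-class collapse is the delicate counting and advice argument of Fortnow and Santhanam, which I would not reprove. A minor technical point to handle carefully in the first part is the grouping of the input instances so that the composition algorithm always sees a common parameter, while keeping that parameter polynomially bounded in $n$ so that the final kernelised instances have size independent of the number $t$ of inputs.
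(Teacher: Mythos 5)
The paper does not prove this statement; it imports it verbatim from the cited references \cite{BDFH09,FS08}, so there is no internal proof to compare against. Your reconstruction --- grouping the inputs by parameter value, composing within each group, kernelizing each composed instance, OR-ing the polynomially many small results back into a single instance of the NP-complete unparameterized language to obtain an OR-distillation, and then invoking the Fortnow--Santhanam counting/advice argument as a black box --- is exactly the standard argument of those sources and is correct, including the two technical points you flag (the parameter must stay polynomially bounded in the instance length, and the composition algorithm must be fed instances with a common parameter).
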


\begin{note}
 Theorem~\ref{theorem_compositional} implies $\PH = \Sigma^3_p$~\cite{Yap83}, but Cai et al.~\cite{CCHO02} improved this implication to the collapse $PH = S_2^{\NP}$, which is even a stronger result.
 However, the weaker collapse to $\Sigma^3_p$ may be more familiar.
\end{note}

Bodlaender et al.~\cite{BDFH09} introduced a refined concept of parameterized reduction that allows to transfer non-kernelizable results to new problems:
\begin{definition} (\cite{BTY08})
\label{def_ptp_transformation}
 Let $P$ and $Q$ be parameterized problems.
 We say that $P$ is \textbf{polynomial time and parameter reducible} to Q, written $P \leq_{Ptp} Q$, if there exists a polynomial time computable function $f : \Sigma \times \mathbb{N} \rightarrow \Sigma \times \mathbb{N}$ and a polynomial $p$, such that for all $(x, k) \in \Sigma \times \mathbb{N}$:
 \begin{enumerate}
  \item $(x, k) \in P \Leftrightarrow (x',k') = f (x,k) \in Q$, and
  \item $k' \leq p(k)$.
 \end{enumerate}
 The function $f$ is called \textbf{polynomial time and parameter transformation}. 
\end{definition}

\begin{proposition} (\cite{BTY08})
 \label{proposition_ptp}
 Let $P$ and $Q$ be parameterized problems, and suppose that $P^c$ and $Q^c$ are the derived classical problems.
 Suppose that $Q^c$ is NP-complete, and $P^c \in \NP$.
 Suppose that $f$ is a polynomial time and parameter transformation from $P$ to $Q$.
 Then, if $Q$ has a polynomial kernel, then $P$ has a polynomial kernel.
\end{proposition}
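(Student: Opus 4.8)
The plan is to prove the statement directly, by converting a polynomial kernelization for $Q$ into one for $P$. Suppose $Q$ has a polynomial kernelization $\mathcal{K}_Q$, i.e.\ an algorithm running in time polynomial in $|x_1|+k_1$ that outputs an equivalent instance $(x_2,k_2)$ of $Q$ with $|x_2|\le q(k_1)$ for some polynomial $q$ and $k_2\le k_1$. Given an instance $(x,k)$ of $P$, I would first apply the polynomial time and parameter transformation $f$ of Definition~\ref{def_ptp_transformation} to obtain $(x_1,k_1):=f(x,k)$; this is computable in polynomial time, satisfies $k_1\le p(k)$ for a polynomial $p$, and $(x_1,k_1)$ is a yes-instance of $Q$ exactly when $(x,k)$ is a yes-instance of $P$. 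Applying $\mathcal{K}_Q$ to $(x_1,k_1)$ then yields an equivalent instance $(x_2,k_2)$ of $Q$ whose total size $|x_2|+k_2$ is at most $q(p(k))+p(k)$, a polynomial in $k$; since $f$ runs in time polynomial in $|(x,k)|$ and $\mathcal{K}_Q$ in time polynomial in $|(x_1,k_1)|$, the composition of the two is still polynomial-time in $|(x,k)|$.

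The third step is where the classical-complexity hypotheses enter and is the only non-mechanical point. A kernelization for $P$ must return an instance of $P$ itself, whereas after the first two steps we only hold a short instance of $Q$. To re-enter $P$, I would fix once and for all a polynomial-time many-one reduction $g$ between the derived classical problems in the direction from $Q^c$ to $P^c$; such a $g$ exists by the $\NP$-completeness hypothesis, which is precisely why the statement pairs an ``$\NP$-complete'' assumption with an ``$\in\NP$'' assumption, so that $Q^c$ reduces to $P^c$ in polynomial time. Using $\langle x_2,k_2\rangle\in Q^c$ iff $(x_2,k_2)\in Q$, and decoding $g(\langle x_2,k_2\rangle)$ as a pair $(x_3,k_3)$, we get an instance of $P$ with $(x_3,k_3)\in P$ iff $(x_2,k_2)\in Q$, and with $|(x_3,k_3)|$ polynomial in $|(x_2,k_2)|$, hence polynomial in $k$. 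Chaining all the equivalences, $(x,k)\in P$ iff $(x_3,k_3)\in P$, so $(x,k)\mapsto(x_3,k_3)$ is a polynomial-time self-reduction of $P$ whose output has size polynomial in $k$, that is, a polynomial kernel for $P$.

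The parts I expect to be routine are routine: three polynomial-time maps composed, and polynomial size bounds composed with polynomial size bounds. The hard part, such as it is, is the third step together with recognising that it forces exactly the complexity hypotheses of the proposition, plus one small bookkeeping remark: the parameter $k_3$ of the output is only guaranteed to be polynomial in $k$ rather than bounded by $k$ as in the kernelization notion of Section~\ref{Parameterized complexity}; this is the standard harmless relaxation, since one may assume $k_3\le|x_3|$ and all that matters is that the reduced instance has size polynomial in the parameter. Finally, I would note the intended corollary: combining this proposition with Theorem~\ref{theorem_compositional}, a polynomial time and parameter transformation from a compositional problem with $\NP$-complete classical version into \textsc{Minimum Degree Deletion} rules out a polynomial kernel for the latter unless $\coNP\subseteq\NP/\poly$.
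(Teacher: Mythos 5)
Your proof is correct: the three-step composition (apply $f$, run the polynomial kernelization of $Q$, then re-enter $P$ via a polynomial-time many-one reduction from $Q^c$ to $P^c$ supplied by the $\NP$-completeness of $Q^c$ together with $P^c\in\NP$) is exactly the standard argument for this proposition. The paper itself gives no proof---it imports the statement from the cited reference \cite{BTY08}---and your argument matches the one given there, including the correct bookkeeping remark that the output parameter is only polynomially bounded in $k$, which is the harmless relaxation of the paper's kernel definition that this lower-bound framework actually needs.
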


Now, we show a polynomial time and parameter transformation from \textsc{Small Universe Hitting Set} with respect to the combined parameter $d:=$``size of the universe'' and $k:=$``solution size'' to \textsc{Minimum Degree Deletion} with respect to the combined parameter $s_c^*:=$``size of a vertex cover that does not contain $w_c$'' and $k':=$``solution size''.

\begin{verse}
 \textsc{Small Universe Hitting Set}\\
 \textit{Given:} A family $S$ over a universe $U$ with $|U| \leq d$ and a positive integer $k$.\\
 \textit{Question:} Is there a subset $U' \subseteq U$ of size at most $k$ such that every set in $S$ has a non-empty intersection with $U'$?
\end{verse} 

Dom et al.~\cite{DLS09} showed that \textsc{Small Universe Hitting Set} with respect to the parameter $(d,k)$ does not have a polynomial kernel unless the polynomial-time hierarchy collapses.
In fact, they showed that a colored version of \textsc{Small Universe Hitting Set} is compositional and there is a polynomial time and parameter transformation from the colored to the uncolored version.

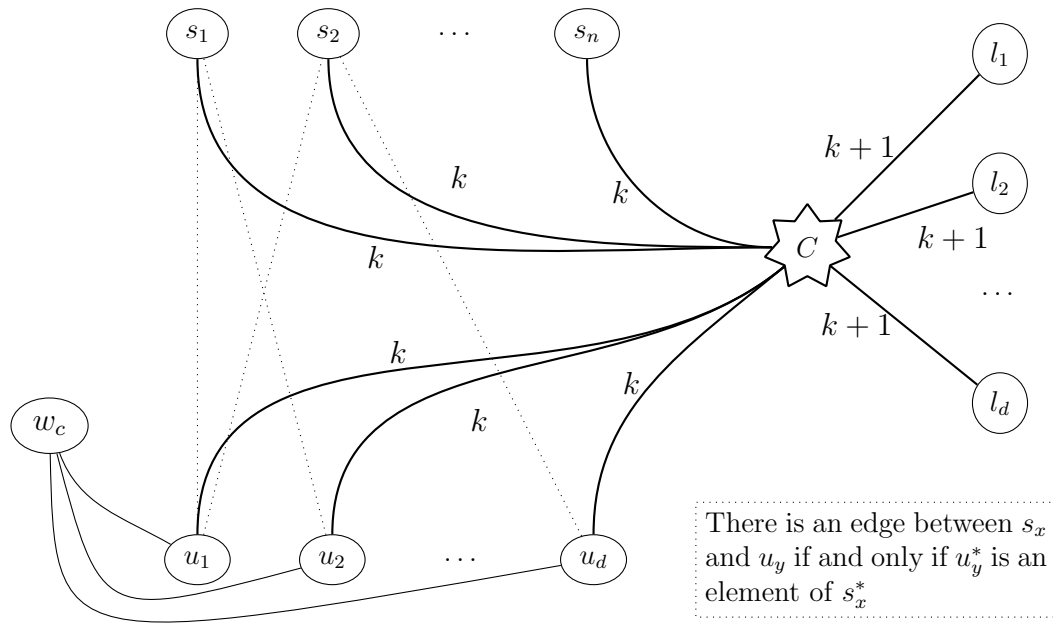
\begin{figure*}
\begin{tikzpicture}[>=latex,join=bevel,scale=0.9]
  \node (w_c) [draw,ellipse] {$w_c$};

  \node (u_1) [below right=2 of w_c] [draw,transform shape,ellipse] {$u_1$};
  \node (u_1_bl) [below left=0.2 of u_1] {};
  \node (u_2) [right=of u_1] [draw,transform shape,ellipse] {$u_2$};
  \node (u_dots) [right=of u_2] [transform shape] {$\dots$};
  \node (u_d) [right=of u_dots] [draw,transform shape,ellipse] {$u_d$};

  \node (s_1) [above=7 of u_1] [draw,transform shape,ellipse] {$s_1$};
  \node (s_1_bl) [below left=0.2 of s_1] {};
  \node (s_2) [right=of s_1] [draw,transform shape,ellipse] {$s_2$};
  \node (s_dots) [right=of s_2] [transform shape] {$\dots$};
  \node (s_n) [right=of s_dots] [draw,transform shape,ellipse] {$s_n$};

  \node (C) [below right=3.5 of s_n] [draw,transform shape,star,star points=7,star point ratio=0.8,thick] {$C$};
  \node (l_1) [above right=3 of C] [draw,transform shape,ellipse] {$l_1$};
  \node (l_2) [below=1 of l_1] [draw,transform shape,ellipse] {$l_2$};
  \node (l_dots) [below=1 of l_2] [transform shape] {$\dots$};
  \node (l_d) [below=1 of l_dots] [draw,transform shape,ellipse] {$l_d$};

  \node (C_Expl3) [above=of s_dots] [draw,transform shape,text width=15cm,rectangle,dotted,text justified]
                  {Each vertex $x \in \{u_1,\dots,u_d\} \cup \{s_1,\dots,s_n\}$ is connected to exactly $k$ vertices in $C$.};
  \node (C_Expl4) [right=of u_d] [draw,transform shape,text width=5cm,rectangle,dotted,text justified]
                  {There is an edge between $s_x$ and $u_y$ if and only if $u^*_y$ is an element of $s^*_x$};

  \draw [-] (u_1) to [out=150,in=290] (w_c);
  \draw [-] (u_2) .. controls +(195:3.5) .. (w_c);
  \draw [-] (u_d) .. controls +(190:8) .. (w_c);

  \draw [-,dotted] (u_1) to (s_1);
  \draw [-,dotted] (u_2) to (s_1);
  \draw [-,dotted] (u_1) to (s_2);
  \draw [-,dotted] (u_d) to (s_2);
  \draw [-,thick] (C) to[out=220,in=90] node[left,yshift=0.5em] {$k$} (u_1);
  \draw [-,thick] (C) to[out=220,in=90] node[left,yshift=-1.5em] {$k$} (u_2);
  \draw [-,thick] (C) to[out=220,in=90] node[left] {$k$} (u_d);

  \draw [-,thick] (C) to[out=180,in=270] node[left,yshift=-0.5em] {$k$} (s_1);
  \draw [-,thick] (C) to[out=180,in=270] node[left,yshift=1.5em] {$k$} (s_2);
  \draw [-,thick] (C) to[out=180,in=270] node[left] {$k$} (s_n);

  \draw [-,thick] (l_1) to node[left] {$k+1$} (C);
  \draw [-,thick] (l_2) to node[below right] {$k+1$} (C);
  \draw [-,thick] (l_d) to node[left] {$k+1$} (C);
 \end{tikzpicture}
 \caption{\textsc{Minimum Degree Deletion} instance $(G,w_c,k)$ obtained from a polynomial time and parameter transformation from  a \textsc{Small Universe Hitting Set} instance $(U^*,S^*,k)$ with universe-size $d$. The star $C$ represents a clique with $k+1$ vertices.
          A fat line that is labeled with a weight $j$ represents $j$ edges between a clique and a vertex.
          Dotted lines between two vertices $u_i$ and $s_j$ represent a single edge which states that $u^*_i$ is an element of $s^*_j$.}
 \label{Polynomial time and parameter transformation from Small Universe Hitting Set with respect to (d,k) to Minimum Degree Deletion with respect to (s_v^*,k).}
\end{figure*}

\paragraph{Polynomial time and parameter transformation.}
This transformation is illustrated in Figure~\ref{Polynomial time and parameter transformation from Small Universe Hitting Set with respect to (d,k) to Minimum Degree Deletion with respect to (s_v^*,k).}.
Let $(U^*,S^*,k)$ be a \textsc{Small Universe Hitting Set} instance with a universe $U^*=\{u^*_1,\dots,u^*_d\}$ , the subset family $S^*=\{s^*_1,\dots,s^*_n\}$, and the size of the solution set $k$.
We construct an undirected graph $G$ with a distinguished vertex $w_c$, such that $(G,w_c,k':=d-k)$ is a yes-instance of \textsc{Minimum Degree Deletion} if and only if $(U^*,S^*,d,k)$ is a yes-instance of \textsc{Small Universe Hitting Set}.
First, we create for each element of the universe $u^*_i$ one vertex $u_i$ and for each subset $s^*_j$ in $S^*$ one vertex $s_j$.
There is an edge between $u_i$ and $s_j$ if and only if $u^*_i$ is an element in $s^*_j$.
Additionally, we create the vertex $w_c$ which should become the vertex with minimum degree.
Our goal is to ensure that an optimal solution deletes $d-k$ vertices from $U:=\{u_1,\dots,u_d\}$ and no other vertex.
Therefore we set the degree of $w_c$ to $d$ and for each other vertex to at least $k$.
To this end, we create one clique $C$ of size $k+1$ and a set of $d$ vertices $L=\{l_1,\dots,l_d\}$, where each of them is connected to each vertex in $C$.
Each vertex that is in $U$ or in $S:=\{s_1,\dots,s_n\}$ is connected to exactly $k$ vertices in $C$.
Thus, $w_c$ has degree $d$ and each other vertex has degree at least $k+1$.

We are now ready to prove the correctness of the polynomial time and parameter transformation.
The first direction of the equivalence is quite simple.
\begin{lemma}
\label{lemma_SUHS2MDD}
 If $(U^*,S^*,k)$ is a yes-instance of \textsc{Small Universe Hitting Set} then $(G,w_c,k')$  is a yes-instance of \textsc{Minimum Degree Deletion}.
\end{lemma}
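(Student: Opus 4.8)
The plan is to exhibit an explicit solution set for the constructed instance $(G,w_c,k')$, read off directly from a hitting set of the source instance, and then verify by a degree count that it works. Suppose $(U^*,S^*,k)$ is a yes-instance and let $U'^* \subseteq U^*$ be a hitting set. Without loss of generality I would assume $|U'^*| = k$: if it is smaller we may add arbitrary universe elements, which keeps it a hitting set, and this is harmless since one may also assume $k \le d$ (otherwise the full universe is a hitting set and, correspondingly, $k' = d-k \le 0$, a degenerate case handled in one line). I would then put $M^* := \{u_i \mid u^*_i \notin U'^*\}$, so that $M^* \subseteq V \setminus \{w_c\}$, $|M^*| = d-k = k'$, and claim that $w_c$ is the unique vertex of minimum degree in $G - M^*$.

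First I would compute $\deg_{G-M^*}(w_c)$: the vertex $w_c$ is adjacent exactly to the $d$ vertices of $U$, and $M^*$ consists of the $d-k$ of them outside $U'^*$, so $\deg_{G-M^*}(w_c) = k$. Next I would run through the remaining vertex classes and check that each still has degree at least $k+1$ after deleting $M^*$, noting throughout that the only vertices ever deleted lie in $U$, so a vertex can lose a neighbour only through neighbours in $U$. A surviving $u_i$ (one with $u^*_i \in U'^*$) has all its neighbours ($w_c$, some $s_j$'s, and $k$ vertices of $C$) intact, hence degree $\ge k+1$; every clique vertex keeps its $k$ clique-neighbours and the $d$ vertices of $L$, hence degree $\ge k+d \ge k+1$; every vertex of $L$ keeps all $k+1$ of its neighbours in $C$. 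The only place the hitting-set hypothesis enters is for a vertex $s_j$: it keeps its $k$ neighbours in $C$, and since $s^*_j \cap U'^* \neq \emptyset$ it also keeps at least one neighbour $u_i$ with $u^*_i \in U'^*$, giving degree $\ge k+1$. (This uses that every $s^*_j$ is non-empty, which holds because an empty set cannot be hit, so a yes-instance of \textsc{Small Universe Hitting Set} has only non-empty sets.) Thus $w_c$ is the only minimum-degree vertex of $G - M^*$, and since $|M^*| = k'$ the instance $(G,w_c,k')$ is a yes-instance.

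I do not expect a real obstacle here: the whole argument is a direct degree count. The only points needing a little care are the bookkeeping of which neighbours can possibly be deleted (only vertices of $U$), the isolation of the single use of the hitting-set property at the vertices $s_j$, and dispatching the two degenerate cases ($k \ge d$, and empty sets in $S^*$), each of which takes a sentence.
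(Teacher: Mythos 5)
Your proof is correct and follows essentially the same route as the paper's: delete the vertices of $U$ corresponding to universe elements outside the hitting set, observe that $w_c$ drops to degree $k$, and check that every other vertex retains degree at least $k+1$, the hitting-set property being used exactly at the vertices $s_j$. The paper's own proof compresses the degree verification into a single sentence, so your class-by-class count (and the remarks on the degenerate cases) only adds detail, not a new idea.
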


\begin{proof}
 Let $U' \subseteq U^*$ with $|U'| = k$ be a solution set of $(U^*,S^*,d,k)$, that is, each subset $\in S^*$ has a non-empty intersection with $U'$.
 The set $M:=\{u_j \mid u_j^* \in U^* \setminus U'\}$ is a solution set of $(G,w_c,d-k)$:
 The vertex $w_c$ has degree $k$ in $G - M$, since $U^*$ has size $d$ and $U'$ has size $k$.
 By the construction of $G$ each other vertex has degree at least $k+1$.
 Thus, $(G,w_c,d-k)$ is a yes-instance of \textsc{Minimum Degree Deletion}.
\end{proof}

The proof of the other direction of the equivalence needs some observations.
In the following, let $M_d$ be any solution set.

\begin{observation}
\label{obs_SUHS_exactly_k}
 The vertex $w_c$ has degree $k$ in $G - M_d$.
\end{observation}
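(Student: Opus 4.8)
\textbf{Proof plan for Observation~\ref{obs_SUHS_exactly_k}.}
The plan is to argue exactly as in the analogous Observation~\ref{obs_MDD_exactly_n-k} for \textsc{Minimum Degree Deletion}, adapting the degree bookkeeping to the present construction. I would split into two cases according to whether $w_c$ ends up with degree below $k$ or above $k$ in $G - M_d$. For the lower case: in the original graph $G$ the vertex $w_c$ is adjacent exactly to $u_1,\dots,u_d$, so $\deg(w_c) = d$; to push its degree below $k$ we would have to delete more than $d - k = k'$ of its neighbors, contradicting $|M_d| \le k'$. So $\deg(w_c) \ge k$ in $G - M_d$.

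For the upper case, suppose $\deg(w_c) \ge k+1$ in $G - M_d$. By construction every vertex of $G$ other than $w_c$ has degree at least $k+1$, and in particular the $k+1$ vertices of the clique $C$ together with the $d$ vertices $l_1,\dots,l_d$ each have degree exactly $k+1$ or more — more carefully, each $l_i$ has degree exactly $k+1$ (its only neighbours are the $k+1$ vertices of $C$), and these $l_i$ are pairwise non-adjacent. Since $M_d$ is a solution set, $w_c$ must be the \emph{only} minimum-degree vertex in $G - M_d$; hence if $w_c$ has degree $\ge k+1$ there can be no other vertex of degree $\le k+1$ surviving, so in particular all $d$ vertices $l_i$ (which would otherwise be minimum-degree candidates tying or beating $w_c$) must either be deleted or have their degree pushed above that of $w_c$. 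But the $l_i$ only get their degree from $C$, so raising each one's degree is impossible without adding edges; deleting them costs $d > k' $ vertices (as $k' = d-k < d$), a contradiction. Combining both cases forces $\deg(w_c) = k$ in $G - M_d$.

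The only point requiring a little care — and the step I would write out most explicitly — is the claim that in the upper case one is genuinely \emph{forced} to touch more than $k'$ vertices. I would make this airtight by noting that the $d$ vertices $l_1,\dots,l_d$ form an independent set, each of degree exactly $k+1$ in $G$, and that none of them is adjacent to $w_c$ or to any $u_i$ or $s_j$; therefore deleting vertices from $U \cup S \cup \{w_c\}$ never increases the degree of any $l_i$, so each $l_i$ stays at degree exactly $k+1$ unless it is itself deleted or a neighbour in $C$ is deleted — and deleting a vertex of $C$ would drop the degree of every $l_i$ and of many other vertices even further below $k+1$, making matters worse. So to avoid a second minimum-degree vertex one would have to delete all $d$ of the $l_i$ (or enough of $C$ to be even more costly), which needs $> k' = d - k$ deletions, the desired contradiction. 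With both inequalities established, $\deg(w_c) = k$ in $G - M_d$, completing the proof.
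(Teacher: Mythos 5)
Your proof is correct and follows essentially the same two-case argument as the paper: the lower bound comes from counting deleted neighbors of $w_c$ against the budget $k'=d-k$, and the upper bound from observing that the $d$ vertices of $L$ each have degree exactly $k+1$ and would all have to be deleted, exceeding the budget. Your version merely spells out in more detail why the degrees of the $l_i$ cannot be raised and why deleting vertices of $C$ only makes things worse, which the paper leaves implicit.
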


\begin{proof}
 Assume that $w_c$ has degree more than $k$ in $G - M_d$.
 Since $w_c$ is the only vertex with minimum degree in $G - M_d$, $M_d$ must contain every vertex in $L$, because each vertex in $L$ has degree $k+1$.
 That means $M_d$ is of size $d > d-k$; a conflict.

 Assume that $w_c$ has degree less than $k$ in $G - M_d$.
 In this case, $M_d$ must contain at least $d-(k-1)$ neighbors of $w_c$; a conflict.
\end{proof}

\begin{observation}
\label{obs_SUHS_FVS_k}
 The graph $G$ has a vertex cover of size $k+1+d$ which does not contain $w_c$.
\end{observation}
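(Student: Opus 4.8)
The plan is to prove the observation by exhibiting an explicit vertex cover of the claimed size, rather than arguing abstractly. The natural candidate is $U \cup C$, where $U = \{u_1,\dots,u_d\}$ is the set of ``universe vertices'' and $C$ is the size-$(k+1)$ clique from the construction. This set has $d + (k+1) = k+1+d$ vertices, and it does not contain $w_c$, since $w_c$ is introduced as a vertex disjoint from both $U$ and $C$. So the only thing left to check is that $U \cup C$ is indeed a vertex cover, i.e.\ that every edge of $G$ has at least one endpoint in $U \cup C$.

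To do this I would go through the edge types of the construction one by one. The edges incident to $w_c$ are exactly the edges $\{w_c, u_i\}$ for $u_i \in U$, each of which is covered by $u_i \in U$. The edges of the form $\{u_i, s_j\}$ (present iff $u^*_i \in s^*_j$) are covered by $u_i \in U$. Every remaining edge is incident to the clique $C$: the $\binom{k+1}{2}$ inner-clique edges, the edges $\{c, l_p\}$ from each $c \in C$ to each $l_p \in L$, and the $k$ edges from each vertex of $U \cup S$ into $C$; all of these are covered by their endpoint in $C$. Since these exhaust all edges introduced in the transformation, $U \cup C$ covers every edge.

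The step I would pay the most attention to is simply the \emph{completeness} of this case analysis: one must be sure the list of edge types above is exhaustive, which just amounts to re-reading the paragraph describing the transformation (in particular, that $w_c$ has no neighbours outside $U$, and that $L$ has no neighbours outside $C$). Beyond that there is no real obstacle --- the statement is essentially a bookkeeping observation whose sole purpose is to supply the parameter bound $s_c^* \le k+1+d$ needed so that the transformation from \textsc{Small Universe Hitting Set} is a polynomial time and parameter transformation.
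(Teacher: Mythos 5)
Your proposal is correct and uses exactly the same vertex cover as the paper, namely $C \cup U$; the paper merely asserts that $G - (C \cup U)$ contains no edge, while you spell out the edge-type case analysis. No substantive difference.
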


\begin{proof}
 The set $C \cup U$ is a vertex cover:
 $G - (C \cup U)$ does not contain any edge; $C$ is of size $k+1$ and $U$ of size $d$. 
\end{proof}

\begin{lemma}
\label{lemma_MDD2SUHS}
 If $(G,w_c,k')$  is a yes-instance of \textsc{Minimum Degree Deletion} then $(U^*,S^*,k)$ is a yes-instance of \textsc{Small Universe Hitting Set}.
\end{lemma}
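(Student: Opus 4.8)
The plan is to use Observation~\ref{obs_SUHS_exactly_k} to pin down the structure of an arbitrary solution set $M_d$ of $(G,w_c,k')$, and then read off a hitting set from the universe vertices that survive the deletion. Recall that $k' = d-k$.

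First I would determine where $M_d$ can live. In $G$ the vertex $w_c$ is adjacent exactly to the vertices of $U$ and has degree $d$, while Observation~\ref{obs_SUHS_exactly_k} tells us that $w_c$ has degree exactly $k$ in $G-M_d$. Hence $M_d$ must contain exactly $d-k$ vertices of $U$. Since $|M_d|\le k' = d-k$, this already exhausts the budget, so in fact $M_d\subseteq U$ and $|M_d| = d-k$; in particular no vertex of the clique $C$, no vertex of $L$, and no vertex $s_j$ is deleted.

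Next I would define $U' := \{\, u^*_i \mid u_i \in U\setminus M_d \,\}$, which has size $d-(d-k)=k$, and argue that $U'$ is a hitting set for $S^*$. Fix $s^*_j\in S^*$ and suppose towards a contradiction that $s^*_j\cap U' = \emptyset$. Then for every element $u^*_i\in s^*_j$ the corresponding vertex $u_i$ lies in $M_d$, so all neighbors of $s_j$ in $U$ have been removed. Because $M_d\subseteq U$, the only remaining neighbors of $s_j$ are its $k$ neighbors in the clique $C$, so $\deg_{G-M_d}(s_j)=k=\deg_{G-M_d}(w_c)$, contradicting that $w_c$ is the \emph{only} vertex of minimum degree in $G-M_d$. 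Therefore $U'$ intersects every set in $S^*$, and $(U^*,S^*,k)$ is a yes-instance of \textsc{Small Universe Hitting Set}, which together with Lemma~\ref{lemma_SUHS2MDD} gives the equivalence.

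The only delicate point — hence the ``main obstacle'' — is the claim that $M_d$ is contained in $U$ rather than partly spent on clique, $L$-, or $S$-vertices; this is exactly what Observation~\ref{obs_SUHS_exactly_k} plus a count of $N(w_c)$ delivers, and the remaining argument is a one-line degree computation. I would also flag the degenerate case of an empty set in $S^*$: such a set can never be hit, but its vertex $s_j$ has degree $k$ regardless of $M_d$, so $(G,w_c,k')$ is then already a no-instance and the equivalence holds vacuously.
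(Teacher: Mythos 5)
Your proposal is correct and follows essentially the same route as the paper: build $U'$ from the universe vertices surviving in $G-M_d$, use Observation~\ref{obs_SUHS_exactly_k} to get $|U'|=k$, and derive a contradiction from an unhit set $s^*_j$ via the degree of $s_j$ in $G-M_d$. Your explicit preliminary step that $M_d\subseteq U$ (which the paper leaves implicit when asserting $\deg_{G-M_d}(s_j)=k$) is a welcome tightening but not a different argument, since even without it $s_j$ would have degree at most $k$ and the contradiction would still go through.
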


\begin{proof}
 It remains to show that there is a hitting set $U' \subseteq U^*$.
 One can build $U'$ as follows:
 For each vertex $u_i \in U$ which is not in $M_d$ add the element $u^*_i$ to $U'$.
 Due to Observation~\ref{obs_SUHS_exactly_k} the size of $U'$ is $k$.
 It remains show that $U'$ is a hitting set.
 Assume that there is a subset $s^*_j$ with $j \in \{1,\dots,n\}$ that has no intersection with any element in $U'$.
 Thus, for each element $u^*_i \in s^*_j$ the corresponding vertex $u_i$ is in $M_d$.
 Due to the construction of $G$ the vertex $s_j$ has degree $k$ in $G - M_d$.
 It follows that $M_d$ is not a solution set; a conflict.
\end{proof}

\begin{lemma}
\label{lemma_MDDptpSUHS}
 There is a polynomial time and parameter transformation from \textsc{Small Universe Hitting Set} with respect to the combined parameter $(d,k)$ to \textsc{Minimum Degree Deletion} with respect to the combined parameter $(s_c^*,k')$.
\end{lemma}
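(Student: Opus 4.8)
The plan is to use the construction just described as the transformation $f$ and to verify the two conditions of Definition~\ref{def_ptp_transformation}, reusing the lemmas already established. Given a \textsc{Small Universe Hitting Set} instance $(U^*,S^*,k)$ with $|U^*|=d$ and $|S^*|=n$, I would set $f(U^*,S^*,k):=(G,w_c,k')$ where $k':=d-k$ and $G$ is the graph on $\{w_c\}\cup U\cup S\cup C\cup L$ with: the incidence edges between $U$ and $S$ (an edge $\{u_i,s_j\}$ iff $u^*_i\in s^*_j$), an edge from $w_c$ to every vertex of $U$, the $(k{+}1)$-clique $C$ together with $k$ edges from each vertex of $U\cup S$ into $C$, and $k+1$ edges from each vertex of $L$ into $C$.

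First I would note that $f$ is polynomial-time computable: $G$ has $2d+n+k+2$ vertices and $\poly(d,n,k)$ edges, and every incidence can be read off from $(U^*,S^*)$ in constant time, so $f$ runs in time polynomial in $|(U^*,S^*,k)|$. Next, the equivalence $(U^*,S^*,k)\in\textsc{Small Universe Hitting Set}\iff (G,w_c,k')\in\textsc{Minimum Degree Deletion}$ is exactly the conjunction of Lemma~\ref{lemma_SUHS2MDD} (forward direction) and Lemma~\ref{lemma_MDD2SUHS} (backward direction), so no further correctness argument is needed for condition~1 of Definition~\ref{def_ptp_transformation}.

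It then remains to check the parameter bound, condition~2. The parameter of the target problem is the pair $(s_c^*,k')$. By Observation~\ref{obs_SUHS_FVS_k}, $G$ has a vertex cover of size $k+1+d$ that avoids $w_c$, hence $s_c^*\le k+d+1$; and $k'=d-k\le d$. Both entries are therefore bounded by a linear (in particular polynomial) function of the original parameter pair $(d,k)$, which is what Definition~\ref{def_ptp_transformation} demands. Combining the three points yields the desired polynomial time and parameter transformation, which — together with Proposition~\ref{proposition_ptp} and the result of Dom et al.\ that \textsc{Small Universe Hitting Set} has no polynomial kernel with respect to $(d,k)$ — is what the remainder of the section needs. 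There is no real obstacle here: the substantive work was already carried out in Lemmas~\ref{lemma_SUHS2MDD} and~\ref{lemma_MDD2SUHS}, and the only point requiring care is to phrase the parameter accounting in terms of the \emph{combined} parameters on both sides, so that $s_c^*\le d+k+1$ and $k'\le d$ jointly certify the required polynomial dependence.
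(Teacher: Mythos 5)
Your proposal is correct and follows essentially the same route as the paper: it uses the described construction as the transformation, cites Lemma~\ref{lemma_SUHS2MDD} and Lemma~\ref{lemma_MDD2SUHS} for the equivalence, and Observation~\ref{obs_SUHS_FVS_k} for the bound $(s_c^*,k') = (d+k+1,\,d-k)$ on the new parameter. The only addition is your explicit verification of polynomial running time, which the paper leaves implicit.
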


\begin{proof}
 Due to Lemma~\ref{lemma_SUHS2MDD} and~\ref{lemma_MDD2SUHS} the equivalence of both instances (point~1 of Definition~\ref{def_ptp_transformation}) is given.
 Due to Observation~\ref{obs_SUHS_FVS_k} and the construction of graph $G$, the new parameter $(s_c^*,k')$ is bounded by a polynomial only depending on the old parameter $(d,k)$ (point 2 of Definition~\ref{def_ptp_transformation}):
 \begin{gather*}
  (s_c^*,k') = (d+1+k,d-k).
 \end{gather*}
\end{proof}

Putting all together, we arrive at the following theorem:
\begin{theorem}
 \textsc{Minimum Degree Deletion} has no polynomial kernel with respect to the combined parameter $(s_c^*,k)$, with $s_c^*$ being the size of a vertex cover that does not contain the distinguished vertex and $k$ being the number of to deleted vertices, unless $\coNP \subseteq \NP / \poly$.
\end{theorem}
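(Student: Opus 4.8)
The plan is to assemble the machinery already developed in this section. We have the polynomial time and parameter transformation $f$ from \textsc{Small Universe Hitting Set} parameterized by $(d,k)$ to \textsc{Minimum Degree Deletion} parameterized by $(s_c^*,k')$ provided by Lemma~\ref{lemma_MDDptpSUHS}, and we have Proposition~\ref{proposition_ptp}, which says that such a transformation carries the existence of a polynomial kernel backwards, provided the derived classical problems behave suitably. So the heart of the argument is just an application of Proposition~\ref{proposition_ptp}, contradicting the fact established by Dom et al.~\cite{DLS09} that \textsc{Small Universe Hitting Set} has no polynomial kernel with respect to $(d,k)$ unless the polynomial-time hierarchy collapses.

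Concretely, I would first verify the two hypotheses of Proposition~\ref{proposition_ptp} for $P := \textsc{Small Universe Hitting Set}$ and $Q := \textsc{Minimum Degree Deletion}$. Membership $P^c \in \NP$ is immediate: a hitting set of size at most $k$ is a polynomial-size certificate checkable in polynomial time. For $Q^c$ being $\NP$-complete, membership in $\NP$ is again immediate (guess at most $k$ vertices to delete and verify all degrees), and $\NP$-hardness follows either from the general $\NP$-completeness of hereditary vertex-deletion problems mentioned in Section~\ref{BDD_intro}, or simply by reading the transformation $f$ of Lemma~\ref{lemma_MDDptpSUHS} as an ordinary polynomial-time many-one reduction from the unparameterized \textsc{Hitting Set} problem, which is $\NP$-complete~\cite{Karp72}; once the parameter is forgotten, the ``small universe'' restriction imposes nothing.

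Then I would run the contradiction. Suppose \textsc{Minimum Degree Deletion} admits a polynomial kernel with respect to $(s_c^*,k)$. Since $f$ is a polynomial time and parameter transformation from $P$ to $Q$ (Lemma~\ref{lemma_MDDptpSUHS}), with $P^c \in \NP$ and $Q^c$ being $\NP$-complete as just argued, Proposition~\ref{proposition_ptp} yields a polynomial kernel for \textsc{Small Universe Hitting Set} with respect to $(d,k)$. By the result of Dom et al.~\cite{DLS09} --- obtained via a composition algorithm for a colored variant of \textsc{Small Universe Hitting Set} together with Theorem~\ref{theorem_compositional} and the passage to the uncolored version --- this forces $\coNP \subseteq \NP/\poly$. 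Finally I would remark that, by the proof of Lemma~\ref{lemma_MDDptpSUHS}, the resulting parameter satisfies $(s_c^*,k') = (d+1+k,\,d-k)$, which is polynomial in $(d,k)$ exactly as required by Definition~\ref{def_ptp_transformation}; here $s_c^*$ is the size of a vertex cover of $G$ avoiding $w_c$, certified by the set $C \cup U$ in Observation~\ref{obs_SUHS_FVS_k}, so the statement holds precisely for the combined parameter claimed.

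The step requiring the most care is not conceptually deep but is the routine bookkeeping: confirming that a kernel bounded by a polynomial in $s_c^* + k$ is the object that feeds correctly through Proposition~\ref{proposition_ptp} and back through the bound $k' \le p(k)$ of Definition~\ref{def_ptp_transformation}, and making sure the $\NP$-completeness (not merely $\NP$-membership) of the classical \textsc{Minimum Degree Deletion} is genuinely available --- supplying the short reduction sketched above if it is not already on record, since Proposition~\ref{proposition_ptp} really does need $Q^c$ to be $\NP$-complete.
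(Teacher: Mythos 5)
Your proposal is correct and follows essentially the same route as the paper, which obtains the theorem by combining Lemma~\ref{lemma_MDDptpSUHS} with Proposition~\ref{proposition_ptp} and the lower bound of Dom et al.~\cite{DLS09} for \textsc{Small Universe Hitting Set}. The only difference is that you explicitly check the $\NP$-completeness of the classical version of \textsc{Minimum Degree Deletion} required by Proposition~\ref{proposition_ptp}, a hypothesis the paper leaves implicit; this is a welcome addition rather than a deviation.
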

Of course, this implies that there is no hope for polynomial kernels for the parameters $t_w$, $s_v$, $s_v^*$ and $s_c:=$``size of a vertex cover'' as single parameters, or combined with $k$.

\chapter{Bounded Degree Deletion}
\label{Bounded Degree Deletion}
In this chapter, we analyze the parameterized complexity of \textsc{Bounded Degree Deletion}.
The problem is motivated as graph problem where one searches for a vertex subset of size at most $k$ whose removal from the graph is a graph in which each vertex has degree at most $d$.
The following section summarizes known results of the parameterized complexity of \textsc{Bounded Degree Deletion} which are obtained from~\cite{Moser10}.

\section{Known results}
\label{Known results for Bounded Degree Deletion}
\textsc{Bounded Degree Deletion} is fixed-parameter tractable with respect to the parameter $k$ for constant $d$, which can be seen by reduction to $(d+2)$\textsc{-Hitting Set}.
For $d=1$ there is a $15k$-vertex kernel and a $O(2^k \cdot k^2 \cdot k n)$ algorithm.
An $O(k^{1+\epsilon})$-vertex kernel was show for $d \leq 2$.
Furthermore, there is a fixed-parameter algorithm with running time $O((d+2)^k + n(k+d))$.
For unbounded $d$, \textsc{Bounded Degree Deletion} is W[2]-complete for the parameter $k$.
In the following we start with investigating the parameterized complexity with respect to the parameters that measure the ``degree of acyclicity'' beginning with the ``size of a feedback edge set''.
The parameterized complexity for ``treewidth of the input graph'' or ``size of a feedback vertex set'' remains open in this work.

\section{Size of a feedback edge set as parameter}
\label{Size of a Feedback Edge Set as parameter for Bounded Degree Deletion}
In this section, we show fixed-parameter tractability for \textsc{Bounded Degree Deletion} with respect to the parameter $s_e:=$``size of a feedback edge set''.
A first step will be to show that an annotated version is polynomial-time solvable on acyclic graphs.

\paragraph{An annotated version on acyclic graphs.}
In the following, we suppose that the input graph is acyclic.
We describe an algorithm that computes an \textit{optimal solution set}, that is, a vertex subset of minimum size whose removal from the graph is a graph in which each vertex has degree at most $d$.
To finally solve \textsc{Bounded Degree Deletion} on graphs with bounded feedback edge set size, we introduce a slightly modified version of \textsc{Bounded Degree Deletion} and show that it is solvable in polynomial time on acyclic graphs.
The modified problem is defined as follows:\\
\newpage
\begin{verse}
 \textsc{Annotated Bounded Degree Deletion}\\
 \textit{Given:} An undirected graph $G=(V,E)$, a vertex subset $U$, and integers $d \geq 0$ and $k \geq 0$.\\
 \textit{Question:} Does there exists a subset $V' \subseteq (V \setminus U)$ of size at most $k$ whose removal from $G$ yields a graph in which each vertex has degree at most $d$?
\end{verse}
The vertex subset $U$ is called the set of \textit{unremovable} vertices.
The algorithm uses a specialized bottom-up tree-traversal and handles each vertex exactly once:
Either the vertex will be marked to be ``not contained in the solution set'' or the vertex will be removed.
This process is called \textit{decision step} of the algorithm.
The order of processing the vertices corresponds to their depth\footnote{The depth of a vertex $x$ in a tree is the length of the path between $x$ and the root.} (from higher to lower).
Vertices that have the same depth are handled in order of their degree (from higher to lower).
This ensures three invariants in the decision step for vertex $x$:
\begin{enumerate}
 \item Each child of $x$ was either removed or was marked.
 \item Each child of every sibling of $x$ was either removed or was marked.
 \item There is no sibling of $x$ with higher degree which was not already removed or marked.
\end{enumerate}
The decision step of the algorithm is given in Figure~\ref{Decision step of the algorithm.}.
It is easy to see that after each decision step for a vertex $x$ either:
\begin{itemize}
 \item $x$ was removed, or
 \item $x$ was marked and has degree at most $d$, or
 \item the algorithm canceled due to the detection of a no-instance.
\end{itemize}

\begin{figure}
 \textbf{Decision step}\\
 \medskip
 \textbf{Input:} An undirected acyclic graph $G$ and a vertex $x$ from $G$. \\
 \textbf{Require:} Each child of $x$ was either removed or was marked.
                   Each child of every sibling of $x$ was either removed or was marked.
                   There is no sibling of $x$ with higher degree which was not already removed or marked.\\
 Let $p$ denote the parent vertex of $x$ and let $p_p$ denote the parent of $p$.
\begin{description}
 \item [Case A] $x$ is unremovable
 \begin{description}                
  \item [Case A.1] If $\deg(x) = d+1$ and $p$ is removable, then remove $p$ and mark $x$ to be ``not contained in the solution set''.
  \item [Case A.2] If $\deg(x) = d+1$ and $p$ is unremovable, then cancel and return ``no''.
  \item [Case A.3] If $\deg(x) > d+1$, then cancel and return ``no''.
  \item [Case A.4] If $\deg(x) < d+1$, then mark $x$ to be ``not contained in the solution set''.
 \end{description}
 \item [Case B] $x$ is removable
 \begin{description}
  \item[Case B.1] $p$ is removable
  \begin{description}
   \item[Case B.1.a] If $\deg(x) < d+1$, then mark $x$ to be ``not contained in the solution set''.
   \item[Case B.1.b] If $\deg(x) > d+1$, then remove $x$.
   \item[Case B.1.c] If $\deg(x) = d+1$, then remove the parent vertex of $x$. If there is no parent vertex ($x$ is the root), then just remove $x$.
  \end{description}
  \item[Case B.2] $p$ is unremovable
  \begin{description}
   \item[Case B.2.a] If $\deg(x) \geq d+1$, then remove $x$.
   \item[Case B.2.b] If $\deg(x) < d+1$ and $\deg(p) < d+1$, then mark $x$ to be ``not contained in the solution set''.
   \item[Case B.2.c] It holds that $\deg(x) < d+1$ and $\deg(p) \geq d+1$.
                     If $p_p$ exists and $p_p$ is removable, then remove $p_p$.
                     If $\deg(p) \geq d+1$ (possibly after removing $p_p$), then remove $x$.
  \end{description}
 \end{description}
\end{description}
 \textbf{Ensure:} Either $x$ is removed, or $x$ is marked and has degree at most $d$, or the algorithm cancels due to the detection of a no-instance. 
                  Furthermore, the decision is optimal with respect to the total number of removed vertices.
\caption{Decision step of the algorithm. The parent vertex of $x$ is denoted as $p$.} 
\label{Decision step of the algorithm.} 
\end{figure}
Let $M$ be the set of marked vertices and $S := V \setminus M$.
\begin{lemma}
 The algorithm computes an optimal solution set in $O(n^2)$ time.
\end{lemma}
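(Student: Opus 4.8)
The plan is to split the lemma into two independent claims: the running-time bound, which is immediate, and the correctness of the output, which is the real work. For the running time, note that the traversal touches each vertex exactly once; computing the depths and sorting the vertices by $(\text{depth},\deg)$ in non-increasing order costs $O(n\log n)$; every decision step inspects only $x$, its parent $p$, its grandparent $p_p$, and their current degrees, hence takes $O(1)$ time, and it triggers at most a constant number of vertex removals, each of which costs $O(n)$ to update the stored adjacency information and the affected degree counters. Since there are at most $n$ removals overall, the total time is $O(n^2)$. Correctness then decomposes into (a) showing that whenever the algorithm does not cancel, $S:=V\setminus M$ is a feasible solution of \textsc{Annotated Bounded Degree Deletion}, and (b) showing that $|S|$ is minimum, and that cancelling is correct (no feasible solution exists).

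For (a) I would first check that the ``\textbf{Ensure}'' postcondition of the decision step follows from its ``\textbf{Require}'' precondition, by a direct inspection of Cases A.1--A.4 and B.1.a--B.2.c; the only slightly delicate point is Case~B.2.c, where removing $p_p$ may still leave $\deg(p)\ge d+1$, in which case $x$ is removed after all. Then I would verify that the global processing order — non-increasing depth, ties broken by non-increasing degree — enforces the precondition at every step: all children of $x$ and of its siblings lie at strictly greater depth, hence are already marked or removed, and every sibling of $x$ of strictly larger degree is handled earlier. Because the algorithm only ever removes vertices, degrees are monotonically non-increasing, so a vertex marked with degree $\le d$ keeps degree $\le d$ until the end; and because each removal (in A.1, B.1.b, B.1.c, B.2.a, B.2.c) removes only a vertex that has been explicitly tested for removability, no vertex of $U$ is ever removed. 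Hence $S\cap U=\emptyset$ and $G-S$ has maximum degree at most $d$.

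The core of the proof is (b), which I would do by an exchange argument. Proceed by induction on the number $j$ of decision steps already executed, maintaining the invariant: either the algorithm has cancelled and the instance has no feasible solution, or it has not cancelled and there is a minimum-size feasible solution $S_j^\star$ that is \emph{consistent} with the algorithm's current state, meaning every vertex removed so far lies in $S_j^\star$ and every vertex marked so far lies outside $S_j^\star$. The key observation that makes all cases uniform is the ``parent-edge slack'': when a vertex $c$ is marked, its current degree — which still counts the edge to its not-yet-processed parent — is $\le d$, so in any consistent solution, once that parent edge is eventually deleted, $c$ has at least one unit of slack. Using this, at step $j+1$ processing $x$ I would, case by case, turn $S_j^\star$ into a consistent minimum-size $S_{j+1}^\star$ without disturbing earlier decisions: e.g.\ in Case~A.1 feasibility of $S_j^\star$ already forces $p\in S_j^\star$ (else $x$ would have degree $d+1$ in $G-S_j^\star$), so nothing changes; in Case~B.1.c either $S_j^\star$ already contains $p$ and not $x$, or $S_j^\star$ contains $x$ but not $p$, in which case swapping $x$ for $p$ preserves the size, restores $\deg(x)\le d$, and can only help $p_p$, the siblings of $x$, and the slack-bearing marked children of $x$ (the remaining sub-cases contradict feasibility or minimality); and the cancelling Cases A.2, A.3 are handled by showing that consistency of $S_j^\star$ already forces $x$ to keep more than $d$ neighbours while $x\notin S_j^\star$, a contradiction, so no feasible solution exists. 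At $j=n$ every vertex has been decided, so $S_n^\star=S$, proving $S$ optimal.

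I expect the main obstacle to be the bookkeeping of this exchange across the roughly ten cases of the decision step — especially the ``reach-upwards'' cases A.1, B.1.c, B.2.c where a parent or grandparent is removed out of processing order, so one must argue the swap neither violates the degree bound at the shifted vertices nor changes any decision already fixed on a different subtree. The parent-edge slack lemma together with the degree-based tie-break (which guarantees that the most constrained child of a shared parent is the one that triggers that parent's removal, so the removal is never regretted) are precisely the tools that close these cases.
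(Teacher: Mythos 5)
Your proposal follows essentially the same route as the paper's proof: an induction over the processing order that verifies, case by case through A.1--B.2.c, that each decision step preserves both feasibility and optimality, plus the same per-step accounting ($O(1)$ decisions, $O(n)$ per removal, $n$ steps) for the $O(n^2)$ bound. Your explicit exchange invariant (a minimum solution consistent with all decisions made so far) and the ``parent-edge slack'' observation are just a more carefully formalized statement of what the paper argues informally, so this is the same proof.
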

\begin{proof}
 It remains to show that if the decision step was correct and optimal for each child of a vertex $x$, then the decision step is also correct and optimal for $x$.
 Due to the processing order, the correctness (and optimality) of the whole algorithm follows.
 This proof is more or less a complete induction:
 The base clause is ``the decision step is correct and optimal for each child of a leaf'' which is clearly given, because a leaf has no child.
 In the following we show correctness and optimality for each case:
 \vspace*{0.1cm}\\
 \begin{tabular}{l p{12cm}}
  Case A.1   & Since the decision was correct for each child, $p$ is the only neighbor of $x$ that can be removed.
               Removing $x$ is not allowed due to the problem definition.
               To decrease the degree of $x$ to $d$ one must remove $p$. \\
  Case A.2   & No neighbor of $x$ can be removed. 
               Hence, it is not possible to decrease the degree of $x$. \\
  Case A.3   & Analogously to case A.1, $p$ is the only neighbor of $x$ that can be removed.
               Removing $x$ is not allowed due to the problem definition.
               Thus, it is not possible to decrease the degree of $x$ by more than one anyway. \\
  Case A.4   & This case is correct by the problem definition. \\
  Case B.1.a & Removing $x$ does not lead to an optimal solution.
               The degree of $x$ is already small enough and $p$ is removable. 
               Hence, it is at least as good to remove $p$ instead of $x$. \\
  Case B.1.b & Clearly, one must remove either $x$ or at least $2$ neighbors of $x$.
               Since the decision was correct for each child, $p$ is the only neighbor of $x$ that can be removed. 
               Hence, one must remove $x$ anyway. \\
  Case B.1.c & Clearly, one must remove either $x$ or $p$.
               It is not necessary to remove $x$, because removing $p$ instead is always better.
               Note that the degree of each child of $x$ is already at most $d$. \\
  Case B.2.a & No neighbor of $x$ can be removed.
               One must decrease the degree of $x$ at least by one.
               Clearly, the only way to do this is removing $x$. \\
  Case B.2.b & Removing $x$ does not lead to an optimal solution.
               The degree of $x$ and the degree of the only neighbor $p$ is already small enough so that we do not need to remove $x$.
               Hence, not removing $x$ cannot be wrong. \\
  Case B.2.c & One must remove enough neighbors of $p$ such that it has final degree at most $d$.
               Due to the invariants, no remaining sibling has another neighbor than $p$ and no sibling hat degree more than $d$.
               More precisely, $x$ has degree at most $d$ and siblings with higher degree were processed before.
               It does not matter which child of $p$ will be removed, but removing the parent vertex can possibly decrease the degree of another vertex. \\
 \end{tabular}\medskip\\
 Hence, the vertex subset $S$ is an optimal solution set.
 It remains to prove the running time.
 Collecting the degree information for each vertex and ordering the vertices according to their depth and degree takes $O(n^2)$ time. 
 Each decision step is computable in $O(n)$ time:
 Marking a vertex or checking whether a vertex is marked takes constant time.
 Removing a vertex while updating the degree information and updating the ordering of the vertices takes $O(n)$ time.
 Since each vertex is only processed once, the algorithm needs $n$ decision steps.
 Thus, the overall running time is in $O(n^2)$.
\end{proof}

\paragraph{Extension to the case of bounded feedback edge set size.}
Let $(G,d,k)$ be an instance of \textsc{Bounded Degree Deletion}.
In the following, we assume that there is a feedback edge set $E_f$ of size $s_e$.
The first step is a search tree that transforms the instance into acyclic instances.
We branch on the feedback edge set elements into three cases.
Let $\{x,y\}$ be an edge in $E_f$:
\begin{description}
 \item[Branching case 1] Remove $x$.
 \item[Branching case 2] Remove $y$.
 \item[Branching case 3] Do not remove $x$ and $y$.
               Instead, remove $\{x,y\}$ and add one additional leaf $a_x$ to $x$ and one additional leaf $a_y$ to $y$.
               (This is necessary to preserve the degrees.)
               Mark $x$, $a_x$, $y$, and $a_y$ as ``unremovable''.
\end{description}
Let $G'$ be the resulting graph, $U$ be the set of vertices that are marked as ``unremovable'', and $r$ the number of removed vertices.
In each search tree leaf $G'$ is acyclic.
We have to guarantee to find an optimal solution set in at least one branch.
Thus, a second step is to solve the \textsc{Annotated Bounded Degree Deletion} instance $(G',U,d,k-r)$ in each search tree leaf.
If the \textsc{Annotated Bounded Degree Deletion} instance in any search tree leaf is a yes-instance, then return ``yes''.
Otherwise, return ``no''.
Putting all together, we arrive at the following:
\begin{theorem}
 \textsc{Bounded Degree Deletion} can be solved in $O(3^{s_e} \cdot n^2)$ time with $s_e$ being the size of a feedback edge set.
\end{theorem}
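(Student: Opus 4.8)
The plan is to realize the two-phase strategy already sketched: a depth-$s_e$ search tree that destroys all cycles, followed, at every leaf, by one call of the $O(n^2)$-time algorithm for \textsc{Annotated Bounded Degree Deletion} on acyclic graphs established earlier in this section. First I would fix a feedback edge set $E_f$ with $|E_f|=s_e$; such a set is obtainable in polynomial time as the set of non-tree edges of a spanning forest. Then I would build the search tree by processing the edges of $E_f$ one at a time: for the current edge $\{x,y\}$ create three children corresponding to \textbf{(1)} deleting $x$, \textbf{(2)} deleting $y$, and \textbf{(3)} keeping both $x$ and $y$ but deleting the edge $\{x,y\}$, attaching a fresh pendant leaf $a_x$ to $x$ and a fresh pendant leaf $a_y$ to $y$, and marking $x,a_x,y,a_y$ as unremovable. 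After all $s_e$ edges have been processed, each leaf carries a graph $G'$, an unremovable set $U$, and a count $r$ of vertices deleted along the root-to-leaf path, and the leaf instance is the \textsc{Annotated Bounded Degree Deletion} instance $(G',U,d,k-r)$.

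Next I would establish the structural facts. Acyclicity of every leaf graph $G'$ holds because for each feedback edge we either delete one of its endpoints (killing every cycle through that endpoint) or delete the edge itself, while attaching pendant leaves never creates a cycle; hence $G'$ is a forest, on which the decision-step algorithm runs in $O(n^2)$ time (if $G'$ is a forest rather than a single tree, run the algorithm componentwise). The heart of the proof is the branching lemma: $(G,d,k)$ is a yes-instance of \textsc{Bounded Degree Deletion} if and only if at least one leaf instance $(G',U,d,k-r)$ is a yes-instance of \textsc{Annotated Bounded Degree Deletion}. For the ``if'' direction, given an annotated solution set $V'$ at some leaf, I would output $V'$ together with the $r$ path-deleted vertices; the key point is that in branching case (3) the discarded edge $\{x,y\}$ contributed exactly one to $\deg(x)$ and one to $\deg(y)$, and this is restored faithfully by $a_x,a_y$, which—being unremovable—are never deleted, so the bound ``$\le d$'' transfers back to $G$. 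For the ``only if'' direction, I would start from an optimal solution set $S$ of $(G,d,k)$ and, while processing $E_f$, always descend into the branch consistent with $S$: case (1) if $x\in S$, else case (2) if $y\in S$, else case (3); then $S$ with the path-deleted vertices removed is a solution set for the reached leaf instance, it avoids $U$ because the unremovable original vertices were precisely the endpoints of case-(3) edges, which by choice of branch lie outside $S$, and again the pendant leaves make the degrees in $G'$ agree with those in $G$.

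Finally I would assemble the running time: the search tree has $3^{s_e}$ leaves, producing a leaf instance from its parent costs $O(n)$ time, and solving the annotated acyclic instance at each leaf costs $O(n^2)$ time, so the total is $O(3^{s_e}\cdot n^2)$.

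The step I expect to be the main obstacle is the degree bookkeeping inside the branching lemma, specifically making the pendant-leaf simulation airtight when several feedback edges share an endpoint: one must verify that, after all case-(3) substitutions, $\deg_{G'}(v)$ equals $\deg_{G-R}(v)$ plus the number of pendant leaves installed at $v$, where $R$ is the set of path-deleted vertices, and that deleting exactly the non-pendant part of a solution set preserves this identity, so that ``$\le d$'' is maintained in both directions. The other delicate point—that the decision-step algorithm on forests returns an \emph{optimal} solution, so that the search tree cannot overlook a solution of size $\le k$—has essentially been settled by the lemma preceding this theorem, and I would simply invoke it.
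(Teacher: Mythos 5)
Your proposal is correct and follows essentially the same route as the paper: the identical three-way branching on each feedback edge (delete $x$, delete $y$, or replace the edge by unremovable pendant leaves $a_x,a_y$ to preserve degrees), reduction of each of the $3^{s_e}$ acyclic leaf instances to \textsc{Annotated Bounded Degree Deletion} solved in $O(n^2)$ time, and the same two-directional correctness argument for the branching. The paper's ``only if'' direction is phrased via the sets $M_I$ and $M_I^2$ of solution vertices incident to feedback edges, but this is just a bookkeeping variant of your ``descend into the branch consistent with $S$'' argument.
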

\begin{proof}
 The running time of the algorithm is clearly in $O(3^{s_e} \cdot n^2)$.
 Furthermore, it is easy to see that if there is a yes-instance of \textsc{Annotated Bounded Degree Deletion} in a search tree leaf, then the original \textsc{Bounded Degree Deletion} instance is a yes-instance, too.
 It remains to show that if the original instance is a yes-instance of \textsc{Bounded Degree Deletion}, then there is a yes-instance of \textsc{Annotated Bounded Degree Deletion} in at least one search tree leaf.
 Assume, that there is a solution set $M_d$ for the original instance $(G,d,k)$.
 Let $V_I$ denote the vertices that are incident to an edge from $E_f$.
 Let $M_I := V_I \cap M_d$ denote the solution set vertices that are incident to an edge from $E_f$.
 Let $M_I^2 := \{ x \mid x \in M_I \wedge \exists y: y \in M_I \wedge \{x,y\} \in E_f\}$ denote the solution set vertices that are connected to another solution set vertex by an edge from $E_f$.
 It is easy to verify that there is a search tree leaf $s$ with $U_s$ being the set of as ``unremovable'' marked vertices and $R_s$ is the set of removed vertices such that:
 \begin{itemize}
  \item For each vertex $x \in M_I^2$ either $x \in R_s$ and $y \notin U_s$ or $y \in R_s$ and $x \notin U_s$.
  \item For each vertex $x \in (M_I \setminus M_I^2)$ it holds that $x \in R_s$.
  \item Each vertex $u \in U_s$ is not part of the solution set $M_d$.
 \end{itemize}
 Clearly, the \textsc{Annotated Bounded Degree Deletion} instance in the search tree leaf $s$ is a yes-instance.
\end{proof}

\chapter{Conclusion and Outlook}
\label{Conclusion and outlook}
We investigated the parameterized complexity of three similar vertex deletion problems.
Since each problem is solvable in polynomial time when restricted to acyclic graphs, but $\NP$-hard in general, it was natural to study fixed-parameter tractability with respect to parameters that measure the ``degree of acyclicity''.
More precisely, we considered $s_e:=$``size of a feedback edge set'' respectively $s_a:=$``size of a feedback arc set'', $s_v:=$``size of a feedback vertex set'', and $t_w:=$``treewidth of the input graph''.

For \textsc{Minimum Indegree Deletion}, which is equivalent to constructive control by deleting candidates in Llull elections, we showed that it is W[2]-hard with respect to the parameters $s_a$ and $s_v$.
In  addition, it is at least in $\XP$ with respect to~$s_v$ (and~$s_a$).
Although it is W[2]-complete with respect to the parameter $k:=$``solution size'', we could show fixed-parameter tractability with respect the combined parameters~$(s_a,k)$ and~$(s_v,k)$.
The fixed-parameter algorithm in Section~\ref{Feedback vertex set size and solution size as combined parameter} solves \textsc{Minimum Indegree Deletion} in $O(s_v \cdot (k+1)^{s_v} \cdot n^2)$ time.
There should be room for improvements of this algorithm.

In contrast to \textsc{Minimum Indegree Deletion}, \textsc{Minimum Degree Deletion} is even fixed-parameter tractable with respect to each of the parameters that measure the ``degree of acyclicity''.
We showed that \textsc{Minimum Degree Deletion} has no polynomial kernel with respect to $s_v$ or $t_w$, unless the polynomial-time hierarchy collapses.
Moreover, there is also no polynomial kernel with respect to the combined parameter $(s_c^*,k)$ with $s_c^*:=$``size of a vertex cover that does not contain $w_c$''.
However, a vertex-linear kernel was found with respect to $s_e$.
In future research it would be desirable to develop a practicable algorithm with respect to $t_w$.
A first step towards this was done in Section~\ref{Size of a Feedback Vertex Set as parameter for Minimum Degree Deletion}, where a fixed-parameter algorithm with respect to $s_v^*:=$``size of a feedback vertex set that does not contain $w_c$'' using dynamic programming was developed.
Possibly, there are ways to adapt the ideas to the parameter $s_v$.

For \textsc{Bounded Degree Deletion} we showed a fixed-parameter algorithm using a depth-bounded search tree with running time $O(3^{s_e} \cdot n^2)$.
Improving this algorithm (for example by data reduction rules) or developing a fixed-parameter algorithm for the combined parameter $(s_v,k)$ are promising future tasks.
In addition, the parameterized complexity of \textsc{Bounded Degree Deletion} with respect to the parameters $s_v$ and $t_w$ still remains open.


\bibliography{literature}
\bibliographystyle{alpha}

\newpage
\chapter*{Selbstständigkeitserklärung}

Ich erkläre, dass ich die vorliegende Arbeit selbstständig und nur unter Verwendung der angegebenen Quellen und Hilfsmittel angefertigt habe.
\bigskip
\\
Jena, den 25. Mai 2010\\ \begin{flushright}Robert Bredereck\end{flushright}



\end{document}